\definecolor{refColor}{HTML}{0376E9}
\definecolor{figColor}{HTML}{E90303}
\definecolor{urlColor}{HTML}{0376E9}
\newcommand{\bra}[1]{\mathinner{\langle{#1}|}}
\newcommand{\ket}[1]{\mathinner{|{#1}\rangle}}
\renewcommand{\vec}[1]{\mathbf{#1}}
\renewcommand{\vec}[1]{\boldsymbol{#1}}
\newcommand{\com}[2]{\left[#1,#2\right]}
\renewcommand{\H}{\mathcal{H}}
\renewcommand{\L}{\mathcal{L}}
\newcommand{\C}{\mathcal{C}}
\renewcommand{\u}{\mathfrak{u}}
\newcommand{\GF}{\mathbb{F}_2}
\newcommand{\spn}[1]{\operatorname{span}\left\{\,#1\,\right\}}
\newcommand{\sub}[1]{{\text{\tiny\textnormal{#1}}}}
\newcommand{\vep}{\varepsilon}
\newcommand{\id}{\mathds{1}}
\newcommand{\NOR}{\mathtt{NOR}}
\newcommand{\eref}[1]{(\ref{#1})}
\newtheorem{theorem}{Theorem}
\newtheorem{lemma}{Lemma}
\definecolor{new}{HTML}{b533ff}
\tikzstyle{roundlabel}=[circle,inner sep=1.5pt,fill=tikzgrey,color=black,text=white,draw,font=\sffamily\footnotesize]
\tikzstyle{TlabelB}=[rectangle callout,rounded corners=0.03cm,inner sep=3.0pt,fill=tikzlight,color=tikzlight,text=tikzgrey,draw,font=\sffamily\footnotesize,callout absolute pointer={#1},at={#1},above=0.2cm,align=center]
\tikzstyle{smallbullet}=[circle,inner sep=0.5pt,fill=tikzorange,color=tikzorange,text=white,draw,font=\sffamily\small\bfseries]
\tikzstyle{labelline}=[rounded corners=0.5mm,black,line width=2pt,solid]
\tikzstyle{labeltext}=[color=white,fill=white,inner sep=0.5pt,text=tikzgrey,draw,font=\sffamily\footnotesize]
\tikzstyle{math}=[font=\Large,align=center]
\tikzstyle{figure}=[anchor=center]
\tikzstyle{label}=[circle,inner sep=0.07cm,fill=white,text=black,draw=white]
\tikzstyle{tlabel}=[inner sep=0.07cm,text=black]
\tikzstyle{wlabel}=[inner sep=0.07cm,text=black,anchor=west,font=\footnotesize]
\tikzstyle{figtext}=[anchor=west, rounded corners=0.0cm,inner sep=0.07cm,fill=white,color=white,text=black,draw,font=\sffamily\footnotesize]
\definecolor{sublatticecolor}{HTML}{CCCCCC}
\newcommand{\lbl}[1]{\bf\sffamily\small(#1)}
\newcommand{\gH}{\H_\sub{T}}
\definecolor{optimized}{HTML}{e4000d}
\definecolor{unoptimized}{HTML}{000000}
\newcommand{\Rb}{{r_\sub{B}}}
\newcommand{\fT}{{f_\sub{T}}}
\begin{document}

\title{Functional completeness of planar Rydberg blockade structures}

\author{Simon Stastny}
\author{Hans Peter Büchler}
\author{Nicolai Lang}
\email{nicolai.lang@itp3.uni-stuttgart.de}
\affiliation{%
    Institute for Theoretical Physics III 
    and Center for Integrated Quantum Science and Technology,\\
    University of Stuttgart, 70550 Stuttgart, Germany
}

\date{\today}


\begin{abstract}
    The construction of Hilbert spaces that are characterized by local
    constraints as the low-energy sectors of microscopic models is an
    important step towards the realization of a wide range of quantum phases
    with long-range entanglement and emergent gauge fields. Here we show
    that planar structures of trapped atoms in the Rydberg blockade regime
    are functionally complete: Their ground state manifold can realize any
    Hilbert space that can be characterized by local constraints in the
    product basis. We introduce a versatile framework, together with a set
    of provably minimal logic primitives as building blocks, to implement
    these constraints. As examples, we present lattice realizations of
    the string-net Hilbert spaces that underlie the surface code and the
    Fibonacci anyon model. We discuss possible optimizations of planar
    Rydberg structures to increase their geometrical robustness.
\end{abstract}

\maketitle

\section{Introduction}

Recent advances in the control of single atoms and their coherent
manipulation~\cite{Schlosser2001,Saffman_2010,Nogrette2014,Barredo_2016,Barredo2018}
are the technological foundation for applications such as quantum
simulation~\cite{Weimer2010,Georgescu2014,Gross2017,Altman2021},
high-precision metrology~\cite{Degen2017,Pezze2018} and, hopefully, future
quantum computers~\cite{Ladd2010,Henriet2020,Graham2022,Bluvstein2022}.
For any of these applications, suitable platforms must offer fine-grained
control over their degrees of freedom, dynamically tunable interactions,
and the possibility to decouple the environment.
Promising in this regard are arrays of individually trapped, neutral atoms
that can be manipulated by optical tweezers~\cite{Schlosser2001,Nogrette2014}
and excited into Rydberg states~\cite{Gallagher2006,Sibalic2018}. These
exhibit strong interactions which lead to the Rydberg blockade mechanism
where excited atoms prevent their neighbors within a tunable radius from
being excited~\cite{Jaksch2000,Tong2004,Singer2004,Gaetan2009,Urban2009}.
In this paper, we study on very general grounds the theoretical capabilities
of the Rydberg platform in the blockade regime and demonstrate its versatility
by constructing the gauge-invariant Hilbert spaces of two models with Abelian
and non-Abelian topological order.

Encouraged by the fast development and scalability of the Rydberg platform
(see e.g.~Refs.~\cite{Ebadi2021,Scholl2021,Schymik2022}), there has been
increased interest in identifying promising near-term applications for the
NISQ era~\footnote{%
NISQ = Noisy Intermediate-Scale Quantum technology, i.e., near-term
quantum technology without full-fledged quantum error correction, see
Ref.~\cite{Preskill2018}.}.
Among the many applications of two-dimensional arrays of Rydberg atoms,
the field of \emph{geometric programming} and the design of \emph{synthetic
quantum matter} have been identified as promising candidates to leverage
the capabilities of available and upcoming NISQ platforms.

The rationale of geometric programming is the solution of algorithmic problems
by encoding them into the geometry of the atomic array.
This direction of research is founded on the insight that due to the
Rydberg blockade, the ground states of these systems naturally map
to \emph{maximum independent sets} (MIS) on so called \emph{unit
disk graphs}~\cite{Pichler2018b}; finding MIS is a long-known
optimization problem in graph theory that has been shown to be
\textsf{NP}-hard~\cite{Clark1990}. This makes the computation of ground state
energies of Rydberg arrangements \textsf{NP}-hard as well~\cite{Pichler2018a},
but also opens the possibility to tackle a variety of other hard optimization
problems~\cite{Serret_2020,Wurtz2022,Dalyac2022,Nguyen2022,Lanthaler2023,Jeong2023}
by polynomial-time reductions to the MIS problem~\footnote{Of
course one should not expect an exponential speedup by these
mappings as it is widely believed~\cite{Bennett1997} that
$\text{\textsf{NP}}\nsubseteq\text{\textsf{BQP}}$.}.
First solutions of MIS instances on various graphs in two
and three dimensions have been demonstrated in experiments
recently~\cite{Byun2022,Kim2022,Ebadi2022}, and a quantitative comparison
of experimental solutions with classical algorithms suggest a superlinear
quantum speedup for some classes of graphs~\cite{Ebadi2022}.

A very different application of the Rydberg blockade mechanism is
the engineering of synthetic quantum matter on the single-atom level
\cite{Celi2020}.
The potential of this approach has been demonstrated recently by
Verresen~\textit{et~al.}~\cite{Verresen_2021} (related results were
reported by Samajdar~\textit{et~al.}~\cite{Samajdar_2021}), who proposed
the realization of topological spin liquids on delicately designed lattice
structures of atoms. In this scenario, the Rydberg blockade enforces
a dimer constraint (the local gauge constraint of an odd $\mathbb{Z}_2$
lattice gauge theory~\cite{Moessner2001}) which, in combination with quantum
fluctuations, can give rise to long-range entangled many-body states with
Abelian topological order. First experimental results were reported shortly
after~\cite{Semeghini2021}, accompanied by theoretical studies of the used
quasiadiabatic preparation schemes~\cite{Giudici2022,Sahay2022}.

This paper is written from and motivated by the synthetic quantum matter
perspective, but its results apply to geometric programming as well. Our
starting point is the question whether other local constraints (besides the
dimer constraint) can be realized on the Rydberg platform.
To find an answer, we first formalize the problem and then use this formulation
to derive our main result, namely that \emph{every} local constraint that
can be encoded by a Boolean function can be implemented in the ground state
manifold of a planar arrangement of atoms in the blockade regime. Crucial for
this result is the existence of a structure that implements the truth table
of a \texttt{NOR}-gate (``Not OR'') in its ground state manifold. While our
proof is constructive, it does typically not yield optimal (= small) solutions.
We therefore expand on our main result and compile a comprehensive list
of provably minimal structures that realize all important primitives of
Boolean logic. Together with a structure that facilitates the crossing of
two ``wires'' within the plane, these primitives provide a toolbox to build
structures that satisfy more complicated constraints.
As an example, we construct a system with a ground state manifold that
is locally isomorphic to the gauge-invariant Hilbert space of an even
$\mathbb{Z}_2$ lattice gauge theory, i.e., the charge-free sector of the toric
code~\cite{Kitaev2003}. With a similar construction, we tailor a pattern
of atoms with a ground state manifold isomorphic to the string-net Hilbert
space of the ``golden string-net model~\cite{Levin2005}''; a system that,
with added quantum fluctuations, could support non-Abelian Fibonacci anyons.
Having constructed all these structures, we briefly discuss possibilities
to numerically optimize their geometries to make them more robust against
geometric imperfections and the effects of long-range van der Waals
interactions.

\emph{Note added.} When finalizing this manuscript we became aware of related
results \cite{Nguyen2022,Lanthaler2023}. The authors of both publications
focus on optimization problems and find some of the primitives discussed
in this paper. (The ring-shaped \texttt{NOR}-gate and the crossing is
found by Nguyen \textit{et~al.} \cite{Nguyen2022} and the triangle shaped
\texttt{XNOR}-gate by Lanthaler \textit{et~al.} \cite{Lanthaler2023}). Both
papers follow the rationale of \emph{geometric programming}, so that their
motivation, approach and framework differ from ours.

\section{Rationale and Outline}
\label{sec:overview}

\begin{figure}[tb]
    \centering
    \includegraphics[width=1.0\linewidth]{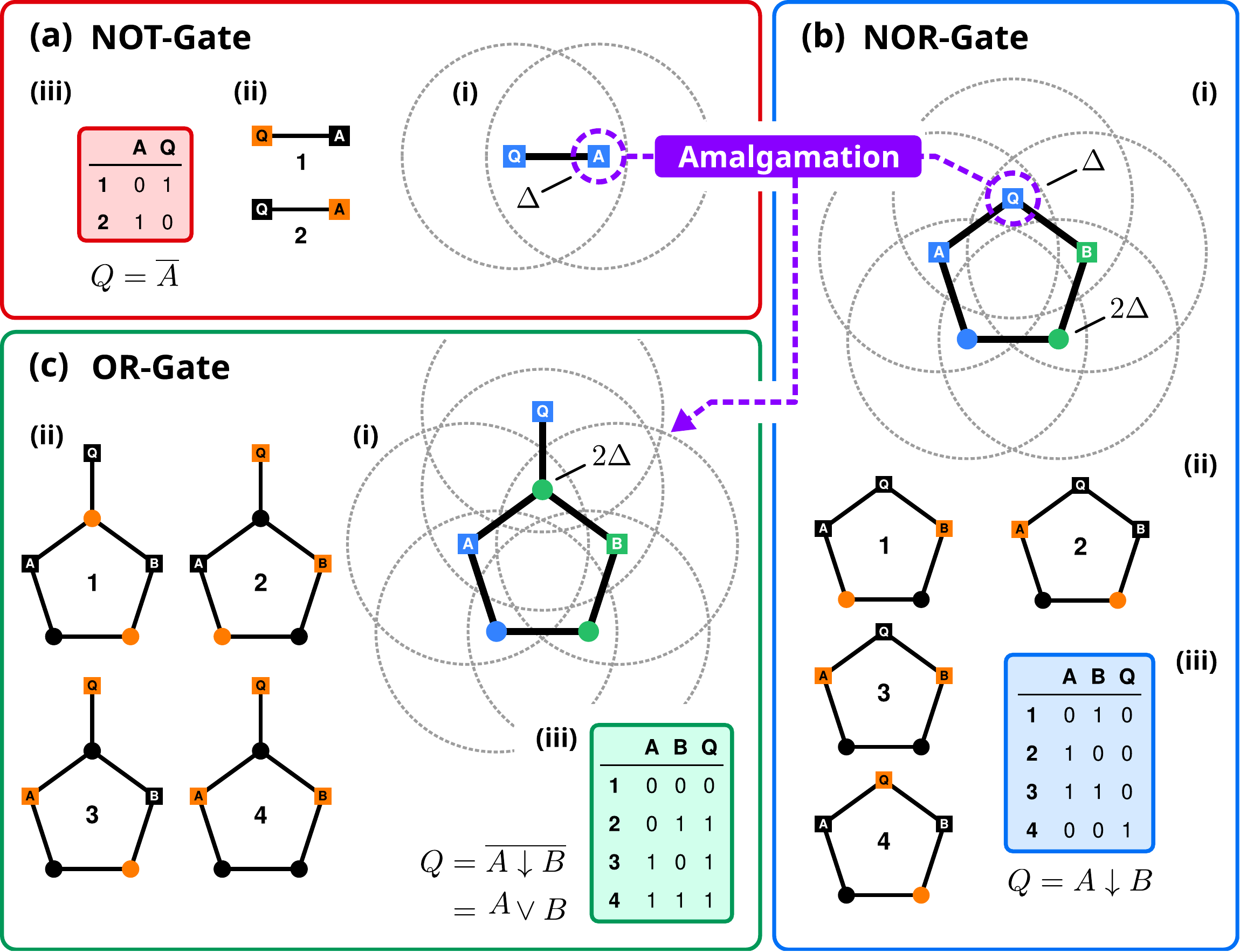}
    \caption{%
	\emph{Rationale.} 
    (a) Structure of two atoms (i) with local detunings $\Delta$ (blue
    vertices) that are in Rydberg blockade (gray circles); the blockade is
    indicated by a black edge connecting the atoms. The ground state manifold
    (ii) is given by patterns of excited atoms (orange) that minimize the
    energy; here it is two-fold degenerate. The two ground state configurations
    realize the truth table (iii) of a \texttt{NOT}-gate $Q=\overline{A}$.
    (b) Structure of five atoms (i) with local detunings $\Delta$ (blue)
    and $2\Delta$ (green) in a ring-like Rydberg blockade. The ground state
    manifold (ii) is four-fold degenerate. If one selects the three labeled
    atoms and identifies them with the columns of the table in (iii), the four
    ground state configurations realize the truth table of a \texttt{NOR}-gate
    $Q=A\downarrow B=\overline{A\vee B}$.
    (c) Joining the output atom of the \texttt{NOR}-gate with the input
    atom of the \texttt{NOT}-gate (and adding their detunings) yields a
    new structure that realizes the truth table of an \texttt{OR}-gate:
    $Q=\overline{A\downarrow B}=A\vee B$. This construction is called
    \emph{amalgamation}.
    }
    \label{fig:nutshell}
\end{figure}

Here we illustrate the rationale of the paper and provide a brief outline
of its main results without technical overhead. Readers interested in the
details can skip forward to \cref{sec:setting}. Readers only interested
in specific applications can read this section first and then skip to
\cref{sec:primitives} or \cref{sec:examples}.

In this paper, we consider two-dimensional arrangements of trapped atoms that
can either be in their electronic ground state or excited into a Rydberg state
(\emph{Rydberg structures}). We focus on systems without quantum fluctuations,
where the ground states are determined by local detunings and Rydberg blockade
interactions (\cref{sec:setting}). The detunings lower the energy for atoms
in the Rydberg state by an atom-specific amount, and the Rydberg blockade
interaction forbids atoms closer than a specific distance to be excited
simultaneously. The interplay of these two contributions singles out ground
states that are characterized by excitations patterns where no additional atom
can be excited without violating the Rydberg blockade, and where the sum of
the detunings of the excited atoms is maximal (so called \emph{maximum-weight
independent sets}). There can be different configurations that minimize the
energy, hence the ground state manifold is typically degenerate. In this
paper, we ask which ground state manifolds such structures can realize and,
conversely, how to tailor structures that realize a prescribed ground state
manifold (\cref{sec:definition}).

A simple example is given in \cref{fig:nutshell}a where the position of the
atoms is shown in (i); the two atoms are constrained by the Rydberg blockade
(gray circles) and cannot be excited simultaneously (indicated by the black
edge connecting them). The color of the atoms encodes their detuning; here
both atoms lower the energy of the system by $\Delta$ when excited into the
Rydberg state (blue nodes). In (ii) we show the two excitation patterns that
minimize the energy (orange nodes denote excited atoms). Note that the atoms
cannot be excited simultaneously due to the Rydberg blockade. If one lists
the ground state configurations in a table, where each column corresponds to
an atom and each row to a ground state configuration, we find the ``truth
table'' of a Boolean \texttt{NOT}-gate $Q=\overline{A}$. Here we interpret
one of the atoms as ``input'' (A) and the other as ``output'' (Q).

This concept generalizes to more complicated Boolean gates
(\cref{fig:nutshell}b): Consider the five atoms in a ring-like blockade
(i). Three of the atoms (blue) lower the energy by $\Delta$, two (green)
by $2\Delta$ when excited. By inspection one finds the four degenerate
ground state configurations in (ii). This is promising as truth tables of
Boolean gates that operate on two bits have four rows. However, they only
have three columns (two for the inputs of the gate and one for its output).
We therefore select three of the five atoms by assigning labels to them:
A and B play the role of the inputs and Q is the output. We call atomic
structures with designated input/output atoms \emph{Rydberg complexes}%
~\footnote{%
To prevent misconceptions, we stress that the term ``complex'' in ``Rydberg
complex'' refers to a \emph{spatial arrangement} of Rydberg atoms (with
additional data) and is \emph{not} related to the mathematical concept of an
\emph{independence complex}, i.e., the family of independent sets of a graph.
}
(\cref{subsec:complexes}). If we list the four ground state configurations
of these three atoms, we find the truth table of a \texttt{NOR}-gate
$Q=A\downarrow B=\overline{A\vee B}$ in (iii).  Note that the remaining two
atoms (we call them \emph{ancillas})---while not contributing independent
degrees of freedom--- are still necessary to realize this specific ground state
manifold. At this point things get interesting because it is a well-known fact
of Boolean algebra that the \texttt{NOR}-gate is \emph{functionally complete}
(just like the \texttt{NAND}-gate): Every Boolean function can be decomposed
into a circuit build from \texttt{NOR}-gates only.

To leverage this decomposition, we need a method to combine ``gate
complexes'' to form larger ``circuit complexes''; we call this procedure
\emph{amalgamation} (\cref{subsec:amalgamation}). A simple example is
shown in \cref{fig:nutshell}c where we attach the \texttt{NOT}-gate
from \cref{fig:nutshell}a to the output of the \texttt{NOR}-gate in
\cref{fig:nutshell}b (note that the detunings of the atoms that are
joined add up). Using the detunings and blockades in (i) yields the four
degenerate ground state configurations in (ii). When we label the inputs
of the \texttt{NOR}-gate again by A and B, and now focus on the output Q
of the attached \texttt{NOT}-gate, we find indeed the truth table of an
\texttt{OR}-gate $Q=\overline{A\downarrow B}=A\vee B$ in (iii). Thus we can
parallel the logical composition of gates by a geometrical combination of
atomic structures such that the relation between ground state configurations
and truth tables remains intact. In combination with the insight that every
Boolean circuit can be drawn in the plane without crossing lines (after
suitable augmentations), this allows us to show that the truth table of any
Boolean function can be realized as the ground state manifold of a suitably
designed atomic structure. This \emph{functional completeness} is our first
main result and motivates the title of the paper (\cref{sec:completeness}).

For instance, the \emph{existence} of a structure that realizes the truth table
of an \texttt{OR}-gate is a corollary of functional completeness.  However,
the \emph{specific} construction as the combination of a \texttt{NOR}-gate
and a \texttt{NOT}-gate in \cref{fig:nutshell}c raises the questions
whether this particular realization with six atoms is \emph{unique} and
whether it is \emph{minimal} (in the sense that the same truth table could
not be realized with fewer atoms). The answer to the first question is
negative: There are geometrically different structures that realize the
same truth table in their ground state manifold. The answer to the second
question is positive, though: We show that it is impossible to implement
this truth table with less than six atoms. Note that the functional
completeness implies the existences of structures for \emph{all} common
gates of Boolean logic (such as \texttt{AND}, \texttt{XOR}, etc.). We
take this as motivation to construct provably minimal structures for all
these primitives (\cref{sec:primitives,sec:crossing}). Together with the
procedure of amalgamation, these equip our versatile toolbox to engineer
more complicated structures.

Our second important contribution is an application of the
functional completeness as a tool to engineer synthetic quantum matter
(\cref{sec:examples}). Many interesting quantum phases in two dimension
are characterized by hidden patterns of long-range entanglement, known
as topological order. These patterns can give rise to anyonic excitations
which make such systems potential substrates for quantum memories and even
quantum computers. A large class of entanglement patterns can be understood
as condensates of extended objects (like strings). A crucial first step
for the realization of these phases is therefore the preparation of Hilbert
spaces spanned by states of such extended objects. However, in experiments,
we typically start from Hilbert spaces with a local tensor product structure
(for example, an array of two-level atoms). Our only hope is to make the
extended objects emerge due to interactions in the low-energy sector of a
suitably designed physical system. This often boils down to enforce local
\emph{gauge symmetries} which single out states that can be interpreted in
terms of extended objects. Such local constraints can be reformulated as
Boolean functions that must be satisfied by the states of the local degrees
of freedom of the underlying system. For any constraint of this form,
our functional completeness result ensures the existence of a structure of
atoms, interacting via the Rydberg blockade mechanism, that realizes this
constraint in its ground state space. It is then just a matter of copying
and joining these structures in a translational invariant way to tessellate
the plane. The ground state manifolds of such tessellations can therefore
implement a large class of non-trivial Hilbert spaces on which condensation
(driven by quantum fluctuations) might lead to topologically ordered many-body
quantum phases. Using our toolbox developed in the first part of the paper,
we demonstrate this construction explicitly for the Abelian toric code phase
(\cref{subsec:toric}) and the non-Abelian, computationally universal Fibonacci
anyon model (\cref{subsec:fibonacci}).

The truth tables realized by the ground states of all atomic structures
presented in this paper depend on the positions of the atoms. (Because these
positions define which pairs are in blockade and which atoms can be excited
simultaneously.) However, the \emph{exact} placement is often ambiguous.
For example, consider the structure in \cref{fig:nutshell}a (i) which realizes
the \texttt{NOT}-gate. It is clear that the blockade constraint (black edge)
does not change if the atoms are slightly shifted, as long as the blockade
radii (gray circles) encompass both atoms. We refer to the set of atom
positions as the \emph{geometry} of a structure and argue that ``robust''
geometries should avoid distances between atoms that are close to the critical
blockade distance. For the complexes in \cref{fig:nutshell}, this translates
into the geometric objective to maximize the distances between nodes and gray
circles. We formalize this notion by assigning a number to geometries that
quantifies their ``robustness'' (\cref{subsec:optimization_preliminaries})
and numerically construct optimized geometries that maximize this number
(\cref{subsec:optimization_results}).

We conclude the paper with an outline of open questions, directions for further
research (\cref{sec:outlook}), and a brief summary (\cref{sec:summary}).

\section{Physical setting}
\label{sec:setting}

We consider planar arrangements of trapped atoms with
repulsive van der Waals interactions when excited into the Rydberg
state~\cite{Lukin_2001,Saffman_2010}. Every atom is assigned an index $i\in
V=\{1\dots N\}$, placed at position $\vec r_i\in\mathbb{R}^2$, and described
by a two-level system $\ket{n}_i$ where $n=0$ corresponds to the electronic
ground state and $n=1$ the excited Rydberg state.

The quantum dynamics of such systems is achieved by coupling the
electronic ground state to the Rydberg state by external laser fields
with Rabi frequency $\Omega_i$ and detuning $\Delta_{i}$ for each
atom~\cite{Schauss2015,Labuhn2016,Bernien_2017}. Here we are mainly interested
in the regime $\Omega_i\rightarrow 0$ where the Hamiltonian reduces to
\begin{align}
    H[\C]=-\sum_i\,\Delta_i n_i
    +\sum_{i<j}\,U(|\vec r_i-\vec r_j|)\,n_in_j\,.
    \label{eq:H}
\end{align}
Note that we assume the detunings $\Delta_i$ to be site
dependent~\cite{Labuhn_2014,Omran2019}.
This Hamiltonian acts on the full Hilbert space $\H=(\mathbb{C}^2)^{\otimes
N}$ with the representation $n_i=\ket{1}\bra{1}_i$. The configuration of
the system is completely specified by $\C\equiv(\vec r_i,\Delta_i)_{i\in
V}$ to which we refer as \emph{(Rydberg) structure}; the position data
$G_\C\equiv(\vec r_i)_{i\in V}$ alone is the \emph{geometry} of the structure
$\C$ (\cref{fig:rationale}).
For atoms in the Rydberg state, the interaction potential in \cref{eq:H}
is $U_\sub{vdW}(r)=C_6\,r^{-6}$ with $C_6>0$ the coupling strength of the
van der Waals interaction; we refer to $H[\C]$ with $U=U_\sub{vdW}$ as the
\emph{van der Waals (vdW) model}. However, in many situations a simplified
model $U=U_\infty$ with $U_\infty(r\geq \Rb)=0$ and $U_\infty(r<\Rb)=\infty$
with \emph{blockade radius} $\Rb$ is a reasonable approximation for the
low-energy physics of \cref{eq:H}; we refer to $H[\C]$ with $U=U_\infty$
as the \emph{PXP model}~\cite{Lesanovsky2011,Verresen_2021}. In this paper,
we use the PXP model unless stated otherwise. We discuss valid choices for
the blockade radius $\Rb$ in \cref{subsec:optimization_preliminaries} where
we optimize the geometry of structures to limit the effects of residual van
der Waals interactions.

In the PXP model, the effect of the van der Waals interactions is approximated
by a kinematic constraint that is completely encoded by a \emph{blockade
graph} $B=(V,E)$, where an edge $e=(i,j)\in E$ between atoms $i,j\in V$
indicates that they are in blockade, i.e., their distance is smaller than
the blockade radius $\Rb$. An abstract graph that can be realized in this
way is called a \emph{unit disk graph} (not every graph has this property);
conversely, a geometry $G_\C$ that realizes a prescribed graph as its blockade
graph is a \emph{unit disk embedding} of this graph (the ``unit'' here is
the blockade radius $\Rb$). Throughout the paper, the blockade graph of a
structure will be drawn by black edges connecting atoms that are in blockade.

\begin{figure}[tb]
    \centering
    \includegraphics[width=0.9\linewidth]{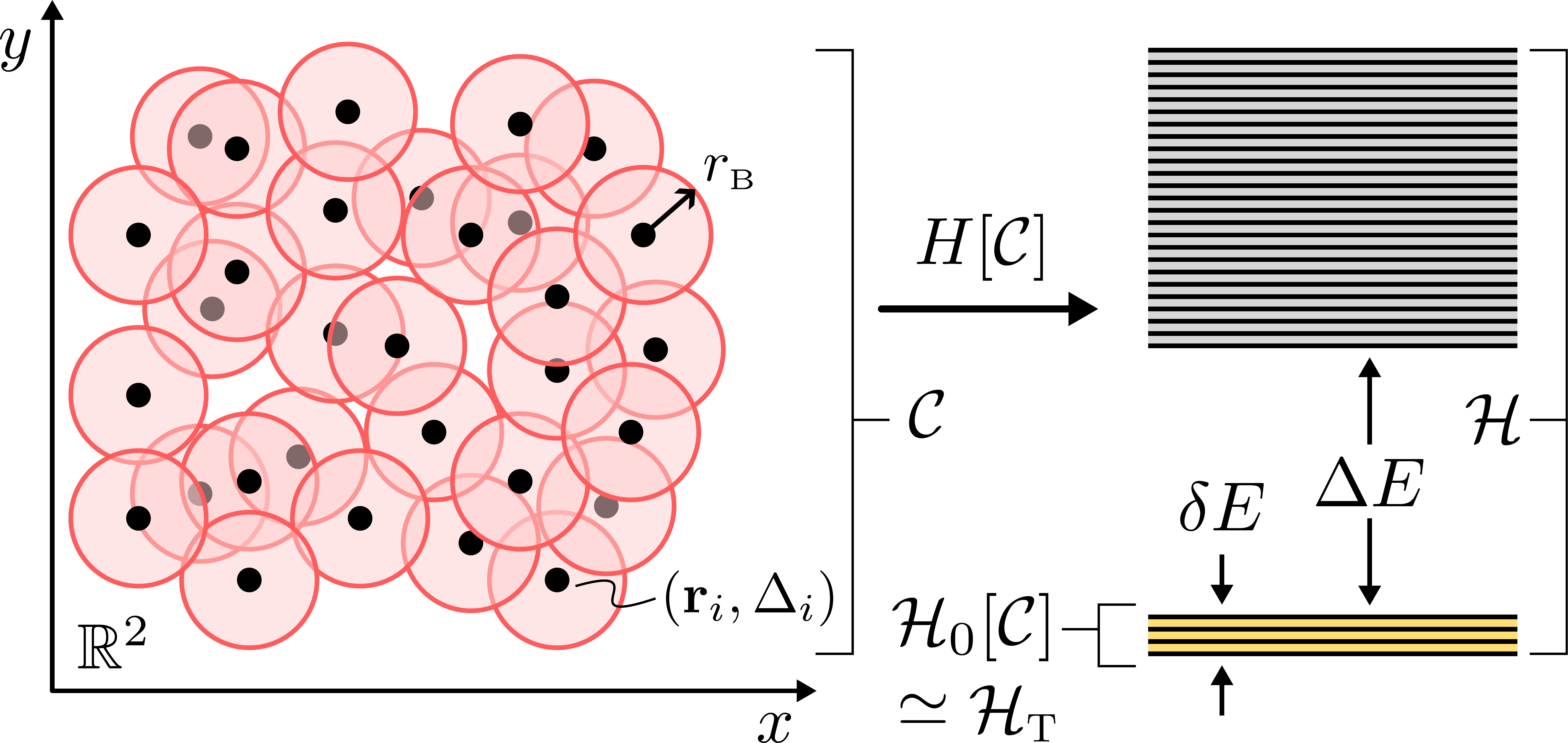}
    \caption{%
	\emph{Setting \& Objective.} 
    A two-dimensional structure $\C=(\vec
	r_i,\Delta_i)_{i\in V}$ of atoms $i\in V$ with position $\vec r_i$
	and detuning $\Delta_i$ is governed by the Hamiltonian $H[\C]$
	that describes the Rydberg blockade interaction with blockade
	radius $\Rb$. The Hamiltonian gives rise to a low-energy eigenspace
	$\H_0[\C]<\H$ of width $\delta E$, separated from the excited states
	by a gap $\Delta E$. The objective of this paper is the construction
	of a structure $\C$ from a given target Hilbert space $\gH$ such
	that $\H_0[\C]\simeq\gH$.
    }
    \label{fig:rationale}
\end{figure}

\section{Definition of the Problem}
\label{sec:definition}

The primary goal of this paper is to find structures $\C$ such that there
is a well-separated low-energy eigenspace $\H_0[\C]$ of $H[\C]$ where
$\H_0[\C]$ satisfies certain prescribed properties that we describe in
detail below. We quantify the separation of $\H_0[\C]$ by its spectral
width $\delta E$ and its gap $\Delta E$ to the rest of the spectrum
(\cref{fig:rationale}). Note that the experimental prerequisites
for the construction of arbitrary structures $\C$ are already in
place~\cite{Labuhn_2014,Barredo_2016,Omran2019,Browaeys_2020}.
If one would switch on a weak drive $\delta E<\Omega_i\ll\Delta E$, this
would induce quantum fluctuations between the states of the Hilbert space
$\H_0[\C]$, potentially giving rise to many-body states with interesting
properties. In this paper, we do not study such quantum effects but focus
on the implementation of the subspace $\H_0[\C]$.
We specify the eigenspace to construct in terms of a \emph{target Hilbert
space} $\gH$:
\begin{align}
    \H_0[\C]\stackrel{!}{\simeq}\H_\sub{T}\,.
\end{align}
Informally speaking, our goal is to ``solve'' this equation for structures
$\C$ for given $\H_\sub{T}$. To make this possible, the target Hilbert space
$\H_\sub{T}$ must be specifiable in a form that we define in the remainder
of this section.

\subparagraph{Formal languages.}
Throughout the paper we make use of the notion of (formal)
languages~\cite{Davis1994} on the binary alphabet $\GF=\{0,1\}$. A \emph{word}
$\vec x\equiv (x_1x_2\dots x_n)\equiv x_1x_2\dots x_n \in \GF^*$ is a
finite string of \emph{letters} $x_i\in\GF$ (the set of all such finite
strings is denoted $\GF^*$). A \emph{(formal) language} $L$ is then
simply a collection of words: $L\subseteq \GF^*$. Here we only consider
\emph{uniform} languages with words that have all the same length. For
example, $L_\texttt{CPY}:=\{000,111\}\subset\GF^*$ is a uniform language of
words with length $n=3$, $\vec x=(111)$ is a word in $L_\texttt{CPY}$ and
$x_1=1$ is the first letter of $\vec x$. The words $\vec y=(011)\in\GF^*$
and $\vec z=(0000)\in\GF^*$ are not in this language: $\vec y,\vec z\notin
L_\texttt{CPY}$. The subscript ``\texttt{CPY}'' stands for ``copy'' and
hints at the role this language will play later.

Other examples are the class of languages generated by the truth tables of
Boolean functions. Let $w:\GF^{n-1}\to\GF$ be an arbitrary Boolean function
of $n-1$ variables; then
\begin{align}
    L[w]:=\left\{x_1\dots x_{n-1}y\,|\,y=w(x_1,\dots,x_{n-1})\right\}\subset\GF^*
    \label{eq:Lw}
\end{align}
is the language generated from the rows of the truth table of $w$, where
the first $n-1$ letters of each word correspond to the input $\vec x$ and
the last letter encodes the output $w(\vec x)$. A language of this class
always has $2^{n-1}$ words of uniform length $n$. Note that the ``copy''
language $L_\texttt{CPY}$ is \emph{not} of the form \cref{eq:Lw}.

\begin{figure}[tb]
    \centering
    \includegraphics[width=0.95\linewidth]{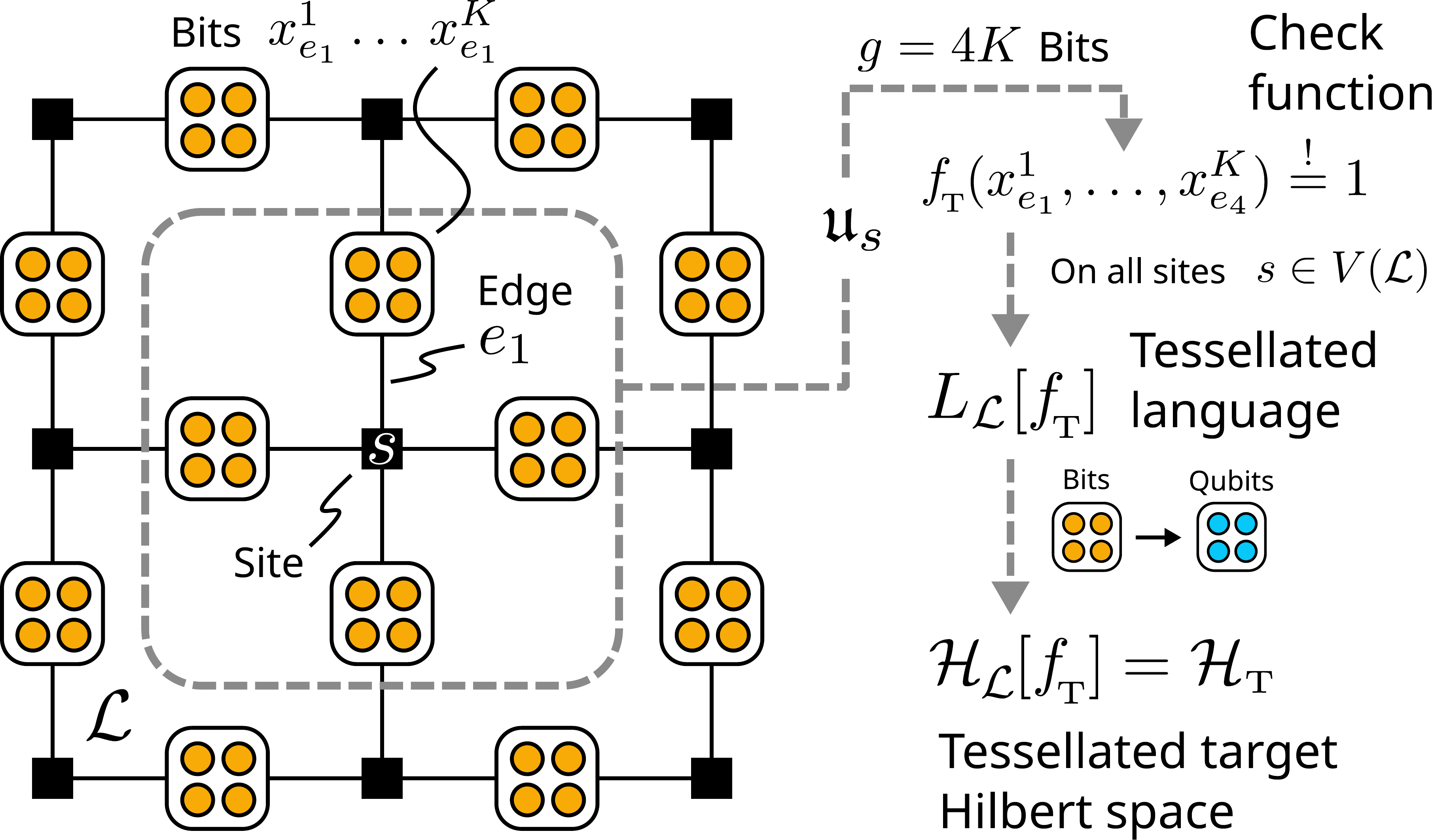}
    \caption{%
    \emph{Tessellated language \& target Hilbert space.} A tessellated
    target Hilbert space $\H_\sub{T}$ is a subspace of the full Hilbert
    space of $K$ qubits placed on each edge of a square lattice $\L$; it
    is spanned by product states $\ket{\vec x}$ of bit patterns $\vec x \in
    L_\L[\fT]$. The tessellated language $L_\L[\fT]$ comprises
    all bit patterns $\vec x\in \GF^*$ that locally satisfy the Boolean check
    function $\fT\,:\,\GF^g\to\GF$. The $g=4K$ arguments of the check
    function on each site $s$ are singled out by the bit-projector $\u_s$.
    }
    \label{fig:bits}
\end{figure}

Another special class is given by \emph{tessellated languages} on lattices. In
the following, we introduce the concept exemplarily for a finite square
lattice $\L$ with periodic boundaries; the generalization to other lattices
and boundary conditions is straightforward.
Start by associating $K$ classical bits to every edge $e\in E(\L)$ of the
lattice (\cref{fig:bits}). A bit configuration of the system $\vec x\in
\mathbb{X}_\L=\GF^{K |E(\L)|}\subset\GF^*$ assigns every bit a Boolean
value $x_e^i$ ($i=1\dots K$). We focus on the family of uniform languages
$L\subseteq\mathbb X_\L$ that can be characterized by a Boolean function that
is local in the following sense: For a site $s\in V(\L)$ of the square lattice,
let the \emph{bit-projector} $\u_s(\vec x)=(x_{e_1}^1,\dots,x_{e_4}^K)$
single out the (ordered) set of $g=4K$ bits on the four edges $e_i$ emanating
from $s$. Let $f\,:\,\GF^g\to\GF$ be an arbitrary Boolean function of $g$
arguments, henceforth referred to as \emph{check function}. The tessellated
language of bit patterns on $\L$ generated by $f$ is then defined as
\begin{align}
    L_\L[f]:=\left\{\vec x\in\mathbb{X}_\L\,|\,\forall s\in V(\L)\,:\,f(\u_s(\vec x))=1\right\}\,.
\end{align}
In words: $L_\L[f]$ is the set of bit patterns on the lattice $\L$ that
locally satisfy the constraints imposed by $f$.

\subparagraph{Target Hilbert spaces.} 
To any uniform language $L\subseteq\GF^n$ we can naturally associate the
linear subspace of states on $n$ qubits (or spin-$\nicefrac{1}{2}$)
\begin{align}
    \H(L):=\spn{\ket{\vec x}\,|\,\vec x\in L}\subseteq (\mathbb{C}^2)^{\otimes n}\,.
    \label{eq:HL}
\end{align}
For example, $\H(L_\texttt{CPY})=\spn{\ket{000},\ket{111}}$ is the
two-dimensional subspace on three qubits spanned by product states with
configurations in $L_\texttt{CPY}=\{000,111\}$. By contrast, the Hilbert space
$\H'=\spn{(\ket{000}+\ket{111})/\sqrt{2}}$ is not of the form \eref{eq:HL}.

We require the target Hilbert space $\H_\sub{T}$, that we aim to realize as
ground state manifold $\H_0[\C]$, to be specified by a language $L_\sub{T}$
according to \cref{eq:HL}:
\begin{align}
    \H_\sub{T}=\H(L_\sub{T})\,.
\end{align}
We are particularly interested in the special class of \emph{tessellated}
target Hilbert spaces given in terms of tessellated languages that are
generated by a check function (\cref{fig:bits}):
\begin{align}
    \H_\sub{T}=\H_\L[\fT]:=\H(L_\L[\fT])\,.
\end{align}
Recall that these languages come equipped with a spatial structure (in the
sense that the \emph{bits} are located on the edges of a lattice $\L$). This
spatial structure is inherited by the Hilbert space $\H_\L[\fT]$ viewed as
state space of a system where $K$ \emph{qubits} are placed on every edge
of $\L$.

For example, the Hilbert space $\H_{\mathbb{Z}_2}$ of the even
$\mathbb{Z}_2$ lattice gauge theory is a particular subspace of a
Hilbert space that describes a system of qubits on the edges of a
square lattice (i.e.,~$K=1$ and $g=4$). $\H_{\mathbb{Z}_2}$ is spanned
by the product states of patterns of qubits in the state $\ket{1}$
that form closed loops~\cite{Kogut1979}. $\H_{\mathbb{Z}_2}$ is an
admissible tessellated target Hilbert space because we can realize
$\H_{\mathbb{Z}_2}=\H_\L[f_{\mathbb{Z}_2}]$ with the check function
\begin{align}
    f_{\mathbb{Z}_2}(x_1,x_2,x_3,x_4)=1\oplus x_1\oplus x_2\oplus x_3\oplus
    x_4 \label{eq:f1}
\end{align}
where $\oplus$ denotes modulo-2 addition (Exclusive-OR or \texttt{XOR});
the bit-projector $\mathfrak{u}_s(\vec x)$ simply singles out the four bits
on edges emanating from site $s$:
\begin{align}
    \mathfrak{u}_s\left(
    \includegraphics[width=1.7cm,valign=c]{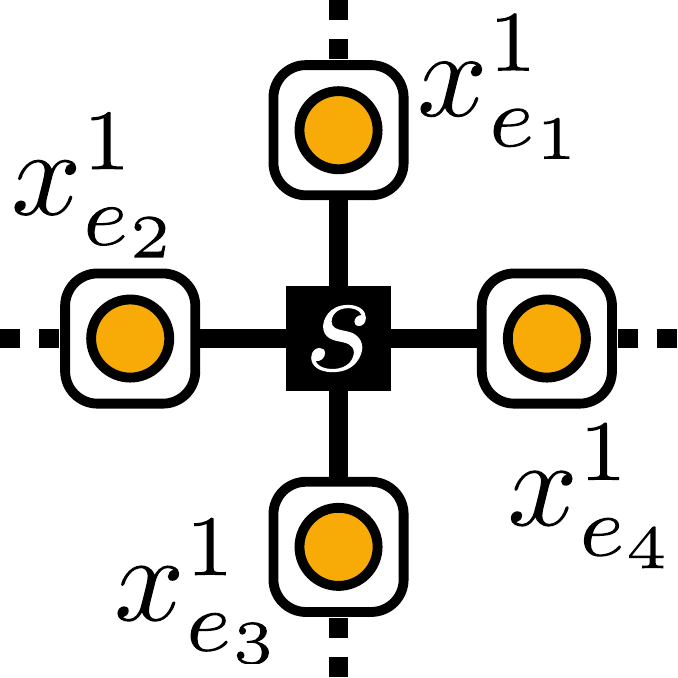}
    \right)
    =(x_{e_1}^1,x_{e_2}^1,x_{e_3}^1,x_{e_4}^1)\,.
\end{align}
Physically, \cref{eq:f1} enforces Gauss's law on a charge-free background
by forbidding strings of qubits in state $\ket{1}$ to end on a site.
Further examples for tessellated target Hilbert spaces are the more general
``string-net'' Hilbert spaces that can describe a large variety of topological
orders and deconfined gauge theories~\cite{Levin2005}.

\section{Rydberg complexes}
\label{sec:complexes}

Before we can tackle our main goal, namely the construction of tessellated
Rydberg structures $\C$ with $\H_0[\C]\simeq\gH=\H_\L[\fT]$ for a given
check function $\fT$, we first need to specify the notion of a finite Rydberg
\emph{complex} as a preliminary step. Specific examples for Rydberg complexes
can be found throughout the remainder of the paper.

\subsection{From structures to complexes}
\label{subsec:complexes}

Consider the language $L_\texttt{CPY}=\{000,111\}$ and let
$\H_\texttt{CPY}=\H(L_\texttt{CPY})=\spn{\ket{000},\ket{111}}$ be our target
Hilbert space. Our goal is to realize $\H_\texttt{CPY}$ as the ground state
manifold $\H_0[\C_\texttt{CPY}]$ of a structure $\C_\texttt{CPY}$ of $n=3$
atoms. This, however, is impossible: Since $\ket{111}\in \H_\texttt{CPY}$,
none of the three atoms can be in blockade with each other. Consequently,
$\H_0[\C_\texttt{CPY}]$ cannot contain only the states $\ket{000}$ and
$\ket{111}$ (\cref{app:cpy}). This problem is not specific to the language
$L_\texttt{CPY}$ but shared by many (though not all) languages. The solution
is to consider \emph{larger} structures of $N\geq n$ atoms and to identify
the letters of words with a \emph{subset} of $n$ distinguished atoms (we
call them \emph{ports}); the remaining $N-n$ atoms play then the role of
\emph{ancillas}. A structure together with a distinguished set of ports will
be referred to as a \emph{(Rydberg) complex}.

Let us formalize this notion. Consider a structure $\C$ of $N$ atoms and
a language $L\subseteq\GF^n$ of words of uniform length $n\leq N$. Let
$\mathfrak{L}=\{\text{A},\text{B},\dots\}$ denote a set of $n$ labels where
each label is associated with a fixed letter position of words in $L$. (If one
prints all words of $L$ as rows of a table, the labels correspond to the column
headers.) Let $\ell:\mathfrak{L}\rightarrow V$ be an injective label function
that assigns a label to a subset of $n$ atoms (the \emph{ports}); the $N-n$
atoms without labels are the \emph{ancillas}. We refer to the structure $\C$
together with the labeling $\ell$ as a \emph{(Rydberg) $L$-complex} $\C_L$
if the states that span $\H_0[\C]$ can be identified by the configurations
of the ports alone:
\begin{align}
    \H_0[\C_L]\equiv\H_0[\C]
    =\spn{\ket{\vec x,a(\vec x)}\in\H\,|\,\vec x\in L}\,.
    \label{eq:lcomplex}
\end{align}
In $\ket{\vec x,a(\vec x)}$, the state of ports is given by the first $n$
bits $\vec x$ (in some fixed order) and the state of ancillas by a $N-n$
bit-valued function $a:L\to\GF^{N-n}$. The ground state space $\H_0[\C_L]$
will be referred to as an \emph{$L$-manifold}. An important aspect of
this definition is that the ancillas do \emph{not} introduce additional
low-energy degrees of freedom; they are only needed to unleash the full
potential of the blockade interactions. In this sense, we say that a complex
$\C_\sub{T}\equiv \C_{L_\sub{T}}$ \emph{realizes} a target Hilbert space
$\H_\sub{T}=\H(L_\sub{T})$ and write
\begin{align}
    (\mathbb{C}^2)^{\otimes N}\supseteq
    \H_0[\C_\sub{T}]\simeq \H_\sub{T}=\H(L_\sub{T})
    \subseteq(\mathbb{C}^2)^{\otimes n}
    \label{eq:iso}
\end{align}
with the isomorphism $\simeq$ given by $\ket{\vec x,a(\vec
x)}\leftrightarrow\ket{\vec x}$. If we say that a complex realizes a
\emph{language} $L$, we mean that it realizes the target Hilbert space
$\H_\sub{T}=\H(L)$ defined by this language.

As an example, consider again the ``copy'' language
$L_\text{\texttt{CPY}}=\{000,111\}$ with $n=3$; the ground state manifold
of a $L_\texttt{CPY}$-complex $\C_\texttt{CPY}\equiv\C_{L_\texttt{CPY}}$
must be two-dimensional (since $|L_\texttt{CPY}|=2$) and characterized by the
property that three distinguished atoms (the ones assigned labels by $\ell$)
are always forced to be in the same state:
\begin{align}
    \H_0[\C_\texttt{CPY}]=\spn{\ket{000,a(000)},\ket{111,a(111)}}\,.
\end{align}
Such a complex will be one of our primitives to implement check functions
for tessellated target Hilbert spaces. We will discuss a specific realization
$\C_\texttt{CPY}$ that requires a single ancilla in \cref{sec:completeness};
that is, with $N=n=3$ atoms the target Hilbert space $\H_\texttt{CPY}$
cannot be realized, whereas with $N=4$ it can.

As another example, consider the logical \texttt{XOR}-gate
$w_\texttt{XOR}(x_1,x_2)=x_1\oplus x_2$ which may be needed as
a primitive for a check function like \cref{eq:f1}. We can ask
for a complex $\C_\texttt{XOR}$ that realizes the target Hilbert
space $\H_\texttt{XOR}=\H(L_\texttt{XOR})$ given by the language
$L_\texttt{XOR}\equiv L[w_\texttt{XOR}]=\{000,011,101,110\}$ that is generated
by this Boolean gate. The ground state manifold of such a complex must be
spanned by four states,
\begin{align}
    \H_0[\C_\texttt{XOR}]=\spn{
    \begin{matrix}
    \ket{000,a(000)},
    &\ket{011,a(011)},\\
    \ket{101,a(101)},
    &\ket{110,a(110)}\phantom{,}
    \end{matrix}
    }
\end{align}
where the configurations of potential ancillas are determined by the
configurations of the three ports. We will introduce a specific realization
$\C_\texttt{XOR}$ in \cref{sec:primitives}; it requires $N=7$ atoms of which
four are ancillas, and we show that this is indeed the smallest complex that
can realize the language of a \texttt{XOR}-gate.

Since $L_\texttt{XOR}=L[w_\texttt{XOR}]$ is generated from a Boolean gate,
we refer to complexes that realize a language of this form as \emph{gates},
too. Furthermore, we denote the atoms that map to the input bits of the gate
as \emph{input ports}, and the atom that corresponds to the output bit as the
\emph{output port}. We also extend this nomenclature to Boolean functions $w$
on more than two inputs. Let us stress that these terms are only inspired
from the usual role played by such functions as parts of Boolean circuits. In
the present context, there is \emph{no} time evolution or dynamics involved
(there is no information ``flowing into'' the input ports, although it might
be sometimes helpful to use this picture).

The construction of an $L$-complex for a given language $L$ with word length
$n$ can be split into two steps: First, one has to find a \emph{structure}
$\C$ on at least $n$ atoms with an $|L|$-fold degenerate ground state
manifold. Then, one has to identify a labeling $\ell$ of $n$ atoms such that
their states in the ground state manifold map one-to-one to words in $L$. The
structure $\C$ together with the labeling then yields an \emph{$L$-complex}.
Note that the same structure can be interpreted as different complexes for
different languages by choosing different label functions. Furthermore,
not every structure with $|L|$-fold degenerate ground state manifold allows
for a valid labeling that realizes $L$. Hence the construction is a quite
non-trivial task in general. This makes a reductionist approach seem most
promising, where one starts with a finite set of small ``primitive'' complexes
and constructs larger complexes by ``gluing'' them together.

\subsection{Amalgamation}
\label{subsec:amalgamation}

The process of combining two complexes by joining (some of) their ports is
referred to \emph{amalgamation}. To define the process formally, we first
need a new concept to combine two languages.

Consider two uniform languages $L_1$ and $L_2$ of words of length $n_1$
and $n_2$, respectively.
Let $\gamma\subseteq \{(p_1,p_2)\,|\,p_i\in\{1,\dots,n_i\}\}$ be a set of disjoint~\footnote{%
Here, \emph{disjoint} means that $(x,y)\neq(x',y')$ implies $x\neq x'$
\emph{and} $y\neq y'$; thus $\gamma$ can be interpreted as a \emph{partial
bijection} between character positions of the two languages $L_1$ and $L_2$.
}
pairs of letter positions and set $\gamma_i:=\{p_i\,|\,p\in\gamma\}$.
For a word $\vec x\in L_i$, let $\vec x^{\gamma_i}$ denote the
word with all letters at positions in $\gamma_i$ deleted. Then, the
\emph{$\gamma$-intersection of $L_1$ and $L_2$} is defined as
\begin{align}
    L_1\underline{\stackrel{\gamma}{\cap}} L_2:=
    \left\{
        \vec x\,\vec y^{\gamma_2}\,|\,
        \vec x\in L_1,\vec y\in L_2,
        \forall_{(a,b)\in\gamma}\,x_a=y_b
    \right\}
    \nonumber
\end{align}
which is a language of words of length
$n_1+n_2-|\gamma|$. $L_1\underline{\stackrel{\gamma}{\cap}}L_2$ is the set
of concatenations of words from $L_1$ and $L_2$ where the letters at the
positions indicated by pairs in $\gamma$ coincide, and where the second copy
of these letters has been deleted. Analogously, we define the \emph{reduced
$\gamma$-intersection} as
\begin{align}
    L_1\stackrel{\gamma}{\cap} L_2:=
    \left\{
        \vec x^{\gamma_1}\,\vec y^{\gamma_2}\,|\,
        \vec x\in L_1,\vec y\in L_2,
        \forall_{(a,b)\in\gamma}\,x_a=y_b
    \right\},
    \nonumber
\end{align}
only that now \emph{both} copies of identified letters are deleted; hence
this is a language of words with length $n_1+n_2-2|\gamma|$.

As an example, consider again the \texttt{XOR}-language
$L_\texttt{XOR}=\{000,011,101,110\}$ and the \texttt{CPY}-language
$L_\texttt{CPY}=\{000,111\}$. We would like to copy the output of the
\texttt{XOR}-gate. To do this, we intersect the output bit (letter 3)
of the \texttt{XOR}-language with one of the bits (say letter 1) of the
\texttt{CPY}-language: $\gamma=\{(3,1)\}$. The $\gamma$-intersection is the
new language
\begin{align}
    L_\texttt{XOR}\underline{\stackrel{\gamma}{\cap}} L_\texttt{CPY}
    &=
    \left\{
        00\underline{0}00,01\underline{1}11,10\underline{1}11,11\underline{0}00
    \right\}
    \label{eq:L1}
\end{align}
with words of length $3+3-1=5$. The underscores indicate the letters that
derive from words of both languages. If one drops these letters as well
(by using the reduced $\gamma$-intersection), the language describes a
\texttt{XOR}-gate with \emph{fan-out} of two:
\begin{align}
    L_\texttt{XOR}\stackrel{\gamma}{\cap} L_\texttt{CPY}
    &=
    \left\{
        0000,0111,1011,1100
    \right\}\,.
    \label{eq:L1b}
\end{align}

The above definitions on the level of languages are useful because they
are paralleled by a combination of complexes called \emph{amalgamation}:
Consider two complexes $\C_{L_1}$ and $\C_{L_2}$ that realize the languages
$L_1$ and $L_2$ with $N_1$ and $N_2$ atoms, respectively. Fix a set of pairs
of ports $\gamma$ such that $L'=L_1\underline{\stackrel{\gamma}{\cap}}
L_2\neq\emptyset$, and then combine the two complexes by identifying the
atoms in $\gamma$:
\begin{align}
    \C_{L'}=\C_{L_1}\stackrel{\gamma}{\otimes}\C_{L_2}
    &:=
    \includegraphics[width=4cm,valign=c]{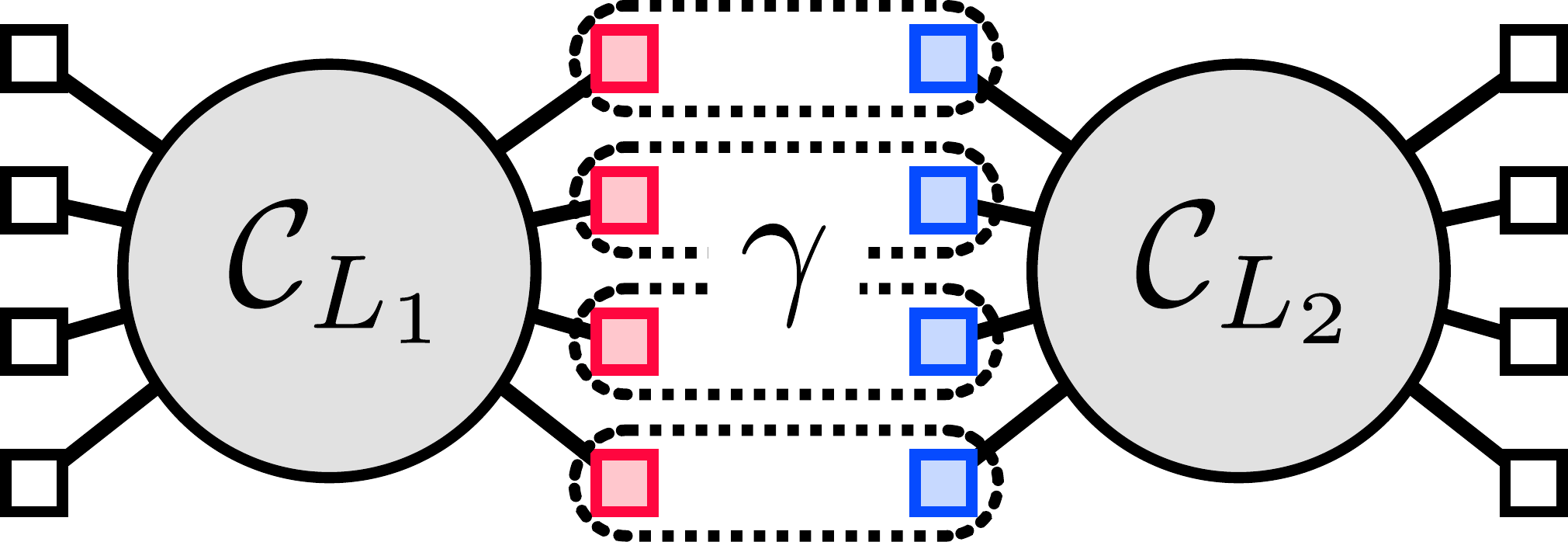}
    \nonumber\\
    &=
    \includegraphics[width=3.1cm,valign=c]{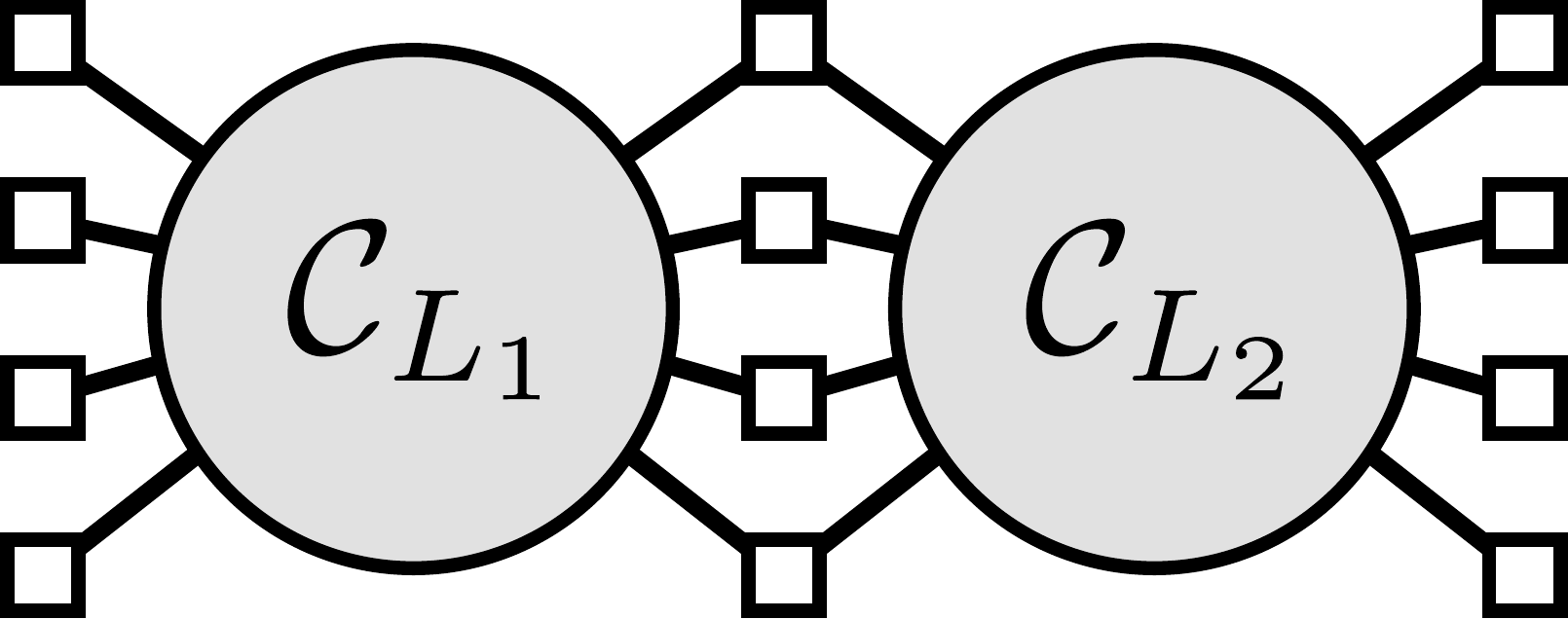}\,.
\end{align}
The new complex $\C_{L'}$ has $N_1+N_2-|\gamma|$ atoms. For this construction,
we assume that the ports that belong to pairs in $\gamma$ are located on the
boundary of their complex (we will show in \cref{sec:completeness} why this
is possible).
The Hamiltonian of the new complex is
\begin{align}
    H[\C_{L'}]=\left(H[\C_{L_1}]+H[\C_{L_2}]+\delta H\right)/\gamma
    \label{eq:amalgamation}
\end{align}
where the formal quotient $\bullet/\gamma$ indicates that pairs of atoms in
$\gamma$ are identified; $\delta H$ denotes additional interactions between the
two subcomplexes $\C_{L_i}$ that vanish in the PXP model (in the vdW model they
are finite but strongly suppressed due to the quick decay of $U_\sub{vdW}$).

In a nutshell: $H[\C_{L'}]$ is the sum of the Hamiltonians of the original
two complexes were the detunings of the ports that are identified by
$\gamma$ add up. For example, let $n^{(1)}$ and $n^{(2)}$ describe ports of
$\C_{L_1}$ and $\C_{L_2}$, respectively, and let $\gamma$ identify these
two ports. Then $H[\C_{L_1}]$ contains a term $-\Delta^{(1)} n^{(1)}$
and $H[\C_{L_2}]$ contains a term $-\Delta^{(2)} n^{(2)}$. The Hamiltonian
\eref{eq:amalgamation} of the amalgamation contains the term $(-\Delta^{(1)}
n^{(1)}-\Delta^{(2)} n^{(2)})/\gamma=-(\Delta^{(1)}+\Delta^{(2)})n'$ where
$n'=n^{(1)}/\gamma=n^{(2)}/\gamma$ describes the atom that corresponds to
the identification of the two ports.

With $\delta H=0$, it is straightforward to verify that the amalgamation
$\C_{L'}$ realizes the language $L'=L_1\underline{\stackrel{\gamma}{\cap}}
L_2$. This is so because the ground state energy of $H[\C_{L'}]$ is
lower-bounded by the sum of the ground state energies of the summands
$H[\C_{L_i}]$; but this lower bound is realized by configurations in
$L'\neq\emptyset$.
The ports identified by $\gamma$ can be interpreted as ancillas
of the new complex if $|L_1\underline{\stackrel{\gamma}{\cap}}L_2| =
|L_1\stackrel{\gamma}{\cap}L_2|$, i.e., if the states of these atoms provide
redundant information about the ground state manifold; in this case, one
would define $L'=L_1\stackrel{\gamma}{\cap}L_2$ instead.

An important special case of the above construction is the amalgamation
of \emph{gates} where the input ports of one gate are identified with the
output ports of others. For example, let $w(x_1,x_2)$ and $w'(x_1',x_2')$
be two Boolean gates that are concatenated into the circuit on three inputs
$\tilde w(x_1',x_1,x_2):=w'(x_1',w(x_1,x_2))$. It is easy to see that
$L[\tilde w]=L[w]\stackrel{\gamma}{\cap} L[w']$ with $\gamma=\{(3,2)\}$
where 3 labels the third letter of words in $L[w]$, which encodes the
output $y=w(x_1,x_2)$, and 2 labels the second letter of words in $L[w']$,
which encodes the input $x_2'$. Note that for Boolean circuits without
redundancies it is always $|L[w]\underline{\stackrel{\gamma}{\cap}}L[w']|
= |L[w]\stackrel{\gamma}{\cap}L[w']|$ because all words are identified by
the input bits.
This example demonstrates that the amalgamation of gates is a crucial
ingredient for the decomposition of complex Boolean circuits into a small
set of simple gates.

\section{Functional completeness}
\label{sec:completeness}

\begin{figure*}[t]
    \centering
    \includegraphics[width=1.0\linewidth]{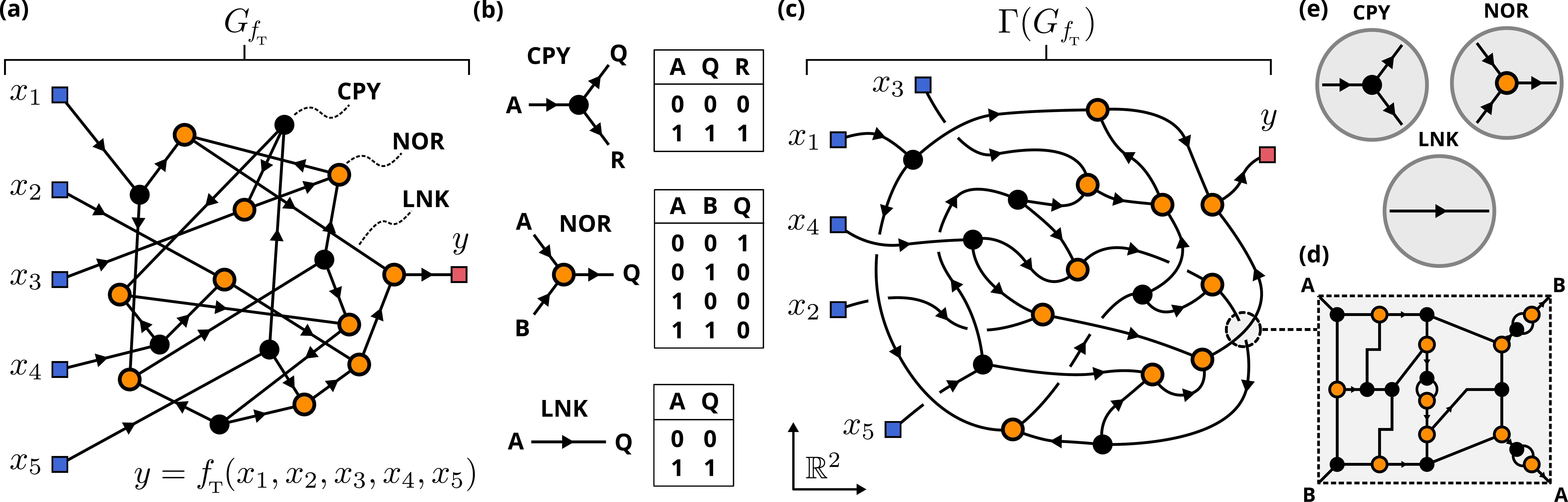}
    \caption{%
	\emph{Decomposition of Boolean functions.} 
    (a) Any Boolean function $\fT$ can be represented by a graph $G_\fT$ (a
    ``Boolean circuit'') with dedicated input vertices (blue squares), one
    output vertex (red square), and trivalent vertices (circles) of two types
    (b):
	\texttt{NOR}-gates with two incoming and one outgoing edge (orange
	circles) and \texttt{CPY}-vertices with one incoming and two outgoing
	edges (black circles); the edges themselves can be interpreted as
	trivial single-bit gates, here referred to as \texttt{LNK}-gates
	(black edges).
    If the inputs (A,B) and outputs (Q,R) of all three primitives are assigned
    Boolean values that satisfy the truth tables in (b), the value at the
    output vertex is $y=\fT(x_1,\dots)$ by construction.
    (c) The embedding $\Gamma(G_\fT)$ (``drawing'') of the abstract graph $G_\fT$
    in the plane $\mathbb{R}^2$ typically involves crossings (whenever $G_\fT$
    is \emph{non-planar}); furthermore, input and output vertices may lie in
    the interior of the graph. Since a crossing of wires can be implemented
    with the available vertices (d), the graph can always be enhanced such
    that it becomes planar \emph{and} input/output vertices lie on the
    perimeter of the embedding. (e) Locally, the embedding $\Gamma(G_\fT)$
    decomposes into three primitives, namely the structures referred to as
    \texttt{NOR}, \texttt{CPY}, and \texttt{LNK} that are functionally defined
    by the truth tables in (b) and geometrically by the sketches in (e).
    }
    \label{fig:decomposition}
\end{figure*}

We have now all concepts and tools in place to formulate the main result of
this paper:
\begin{theorem}[Functional completeness]
    \label{thm:1} 
    For every tessellated target Hilbert space $\gH=\H_\L[\fT]$ on
    some lattice $\L$ that is generated by a check function $\fT$,
    there exists a structure $\C_\sub{T}$ in the PXP model such that
    \begin{align}
        \gH\stackrel{\text{loc}}{\simeq} \H_0[\C_\sub{T}]\,,
        \label{eq:equiv}
    \end{align}
    with finite gap $\Delta E>0$ and perfect degeneracy $\delta E=0$.
\end{theorem}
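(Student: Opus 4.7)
\emph{Proof proposal.} The plan is to invoke functional completeness of the \texttt{NOR}-gate at the level of Boolean logic, translate the resulting circuit into a planar geometric arrangement via amalgamation, and finally tessellate. First, since $\{\texttt{NOR}\}$ is a functionally complete set of Boolean connectives, the check function $\fT:\GF^g\to\GF$ admits a representation as a Boolean circuit whose internal vertices are only of type \texttt{NOR} and \texttt{CPY} (the latter being necessary to fan out intermediate wires), connected by \texttt{LNK}-edges; this yields an abstract directed graph $G_\fT$ with $g$ input vertices and one output vertex, as in \cref{fig:decomposition}(a,b).

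Next I would convert $G_\fT$ into a planar embedding $\Gamma(G_\fT')$ in which all input/output vertices lie on the outer face. Generic Boolean circuits are non-planar, so the graph has to be augmented: every crossing of two wires is replaced by a dedicated \texttt{CROSSING} gadget whose truth table transparently identifies the two ingoing bits with the two outgoing bits (\cref{fig:decomposition}(d)); this removes all crossings at the cost of introducing one more primitive. Input and output vertices can be routed to the boundary by inserting additional \texttt{LNK}/\texttt{CPY} vertices if necessary. The outcome is a planar graph $G_\fT'$ whose only vertex types are \texttt{NOR}, \texttt{CPY}, \texttt{LNK}, and \texttt{CROSSING}, with all external ports on the perimeter.

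The geometric part of the proof then rests on exhibiting, for each of the four vertex types, a concrete Rydberg complex in the PXP model whose ground state manifold, with perfect degeneracy $\delta E=0$ and finite gap, realizes the corresponding language (as is done by the explicit structures of \cref{fig:nutshell} for \texttt{NOR} and \texttt{NOT}). Assuming these primitives are available, I would construct the single-site complex $\C[\fT]$ by iteratively amalgamating primitives along the edges of $\Gamma(G_\fT')$ in accordance with \cref{eq:amalgamation}. Because in the PXP model $\delta H\equiv 0$ (blockades act only inside a fixed radius $\Rb$ and are set to zero outside), the amalgamation preserves the ground space and yields a complex that realizes $\H(L[\fT])$. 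Finally, to obtain the tessellated structure $\C_\sub{T}$, I place one copy of $\C[\fT]$ at every site $s\in V(\L)$ and amalgamate neighboring copies by identifying the port atoms that lie on their shared lattice edge. The ancillas remain localized within a single site's complex, so the isomorphism is local; the equivalence $\gH\stackrel{\text{loc}}{\simeq}\H_0[\C_\sub{T}]$ follows because the global ground state configurations are precisely the tensor products of locally satisfying assignments, i.e.\ the tessellated language $L_\L[\fT]$.

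The gap and degeneracy claims reduce to checks at the primitive level: any excitation out of a ground configuration either violates a local blockade (energy cost $\to\infty$ in PXP) or leaves a port atom in a state that cannot be locally completed to a ground configuration of the enclosing primitive, costing a finite detuning-scale penalty independent of system size. Perfect degeneracy across $L_\L[\fT]$ is inherited from perfect degeneracy of each primitive, which is the one non-routine ingredient. The main obstacle I therefore expect is the explicit construction (and verification of exact degeneracy) of the four primitive complexes \texttt{NOR}, \texttt{CPY}, \texttt{LNK}, and \texttt{CROSSING} as unit disk graphs with carefully tuned site-dependent detunings: one must ensure that for every valid port configuration the ancillas admit exactly one completion to a ground state, and that all such completions have the same total energy. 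Once these primitives are in place, the rest of the argument is combinatorial.
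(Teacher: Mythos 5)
Your overall strategy coincides with the paper's own proof: \texttt{NOR}-decomposition of $\fT$, planarization via crossover gadgets with all external ports routed to the perimeter, explicit PXP primitives with $\delta E=0$, amalgamation into a single-site complex, and periodic tessellation. The only organizational difference is that you count the crossing as a fourth primitive to be constructed from scratch, whereas the paper obtains it as a sub-circuit built from \texttt{NOR}/\texttt{CPY}/\texttt{LNK} (following the standard planarization of Boolean circuits), so that only three primitives require explicit realization; this is immaterial to correctness.

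There is, however, one genuine gap. The complex you obtain by amalgamating primitives along the planar embedding realizes the language $L[\fT]=\{x_1\dots x_g\,y \mid y=\fT(x_1,\dots,x_g)\}$, i.e.\ the \emph{full truth table} of $\fT$ with all $2^g$ rows, including those with output $y=0$. Tessellating copies of this complex therefore yields a ground state for \emph{every} bit pattern on the edges of $\L$, not only for patterns in $L_\L[\fT]$: nothing in your construction selects the satisfying assignments. The paper closes this in its Step~4 [\cref{eq:detuning}] by adding an extra detuning to the output port of each single-site complex (and downgrading that port to an ancilla), which lowers the energy of configurations with $\fT=1$ and gaps out those with $\fT=0$; only the resulting constrained complex $\C_{\fT=1}$ is then tessellated. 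Your closing assertion that the global ground states are ``precisely the locally satisfying assignments'' is exactly where this missing step is needed. The remainder of your argument --- the $\delta H=0$ justification that amalgamation preserves the ground space, the locality of the isomorphism because ancillas stay within a site, and the inheritance of $\delta E=0$ and $\Delta E>0$ from the primitives --- matches the paper.
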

In \cref{eq:equiv}, $\stackrel{\text{loc}}{\simeq}$ denotes an isomorphism
of Hilbert spaces like \cref{eq:iso} that, in addition, preservers the
locality structure: it maps local unitaries on $\gH$ to local unitaries
on $\H_0[\C_\sub{T}]$ and vice versa. Here the locality structure of
$\H_0[\C_\sub{T}]$ is induced by the locality structure of $\mathcal{H}$
which reflects the physical realization of the system. The locality
structure of $\gH=\H_\L[\fT]$ derives from the lattice $\mathcal{L}$ and
the bit-projector $\mathfrak{u}_s$ that was used to define the tessellated
language $L_\L[\fT]$; it is therefore part of the defining properties of
the Hilbert space $\gH$. This local isomorphism will be explicit for the
examples in \cref{sec:examples}.

\begin{proof}

The proof of \cref{thm:1} is constructive in principle and best split into
several steps: Steps 1 to 4 deal with the construction of a Rydberg complex
$\C_{\fT=1}$ that implements the constraint of the check function on a single
site of the lattice. In the final Step 5, the structure $\C_\sub{T}$ is then
constructed as the amalgamation of copies of $\C_{\fT=1}$ on the full lattice.

\paragraph*{Step 1: Decomposition of $\fT$.} 
The first goal is to convert the check function $\fT\,:\,\GF^g\to\GF$ on $g$
binary inputs into a finite set of Boolean gates as ``building blocks.'' There
are many universal gate sets to choose from~\cite{Wernick1942} but the one
that is most natural to the Rydberg platform is the singleton $\{\NOR\}$
that contains only the $\NOR$-gate~\cite{Sheffer1913}
\begin{align}
    A\downarrow B:=\overline{A\vee B}\,.
    \label{eq:nor}
\end{align}
The idea behind this choice is simple: Placing three atoms $A,C,B$ in a
row such that the pairs $(A,C)$ and $(C,B)$ are in blockade but the pair
$(A,B)$ is not naturally gives rise to a constraint akin to $C=A\downarrow B$
(we discuss the details below).
The functional completeness of $\{\NOR\}$ allows us to write
\begin{align}
    \fT(x_1,\dots,x_g)=(\dots (x_i\downarrow x_j) \dots (x_k\downarrow x_l)\dots)
    \label{eq:f}
\end{align}
where the expression on the right can be any (recursive) combination of
expressions built from the input variables paired by $\NOR$-gates. On an
abstract level, this is a neat result; however, in reality one has to be more
careful because variables can be used multiple times at different locations
in the $\NOR$-expansion of $\fT$.

To identify the true physical building blocks needed to cast \cref{eq:f}
into a structure of atoms, it is advisable to translate the $\NOR$-expansion
into a graph $G_\fT$ that represents the underlying Boolean circuit and
uses the inputs $x_i$ only \emph{once} at dedicated ``input vertices'' and
outputs the result $\fT(x_1,\dots,x_g)$ at a dedicated ``output vertex''
(\cref{fig:decomposition}a). Otherwise, $G_\fT$ is a trivalent graph
with two types of vertices, corresponding to $\texttt{CPY}$-operations
that copy a bit and $\texttt{NOR}$-gates that combine two bits according
to \cref{eq:nor}. If we assign arrows to the edges to highlight the
information flow, the two vertices are distinguished by the number of
in- and outgoing edges (\texttt{CPY}: 1 in and 2 out, \texttt{NOR}: 2 in
and 1 out). Furthermore, we can interpret the edges themselves as trivial
single-bit gates (``\texttt{LNK}-gates''). If we assign Boolean values to the
inputs and outputs of these three primitives according to the truth tables
in \cref{fig:decomposition}b, the value of the output vertex is given by
$y=\fT(x_1,\dots,x_g)$. Without loss of generality, we consider only circuits
without redundancy, i.e., for a given input $\{x_1,\dots,x_g\}$ the state
of the inputs and outputs of all its primitives is uniquely determined. This
implies that there are exactly $2^g$ such assignments that are parametrized
by the $g$ inputs $\{x_1,\dots,x_g\}$ (this can be seen as a \emph{boundary}
condition; in a dynamical circuit, one would call it an \emph{initial}
condition).

\begin{figure*}[tb]
    \centering
    \includegraphics[width=1.0\linewidth]{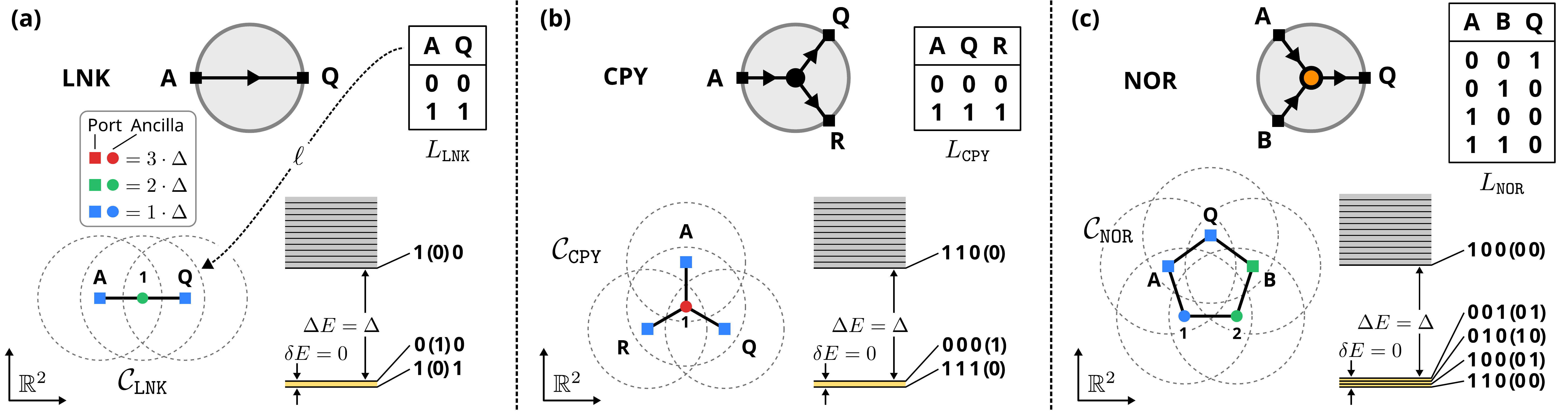}
    \caption{%
	\emph{Complete set of logic primitives.}
    (a)~The (elementary) \texttt{LNK}-complex $\C_\texttt{LNK}$ can
    be realized by a chain of three atoms where adjacent atoms are in
    blockade (black edges). The detuning of the ports $\Delta$ (blue
    squares, labeled by $\ell$) is half that of the ancilla $2\Delta$
    (green circle) in the bulk. The width $\delta E$ and gap $\Delta E$ are
    shown together with a schematic spectrum that highlights the logical
    manifold $\H_0[\C_\texttt{LNK}]$ and one of the states orthogonal to
    $\H_0[\C_\texttt{LNK}]$ that define the gap. The state of ancillas is
    shown in parentheses.
    (b)~The \texttt{CPY}-complex $\C_\texttt{CPY}$ can be realized with
    a central ancilla (red circle) that is in blockade with the three
    surrounding atoms (blue squares). To make the two logical states
    degenerate, the ancilla has a detuning of $3\Delta$ if the other atoms
    are detuned by~$\Delta$.
    (c)~The \texttt{NOR}-complex $\C_\texttt{NOR}$ can be realized with two
    ancillas (blue and green circles) that form a ring-like blockade with
    the three ports (blue and green squares). To make the four logical states
    unique and degenerate, the detunings cannot be chosen uniformly but must
    break the reflection symmetry about the axis through the output port~$Q$.
    }
    \label{fig:pxp_primitives}
\end{figure*}

\paragraph*{Step 2: Embedding of $G_\fT$.} 
The graph $G_\fT$ represents the Boolean circuit of $\fT$ on an abstract
level (only the connectivity of $G_\fT$ is relevant). Our final goal is to
translate this graph into a functionally equivalent structure of atoms \emph{in
the plane}. Thus we have to find an embedding $\Gamma(G_\fT)$ of $G_\fT$
in $\mathbb{R}^2$; this embedding should be planar, i.e., without crossing
edges to avoid unwanted interactions. Here we skip a formal definition of
$\Gamma(G_\fT)$ and appeal to the intuition of the reader: $\Gamma(G_\fT)$
describes a drawing of $G_\fT$ in the plane without crossing edges and with
well-separated vertices (\cref{fig:decomposition}c).
Of course not every graph $G_\fT$ is planar, i.e., can be drawn without
crossing edges in the plane. However, it has been shown long ago that
every Boolean circuit can be made planar by augmenting it with ``crossover
sub-circuits'' whenever two lines cross~\cite{Dewdney_1979}. This crossover can
be constructed with various gate sets, including the \texttt{NOR}-singleton
(\cref{fig:decomposition}d). The embedding of the crossover then uses only
the three available primitives in \cref{fig:decomposition}b so that we can,
without loss of generality, assume $\Gamma(G_\fT)$ to be planar.
Note that the existence of a crossover also implies that we can assume the
input and output vertices to be located on the perimeter of the embedding
(as realized in \cref{fig:decomposition}c). Translated into complexes, this
will prove our claim in \cref{sec:complexes} that we can assume the ports
to sit on the perimeter of a complex.

While $\Gamma(G_\fT)$ may look very convoluted on a larger scale,
locally it decomposes into the three simple primitives depicted in
\cref{fig:decomposition}e, namely \texttt{CPY}, \texttt{NOR}, and
\texttt{LNK}. The next step is then to implement these three primitives
as complexes both geometrically (i.e., following the geometry in
\cref{fig:decomposition}e) and functionally (i.e., following the truth
tables in \cref{fig:decomposition}b). An $\fT$-complex can then be obtained
by amalgamation of these primitives according to the geometric blueprint
provided by $\Gamma(G_\fT)$.

\paragraph*{Step 3a: Implementing the \texttt{LNK}-complex.} 
The \texttt{LNK}-complex is the physical counterpart of the ``wires'' in
the drawing of the circuit $\Gamma(G_\fT)$. Logically, it corresponds to the
trivial gate $w(x)=x$ with language $L_\text{\texttt{LNK}} =\{00,11\}$. On
the level of pure Boolean logic, wires are not entities of their own but
on the physical level, sending a bit from one location to another requires
dedicated machinery.

Before we discuss its construction, it is useful to introduce a more
fundamental complex that can be used to construct two of the three primitives:
the \texttt{NOT}-gate with defining language $L_\neg = \{01,10\}$; it realizes
the single-bit gate $w(x)=\overline{x}$ and formalizes the core concept of
the Rydberg blockade. In the PXP model, it can be realized naturally without
ancillas by the Hamiltonian
\begin{align}
    H_\neg=-\Delta (n_A+n_Q)
\end{align}
with a complex $\C_\neg$ where $|\vec r_A-\vec r_Q|<\Rb$. The subscripts
denote the labels of the ports assigned by $\ell$ (we reserve A, B, \dots~for
input ports and Q, R, \dots~for output ports).
The ground state manifold is $\H_0[\C_\neg]=\spn{\ket{01},\ket{10}}$ with
degeneracy $\delta E_\neg=0$ and gap $\Delta E_\neg=\Delta>0$.

The elementary \texttt{LNK}-complex that translates a bit in space
can then be constructed as the amalgamation of two \texttt{NOT}-gates
(\cref{fig:pxp_primitives}a) with Hamiltonian
\begin{align}
    H_\text{\texttt{LNK}} =-\Delta n_A-2\Delta \tilde n_1-\Delta n_Q\,,
\end{align}
where adjacent atoms are in blockade but next-nearest neighbors are not. Above
and in the following we label ancillas with a tilde and assign them numerical
indices. As for the \texttt{NOT}-gate, it is $\delta E_\text{\texttt{LNK}}
=0$ and $\Delta E_\text{\texttt{LNK}}=\Delta$ with the \texttt{LNK}-manifold
\begin{align}
    \H_0[\C_\text{\texttt{LNK}}]=\spn{\ket{0(1)0},\ket{1(0)1}}\,.
\end{align}
Here and in the following we mark the states of ancillas by parentheses.
Repeated amalgamation of elementary \texttt{LNK}-complexes results in
\texttt{LNK}-complexes of arbitrary length (always composed of an odd number
of atoms and with halved detuning at the endpoints). The two states in
$\H_0[\C_\text{\texttt{LNK}}]$ of such chains correspond to the two ground
states of an antiferromagnetic Ising chain.

\paragraph*{Step 3b: Implementing the \texttt{CPY}-complex.}
The purpose of the \texttt{CPY}-complex is to copy classical bits; it is
defined by the ``copy'' language $L_\text{\texttt{CPY}}=\{000,111\}$. The
\texttt{CPY}-complex is necessary because expansions in universal gates
can reuse inputs multiple times. Furthermore, circuits can be simplified
dramatically if intermediate results can be reused. In conventional drawings
of Boolean circuits, the possibility to copy bits is silently assumed whenever
one splits up wires. Again, in a physical implementation one has to provide
the means to do so.

The implementation of the \texttt{CPY}-complex is detailed in
\cref{fig:pxp_primitives}b. It is easy to see (\cref{app:cpy}) that there
cannot be a \texttt{CPY}-complex without ancillas because the configuration
$111$ excludes a Rydberg blockade between any of the three ports (which would
automatically render them completely uncorrelated). Adding a single ancilla
does the trick because the amalgamation of three \texttt{NOT}-complexes on
a single atom yields the desired complex by construction. The four atoms
are described by the Hamiltonian
\begin{align}
    H_\text{\texttt{CPY}}=-\Delta(n_A+n_Q+n_R)-3\Delta\,\tilde n_1\,,
    \label{eq:Hcpy}
\end{align}
and the geometry of the complex $\mathcal{C}_\texttt{CPY}$ is chosen so that
the ancilla is in blockade with the three ports, but these are not within
blockade of each other. In combination with \cref{eq:Hcpy}, this implements
the \texttt{CPY}-manifold
\begin{align}
    \H_0[\C_\text{\texttt{CPY}}]=\spn{\ket{000(1)},\ket{111(0)}}
\end{align}
with $\delta E_\text{\texttt{CPY}}=0$ and $\Delta
E_\text{\texttt{CPY}}=\Delta>0$.

\paragraph*{Step 3c: Implementing the \texttt{NOR}-complex.}
The \texttt{NOR}-complex is crucial as it realizes a
functionally complete two-bit gate; it is specified by the language
$L_\texttt{NOR}=\{001,010,100,110\}$. In contrast to the \texttt{LNK}- and
\texttt{CPY}-complexes, the \texttt{NOR}-complex cannot be bootstrapped from
the \texttt{NOT}-complex but must be constructed from scratch.

In \cref{app:nor} we show that a \texttt{NOR}-complex cannot be realized
with less than two ancillas in the PXP model. One implementation of a
\texttt{NOR}-complex is detailed in \cref{fig:pxp_primitives}c. The five
atoms are governed by the Hamiltonian
\begin{align}
    H_\text{\texttt{NOR}}=
    -\Delta(n_A+n_Q+\tilde n_1)-2\Delta (n_B+\tilde n_2)
\end{align}
which gives rise to the \texttt{NOR}-manifold
\begin{align}
    \H_0[\C_\text{\texttt{NOR}}]=\spn{%
        \begin{aligned}
            &\ket{001(01)},\ket{010(10)},\\
            &\ket{100(01)},\ket{110(00)}
        \end{aligned}
    }
\end{align}
with $\delta E_\text{\texttt{NOR}} =0$ and $\Delta
E_\text{\texttt{NOR}}=\Delta$; this requires that the atoms are arranged in
a ring-like blockade, as depicted in \cref{fig:pxp_primitives}c. Note that
the two ancillas are only necessary to enforce the degeneracy of the logical
states $010$ and $100$ with $110$. All remaining constraints come for free
with the Rydberg blockade. As we will show in \cref{sec:primitives}, the
\texttt{NOR}-complex in \cref{fig:pxp_primitives}c is not unique. We will
also see that the only fundamental Boolean gate that can be realized with
as few as five atoms is the \texttt{NOR}-gate, confirming our intuition in
Step~1 that the \texttt{NOR}-gate is the most natural on the Rydberg platform.

\paragraph*{Step 4: Constructing the $\fT$-complex.} 
To construct a complex $\C_\fT$ that implements the check function $\fT$
(more precisely: the language $L[\fT]$), one combines the three primitives
above according to an embedding $\Gamma(G_\fT)$. Since all vertices are
(at most) trivalent, it is easy to check that an amalgamation in the PXP
model is possible without geometrical obstructions, and that this procedure
yields an $\fT$-complex with $\delta E_\fT=0$ and $\Delta E_\fT\geq\Delta>0$.
At this point, we have a complex with $g=4K$ input ports on its boundary
that outputs $y=\fT(x_{e_1}^1,\dots)$ on a dedicated output port (also on
its boundary, but this is not important in the following):

\begin{align}
    \includegraphics[width=2.5cm,valign=c]{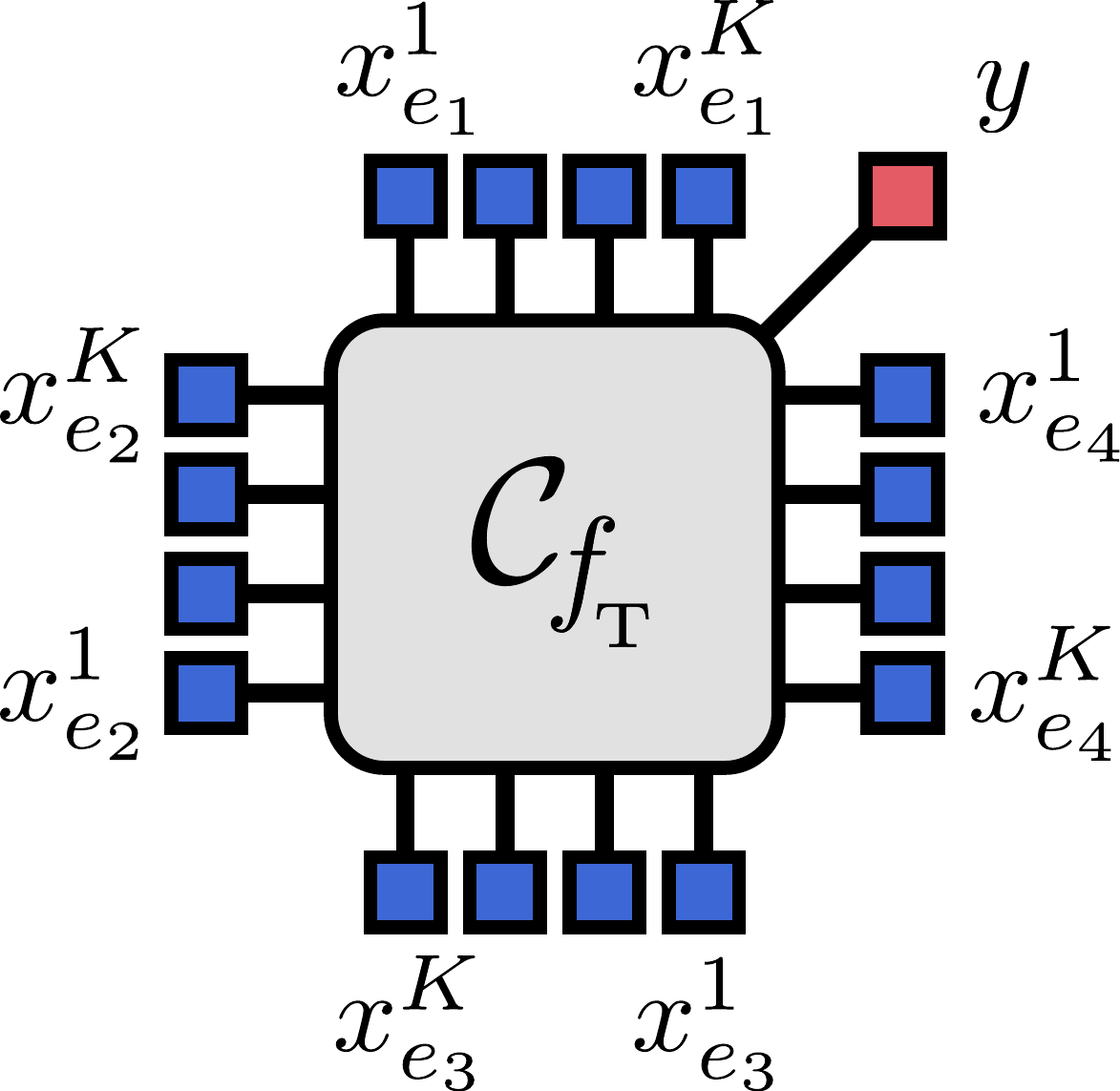}
\end{align}
To enforce the constraint $\fT(x_{e_1}^1,\dots)\stackrel{!}{=}1$, we only
have to add a local detuning on the output port to lower the energy of
valid configurations and gap out invalid ones. This boils down to a simple
modification of the check function complex,
\begin{align}
    \includegraphics[width=2.5cm,valign=c]{Cf.pdf}
    \quad\rightarrow\quad
    \includegraphics[width=2.5cm,valign=c]{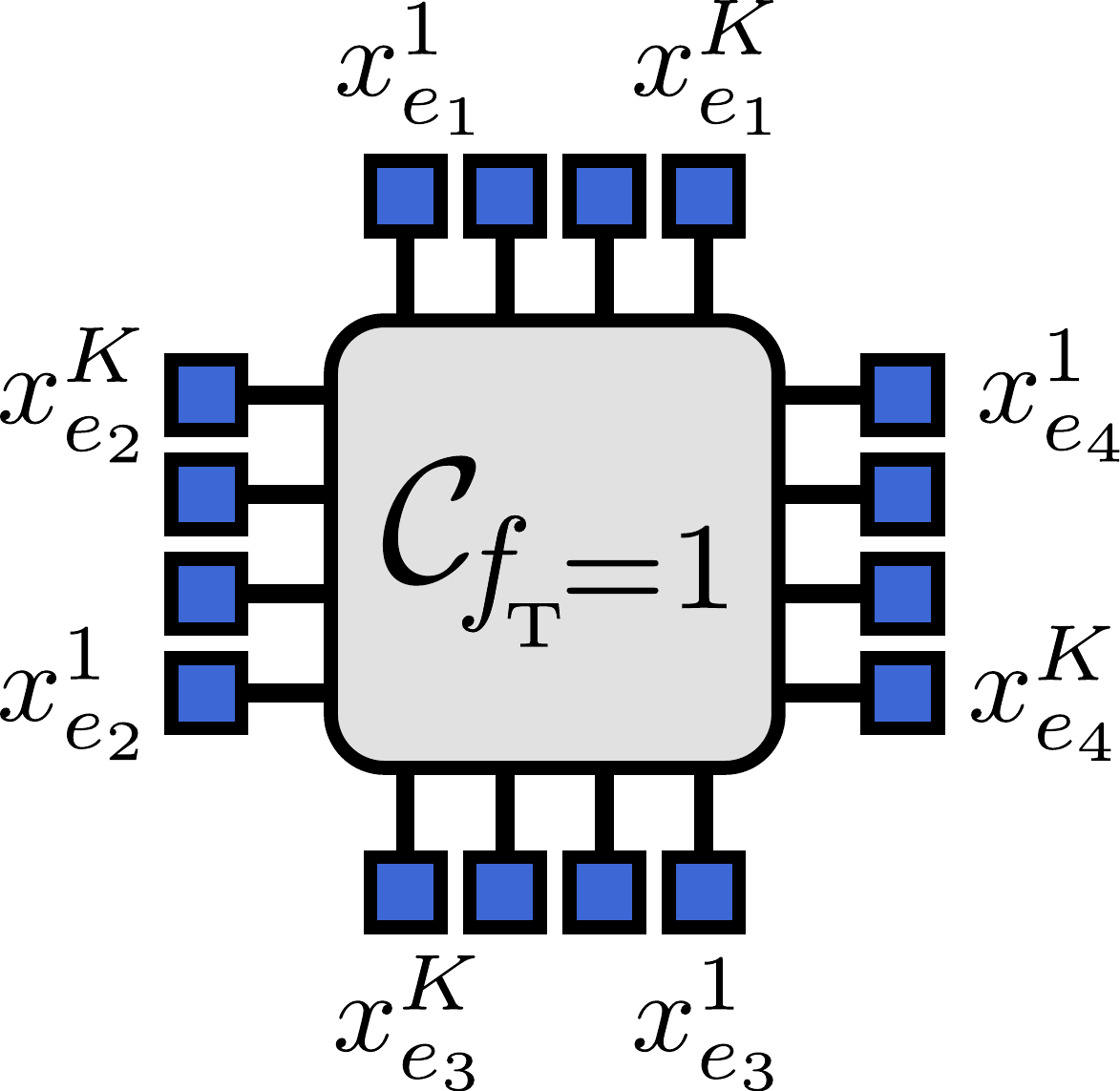}
    \label{eq:detuning}
\end{align}
where the output port is detuned and downgraded to an ancilla. The ground
state manifold of the modified complex $\C_{\fT=1}$ consists of all input
configurations for which $\fT(x_{e_1}^1,\dots)=1$.

\paragraph*{Step 5: Constructing $\C_\sub{T}$.} 
The complex $\C_{\fT=1}$ enforces the local constraint of the check function
on a single site of the lattice on which the tessellated target Hilbert space
$\H_\sub{T}=\H_\L[\fT]$ is defined. To construct $\C_\sub{T}$ for the full
system, place a copy $\C_{\fT=1}\mapsto \C_{\fT=1}^{s}$ on every site $s\in
V(\L)$ of the lattice, and amalgamate adjacent complexes at the corresponding
ports (possibly using \texttt{LNK}-complexes to avoid unwanted interactions):

\begin{align}
    \C_\sub{T}:=\quad\includegraphics[width=0.6\linewidth,valign=c]{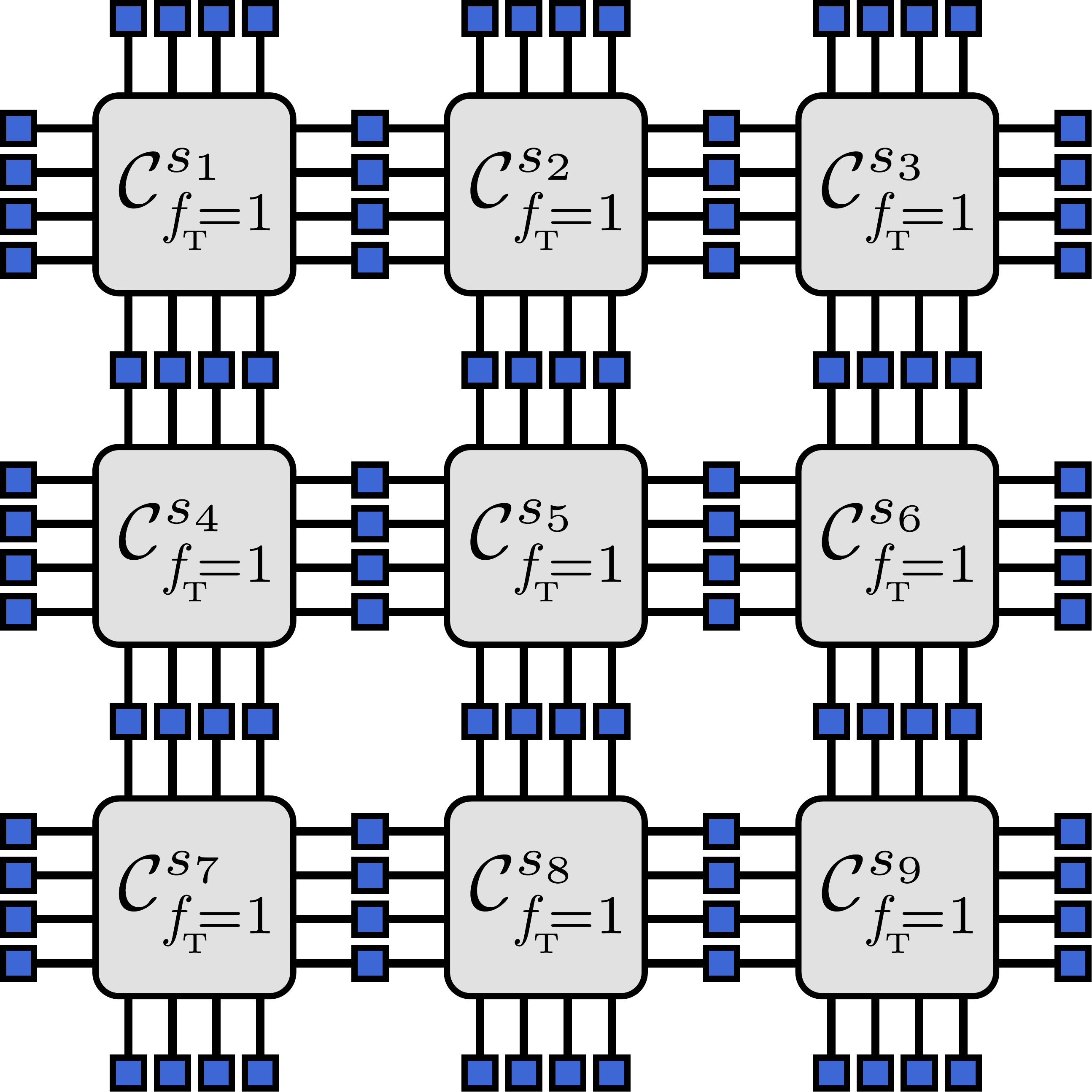}
\end{align}
By construction, the ground states of this complex are in one-to-one
correspondence with words $\vec x\in L_\L[\fT]$ (using the ports on the
edges denoted by blue squares). Note that here we show the construction for
a square lattice $\L$; the generalization to other lattices is straightforward.

This concludes the construction of $\C_\sub{T}$ such that
$\gH\stackrel{\text{loc}}{\simeq} \H_0[\C_\sub{T}]$ in the PXP
approximation. Note that the ancillas do not introduce additional degrees of
freedom in this subspace and local unitaries on $\gH$ map to local unitaries
on $\H_0[\C_\sub{T}]$ (the latter involve the ancillas of the $\C_{\fT=1}$
complexes and can therefore be very complicated---but they remain local
on $\H$).
\end{proof}

We conclude this section with a few remarks.

First, while the proof above \emph{is} constructive, one should not expect
the resulting structures to be useful in real-world applications, except for
simple special cases. In particular, we established no claims about optimality
(in any sense) of the constructed $\fT$-complexes; on this we focus in the
next \cref{sec:primitives}.

Second, the modification in \cref{eq:detuning} to construct $\C_{\fT=1}$ from
$\C_\fT$ is often straightforward to implement and can simplify the complex
considerably: When there are no blockades between the output port and some
of the input ports, one simply \emph{deletes} the output port along with all
ancillas that are in blockade with it. This removes all configurations of
input ports from the ground state manifold where the output was not excited
(see \cref{app:subcomplex}).
The removal of the output port may not be necessary at all if
the constraint $\fT(x_{e_1}^1,\dots)\stackrel{!}{=}1$ can be rewritten as
an equality of the form
\begin{align}
    f_1(x_{e_1}^1,\dots,x_{e_2}^1,\dots)\stackrel{!}{=}f_2(x_{e_3}^1,\dots,x_{e_4}^1,\dots)\,,
\end{align}
with Boolean functions $f_{1,2}$ that take only $2K$ inputs each. Then
$\C_{\fT=1}=\C_{f_1}\stackrel{\gamma}{\otimes}\C_{f_2}$ where the two
complexes are amalgamated at their output ports:

\begin{align}
    \includegraphics[width=2.5cm,valign=c]{Cf2.pdf}
    \quad=\quad
    \includegraphics[width=3.0cm,valign=c]{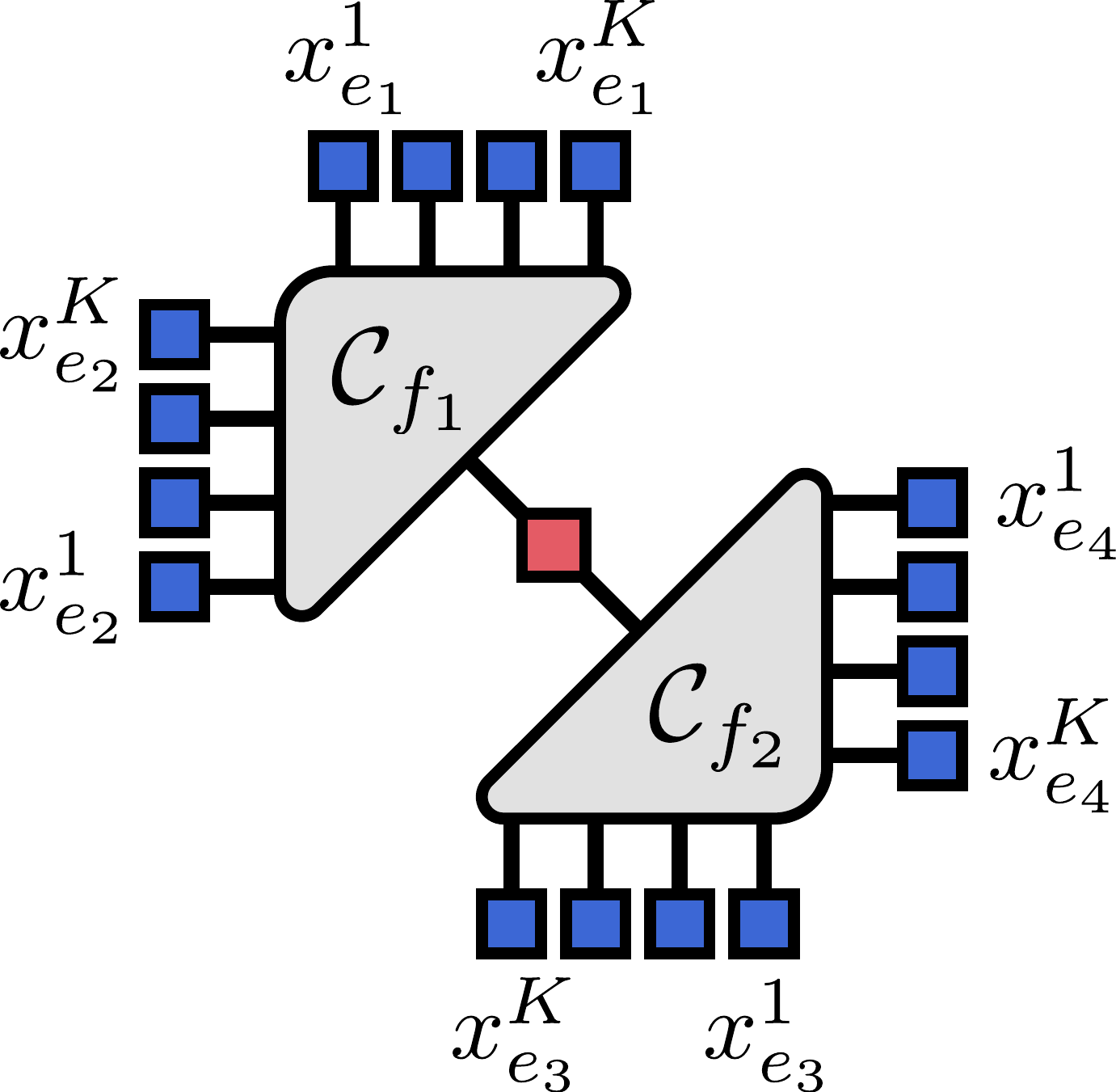}
\end{align}
An example for this construction can be found in \cref{subsec:toric}.

Lastly, the constructive proof implies that all Boolean functions
can be realized by a complex with \emph{bounded} detuning range
$\{1\Delta,2\Delta,3\Delta\}$, i.e., detunings do not grow with the
size (or depth) of the Boolean circuit. Indeed, since the ports of
the \texttt{LNK}-complex have detuning $1\Delta$, and the ports of the
\texttt{CPY}- and \texttt{NOR}-complexes at most $2\Delta$, amalgamations
of the latter two primitives via \texttt{LNK}-complexes produce atoms
with maximum detuning $3\Delta$. This result is particularly important for
experimental realizations that always operate within a bounded range of
applicable detunings.

\section{Logic Primitives}
\label{sec:primitives}

A crucial step of the proof in the previous section is to show that every
Boolean function $f$ can be realized by a Rydberg complex $\mathcal{C}_f$
in the sense that the language $L[f]$ of its truth table can be realized as
ground state manifold.
As mentioned above, the complexes that arise from the decomposition of $f$ into
\texttt{LNK}-, \texttt{CPY}- and \texttt{NOR}-primitives are typically large
and convoluted. For example, the decomposition of a simple \texttt{AND}-gate
($\wedge$) into \texttt{NOR}-gates reads
\begin{align}
    A\wedge B = (A \downarrow A) \downarrow (B \downarrow B)\,,
\end{align}
which would require two \texttt{CPY}- and three \texttt{NOR}-complexes, wired
together by a bunch of \texttt{LNK}-complexes so that the resulting complex
requires more than 20 atoms. As this is way too much overhead for a simple
gate, the question arises whether important primitives of Boolean logic can
be realized by complexes that are much smaller than the ones described by
the \texttt{NOR}-decomposition in \cref{sec:completeness}.

The answer is positive: In the following, we discuss provably minimal
complexes for the most important gates of Boolean logic, all of which improve
significantly over the na\"ive \texttt{NOR}-decomposition. Besides the usual
gates of Boolean algebra, \texttt{NOT} ($\neg$ or $\overline{\bullet}$),
\texttt{AND} ($\wedge$), and \texttt{OR} ($\vee$), we search for minimal
complexes that realize the following common logic gates (given in disjunctive
normal form):
\begin{subequations}
    \begin{align}
        \texttt{NOR:}\quad A\downarrow B &\phantom{:}=\overline A\wedge \overline B\\
        \texttt{NAND:}\quad A\uparrow B &:=\overline A\vee \overline B\\
        \texttt{XOR:}\quad A\oplus B &:=(A\wedge\overline B)\vee(\overline A\wedge B)\label{eq:XOR}\\
        \texttt{XNOR:}\quad A\odot B &:= (A\wedge B) \vee (\overline A \wedge \overline B)\,.
        \label{eq:XNOR}
    \end{align}
\end{subequations}

\clearpage

\makeatletter\onecolumngrid@push\makeatother
\begin{figure*}
    \centering
    \scalebox{0.96}{%
    \begin{tikzpicture}
        \node[figure,anchor=north west] at (-2,19.7) {\includegraphics[width=1.5cm]{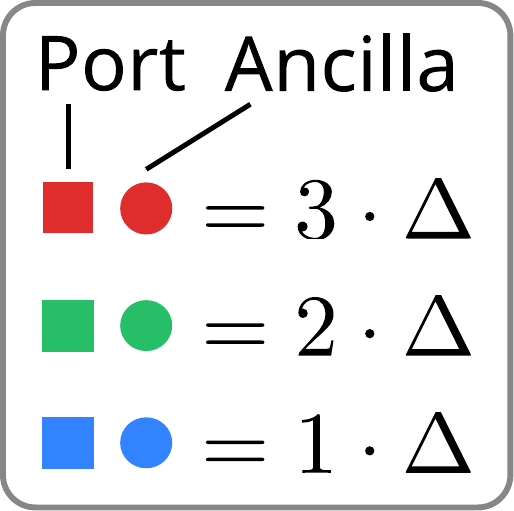}};
        \node[figure] at (11.3,18.3) {\includegraphics[width=3.5cm]{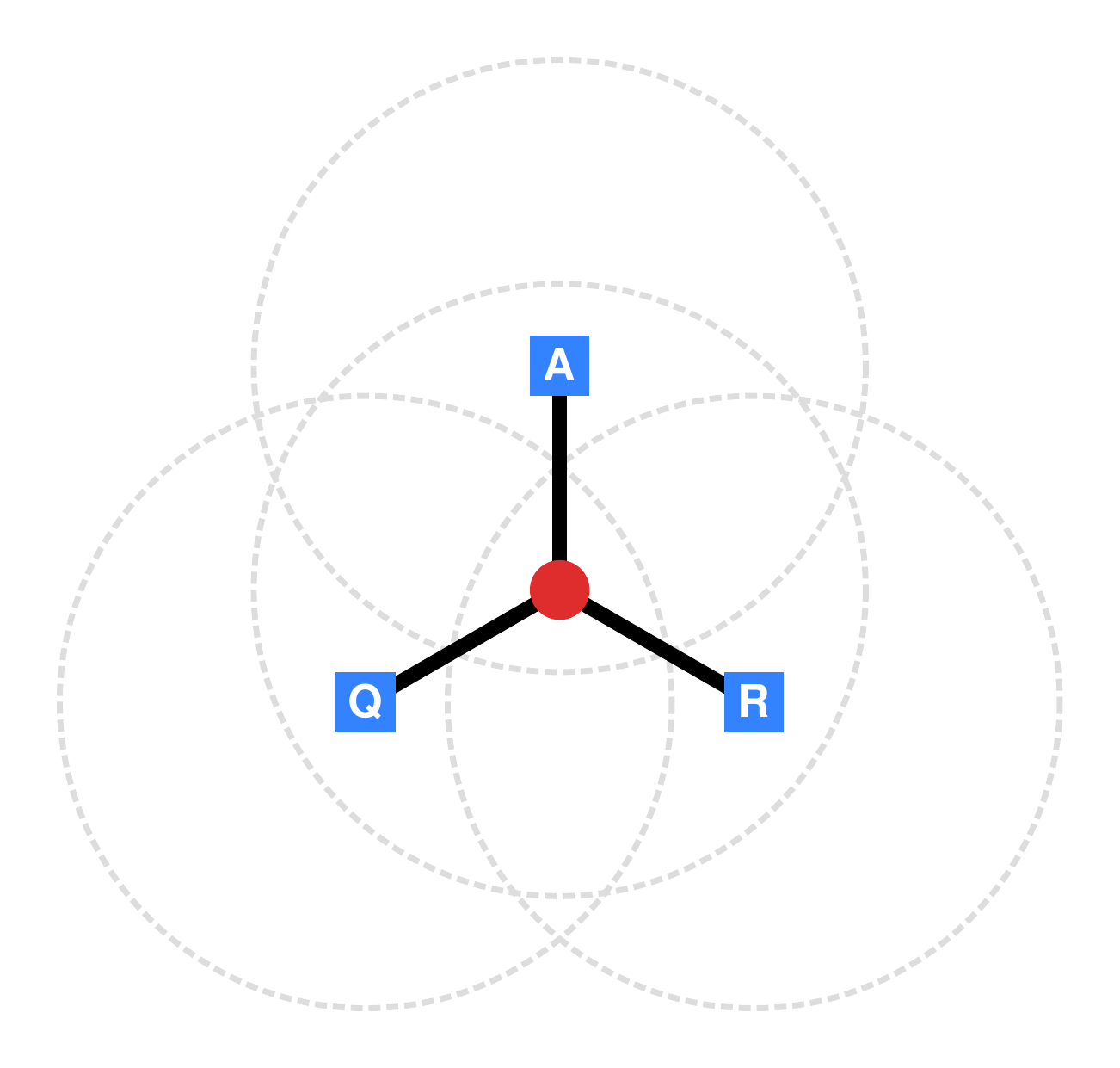}};
        \node[figure] at (14.0,19.3) {\includegraphics[width=3.2cm]{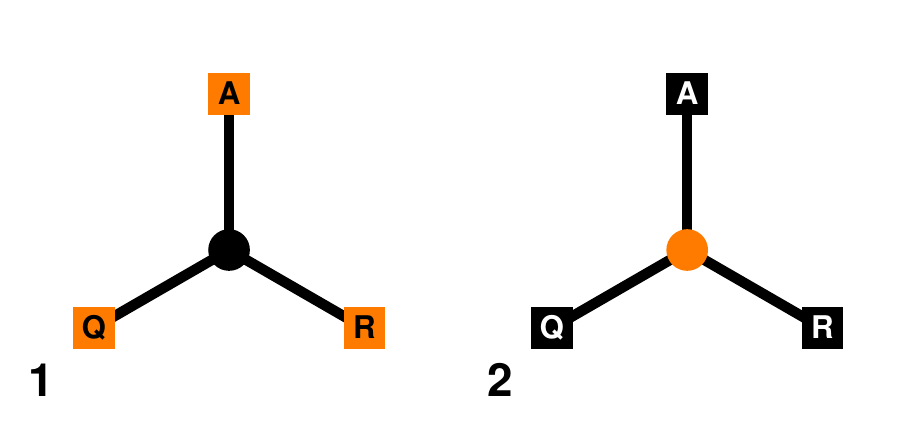}};
        \node[figure] at (14,17.8) {\includegraphics[width=1.2cm]{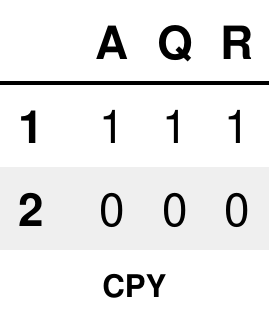}};
        \node[figure] at (12.2,20) {\includegraphics[width=1.5cm]{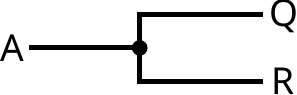}};
        \node[figtext,anchor=center] at (10.9,20.0) {\texttt{CPY}};
        \node[figure] at (6.5,18.7) {\includegraphics[width=3.5cm]{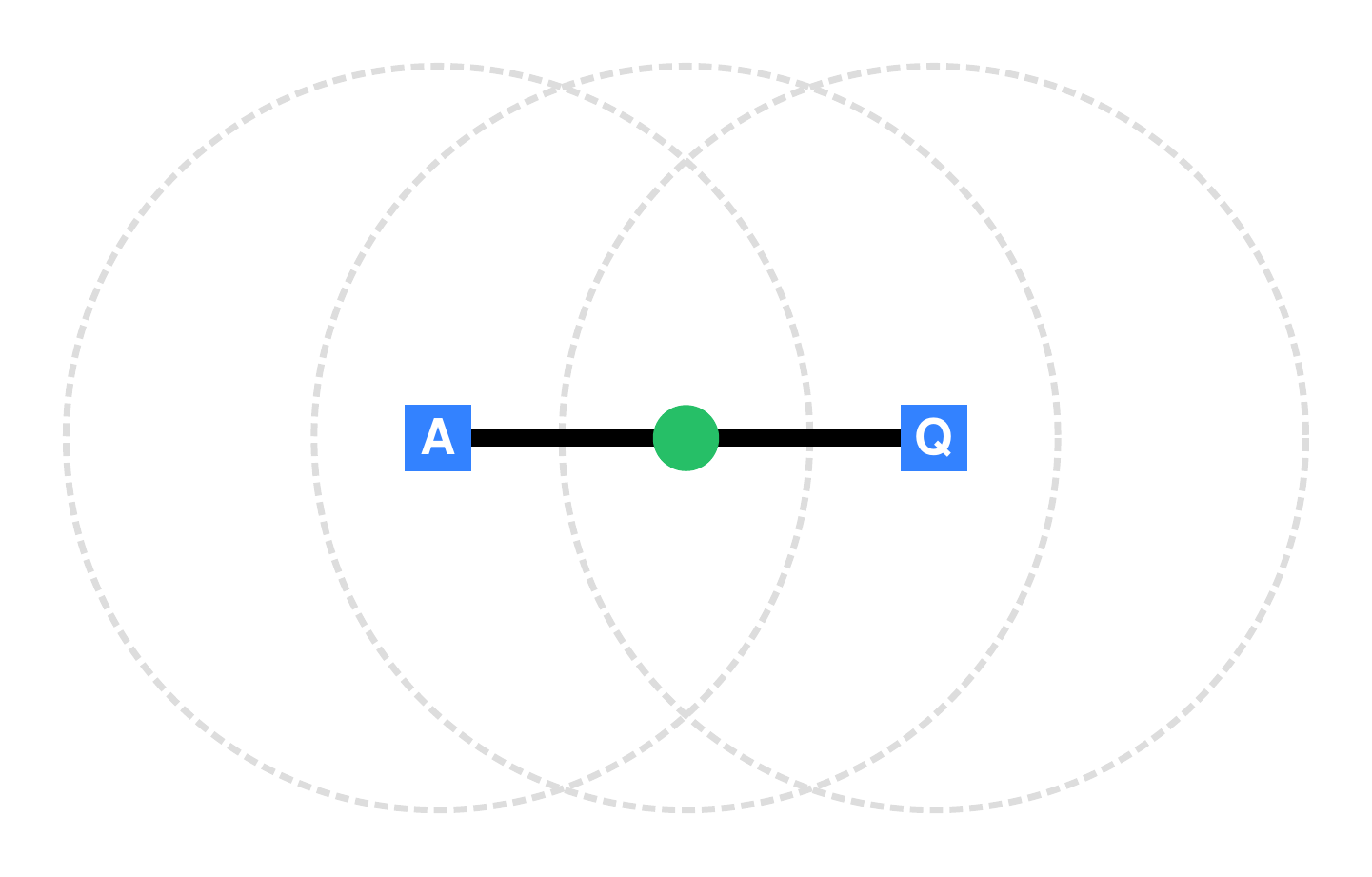}};
        \node[figure] at (6.5,17.6) {\includegraphics[width=3.5cm]{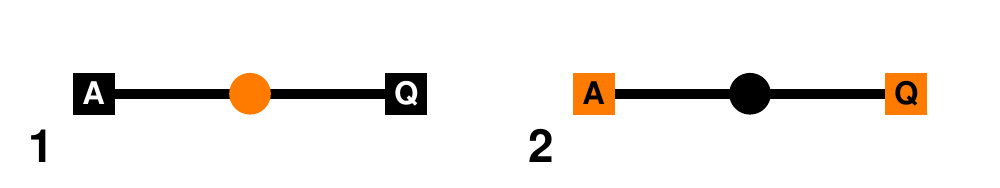}};
        \node[figure] at (8.7,18.7) {\includegraphics[width=0.9cm]{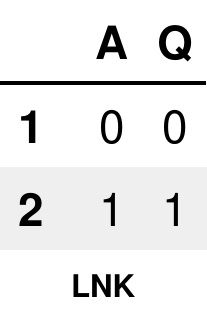}};
        \node[figure] at (7.4,20) {\includegraphics[width=1.5cm]{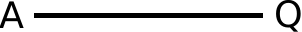}};
        \node[figtext,anchor=center] at (6.1,20.0) {\texttt{LNK}};
        \node[figure] at (1.5,18.7) {\includegraphics[width=3cm]{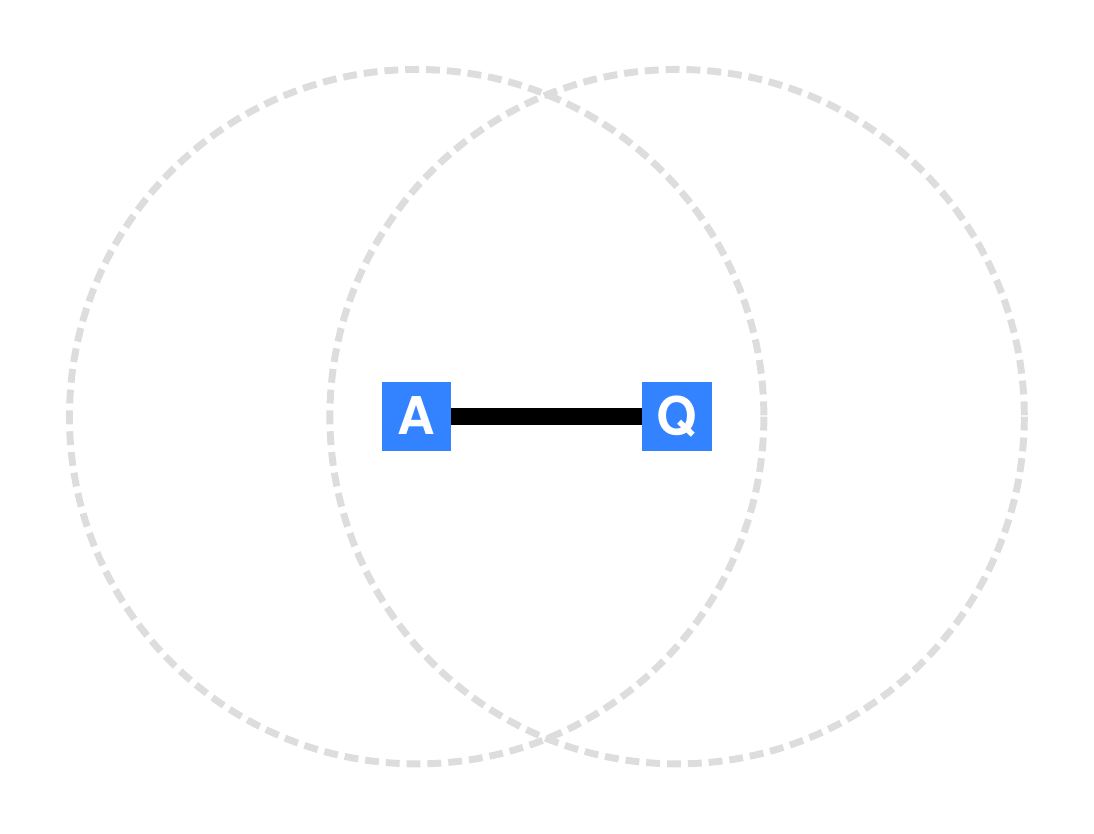}};
        \node[figure] at (1.5,17.6) {\includegraphics[width=2.5cm]{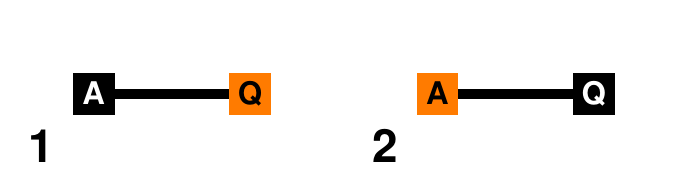}};
        \node[figure] at (3.5,18.7) {\includegraphics[width=0.9cm]{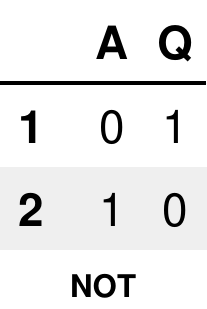}};
        \node[figure] at (2.4,20) {\includegraphics[width=1.5cm]{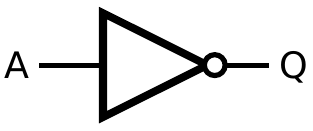}};
        \node[figtext,anchor=center] at (0.9,20.0) {\texttt{NOT ($\neg$)}};
        \node[figure] at (3,15.6) {\includegraphics[width=4cm]{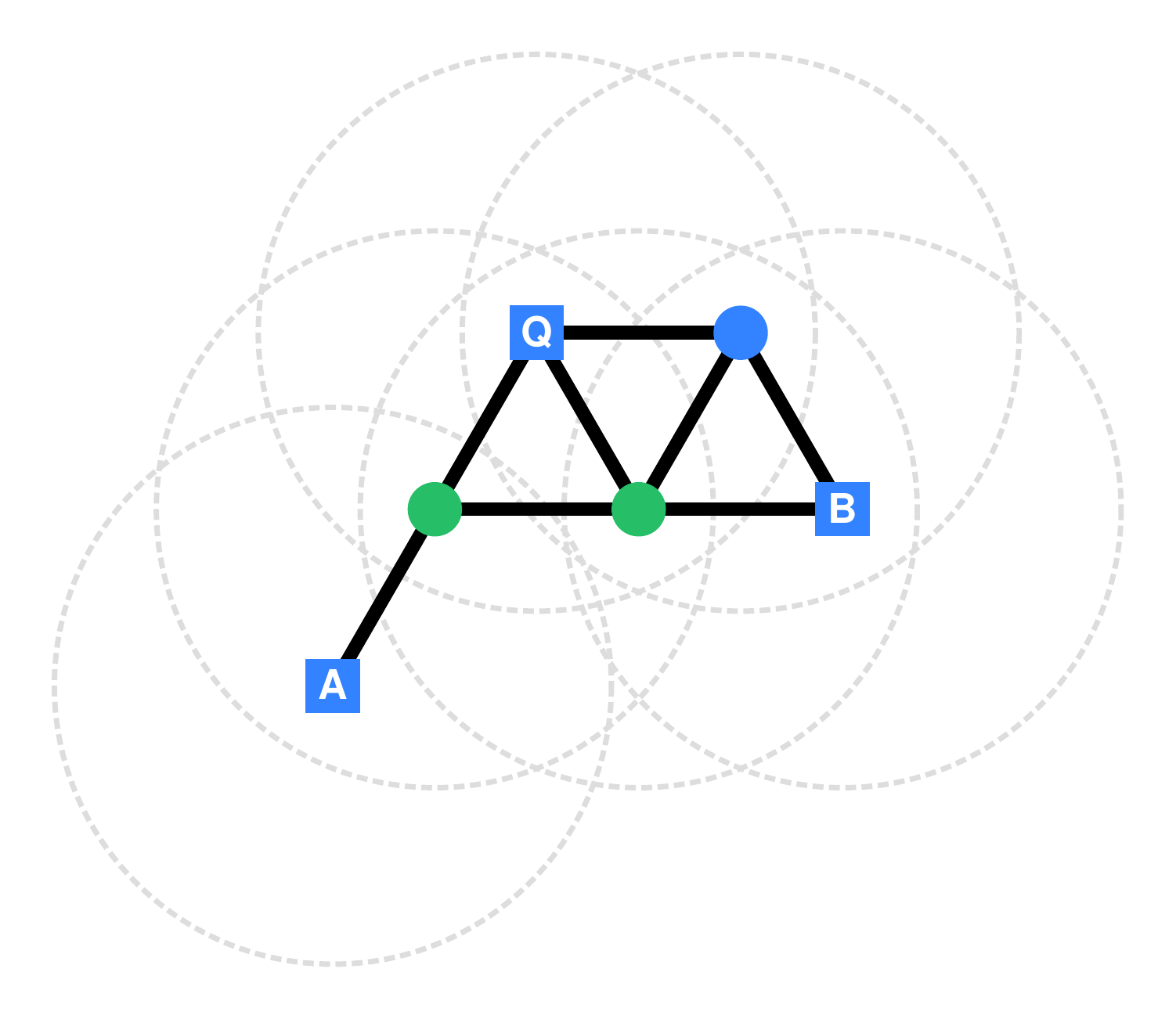}};
        \node[figure] at (9,15.6) {\includegraphics[width=7cm]{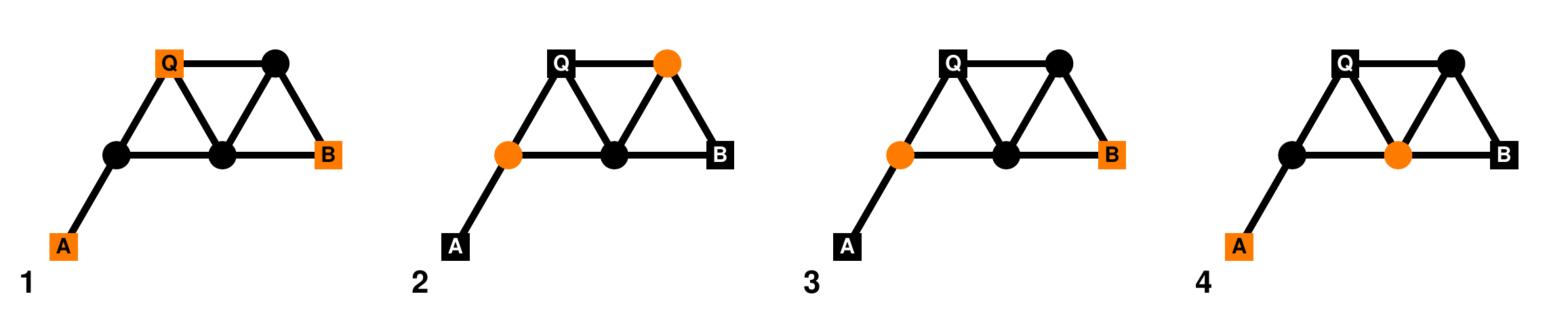}};
        \node[figure] at (14,15.6) {\includegraphics[width=1.2cm]{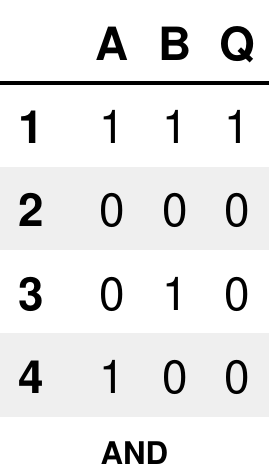}};
        \node[figure] at (0,15.3) {\includegraphics[width=1.5cm]{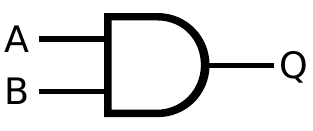}};
        \node[figtext,anchor=center] at (0,15.9) {\texttt{AND ($\wedge$)}};
        \node[figure] at (3,12.8) {\includegraphics[width=3.7cm]{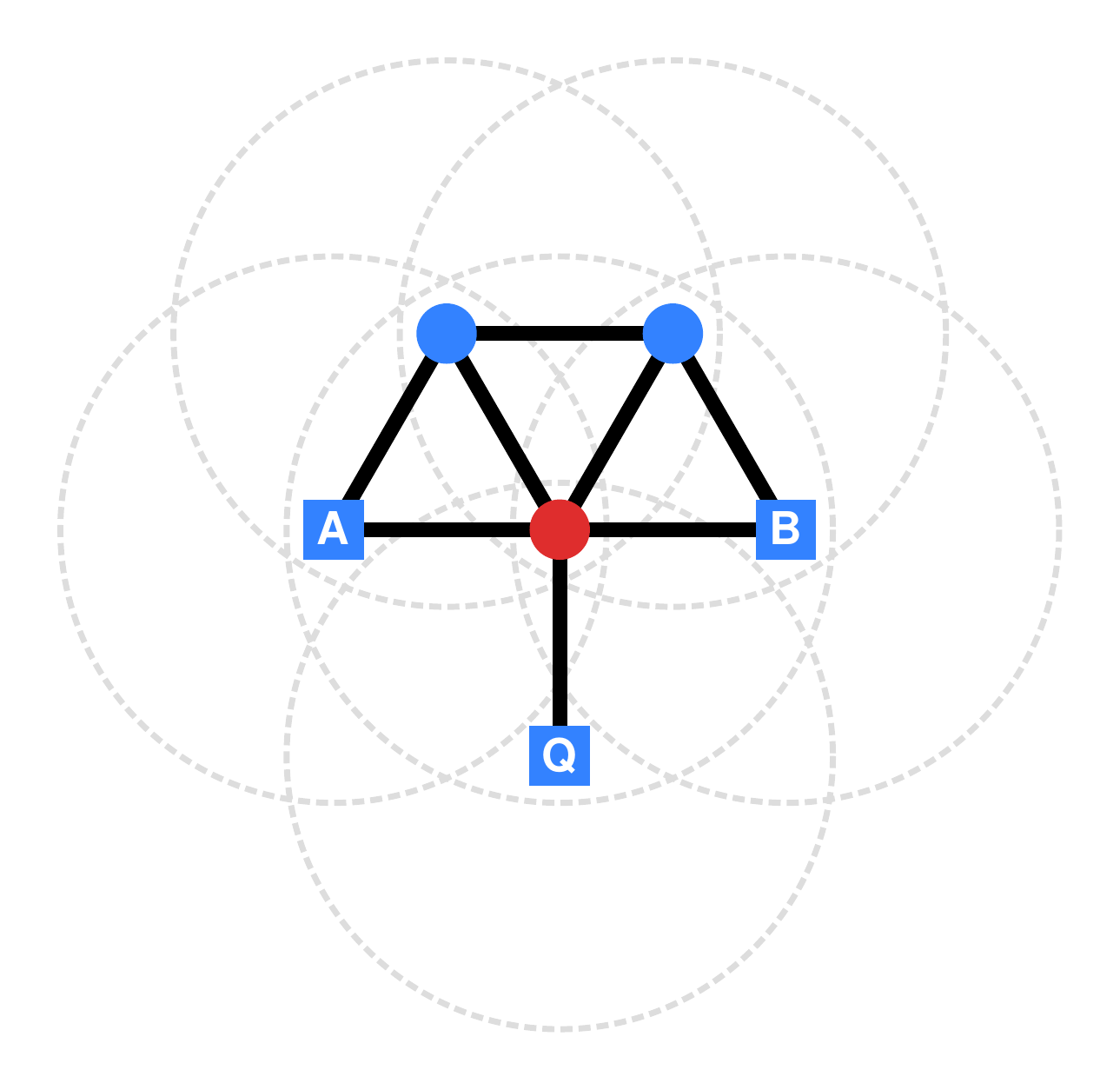}};
        \node[figure] at (9,12.8) {\includegraphics[width=7cm]{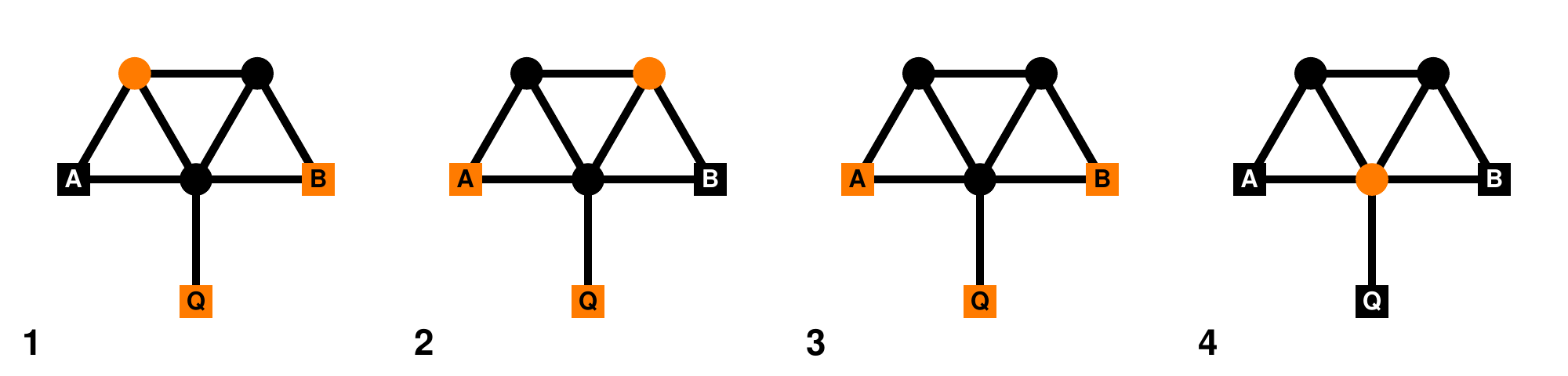}};
        \node[figure] at (14,12.8) {\includegraphics[width=1.2cm]{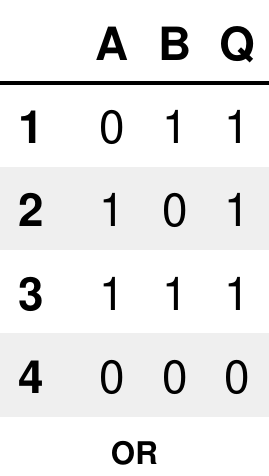}};
        \node[figure] at (0,12.5) {\includegraphics[width=1.5cm]{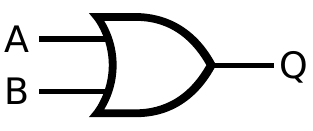}};
        \node[figtext,anchor=center] at (0,13.1) {\texttt{OR ($\vee$)}};
        \node[figure] at (3,10.1) {\includegraphics[width=3.8cm]{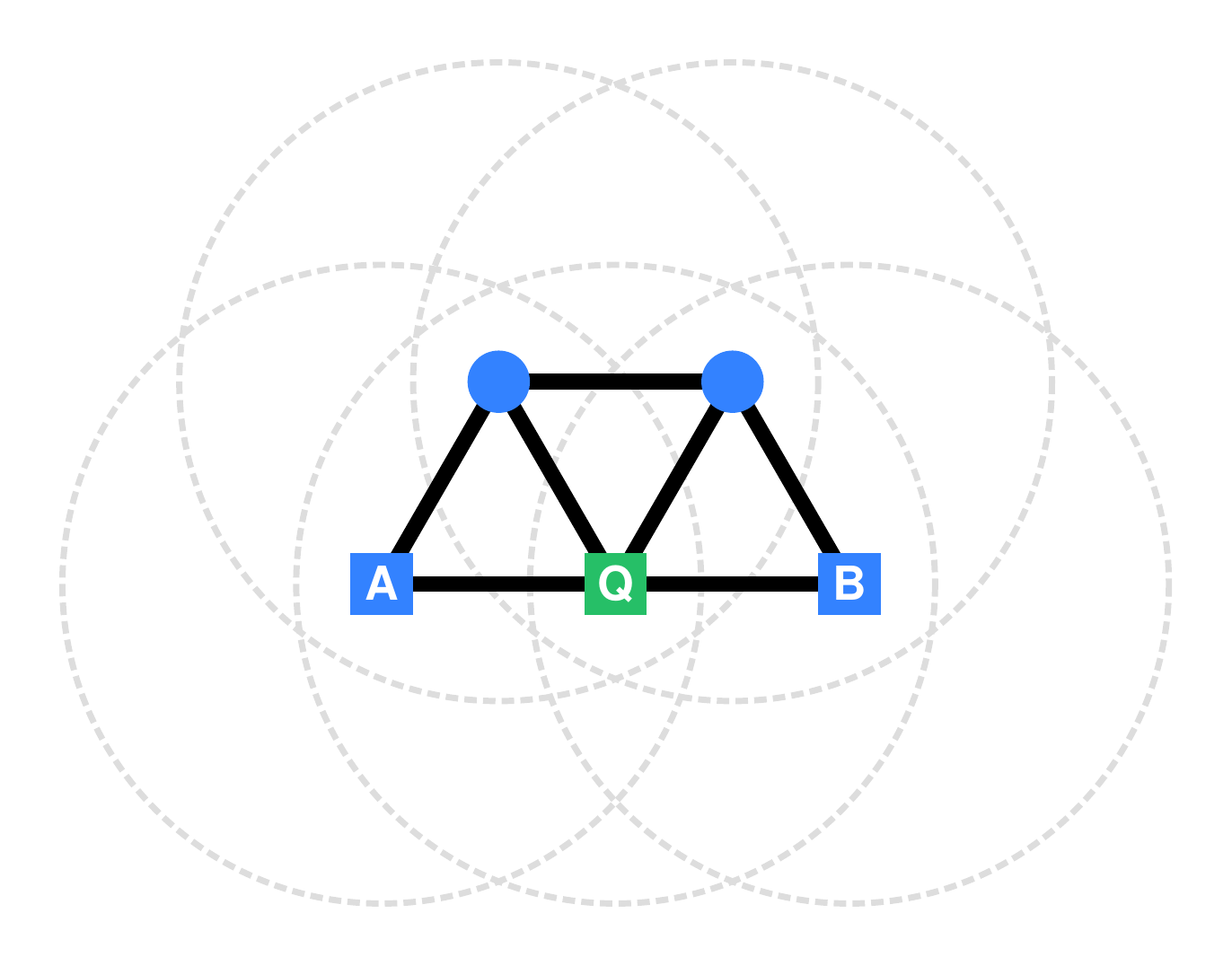}};
        \node[figure] at (9,10.1) {\includegraphics[width=7cm]{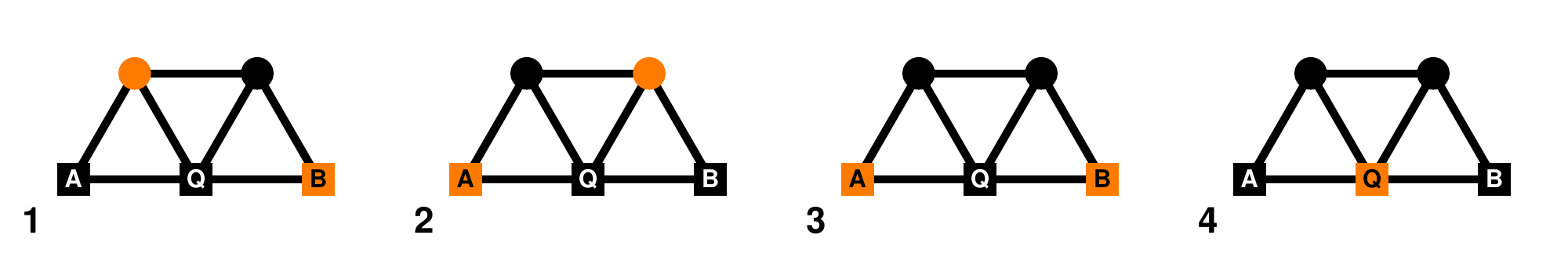}};
        \node[figure] at (14,10.1) {\includegraphics[width=1.2cm]{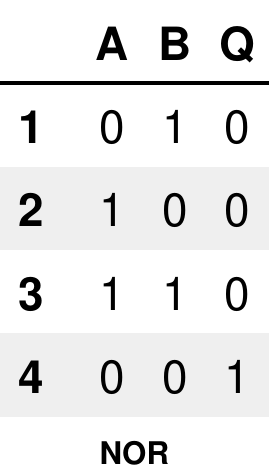}};
        \node[figure] at (0,9.8) {\includegraphics[width=1.5cm]{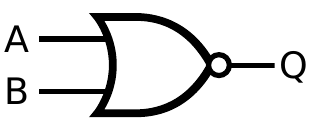}};
        \node[figtext,anchor=center] at (0,10.4) {\texttt{NOR ($\downarrow$)}};
        \node[figure] at (3,7.2) {\includegraphics[width=3.8cm]{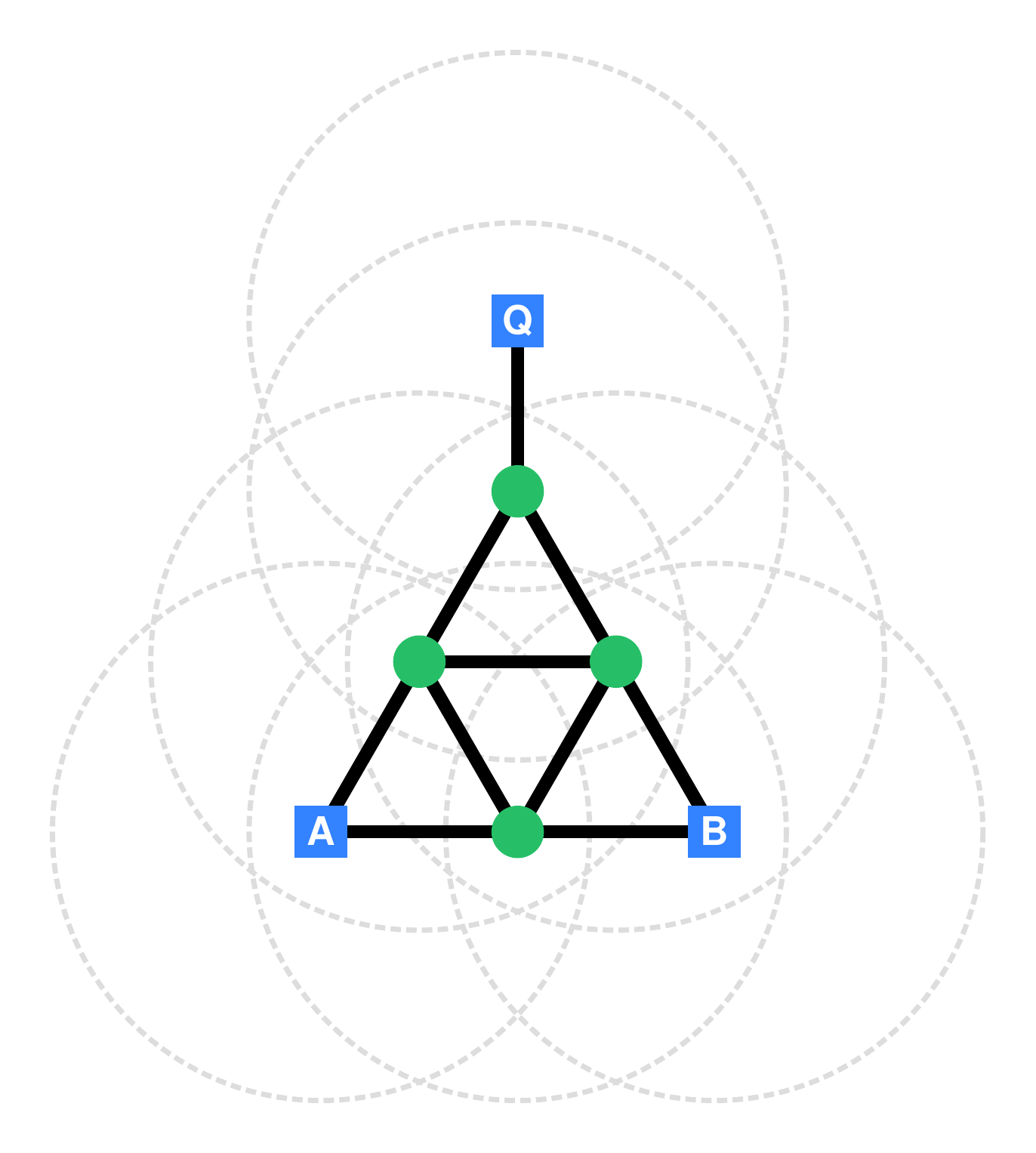}};
        \node[figure] at (9,7.2) {\includegraphics[width=7cm]{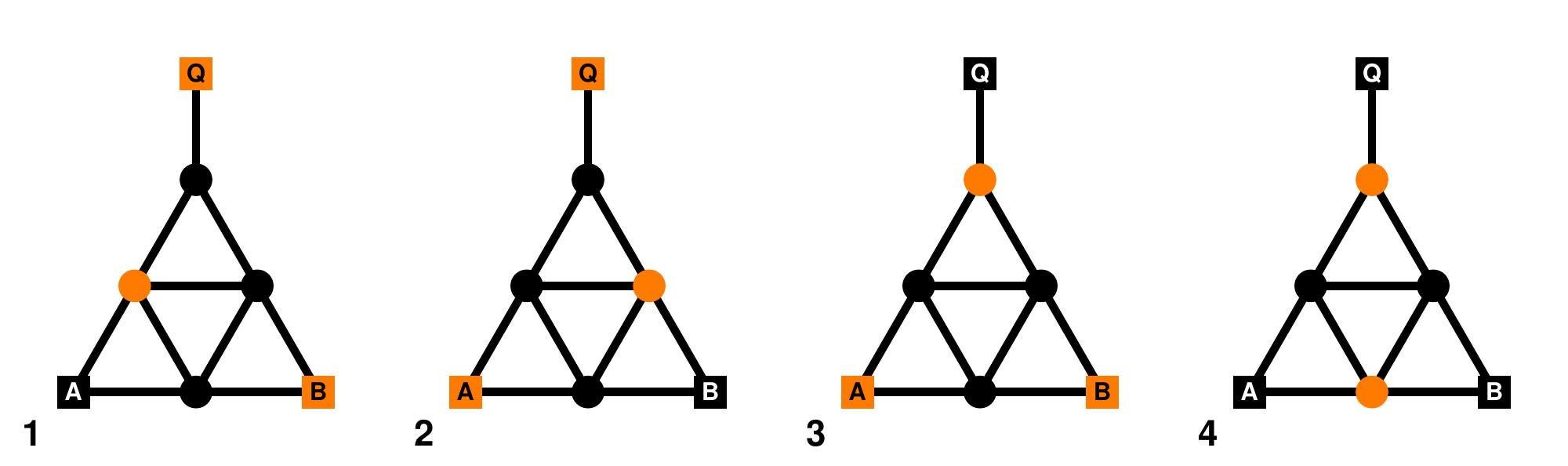}};
        \node[figure] at (14,7.2) {\includegraphics[width=1.2cm]{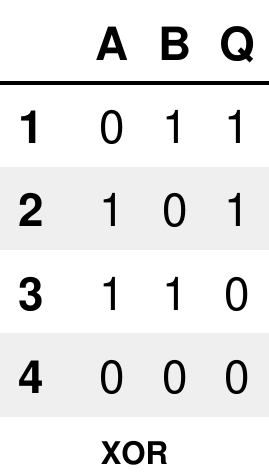}};
        \node[figure] at (0,6.9) {\includegraphics[width=1.5cm]{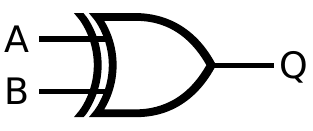}};
        \node[figtext,anchor=center] at (0,7.5) {\texttt{XOR ($\oplus$)}};
        \node[figure] at (3,3.9) {\includegraphics[width=4cm]{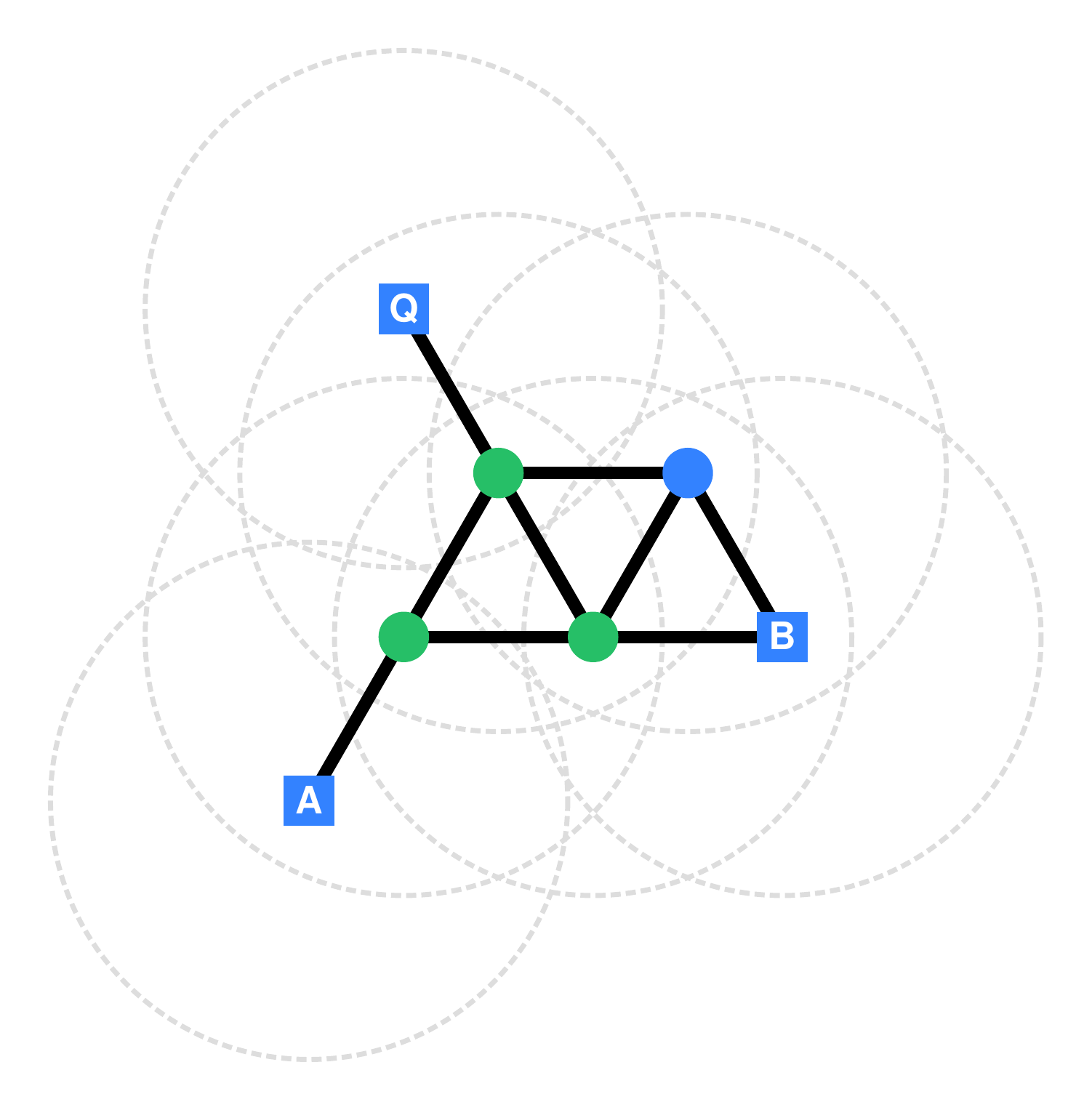}};
        \node[figure] at (9,3.9) {\includegraphics[width=7cm]{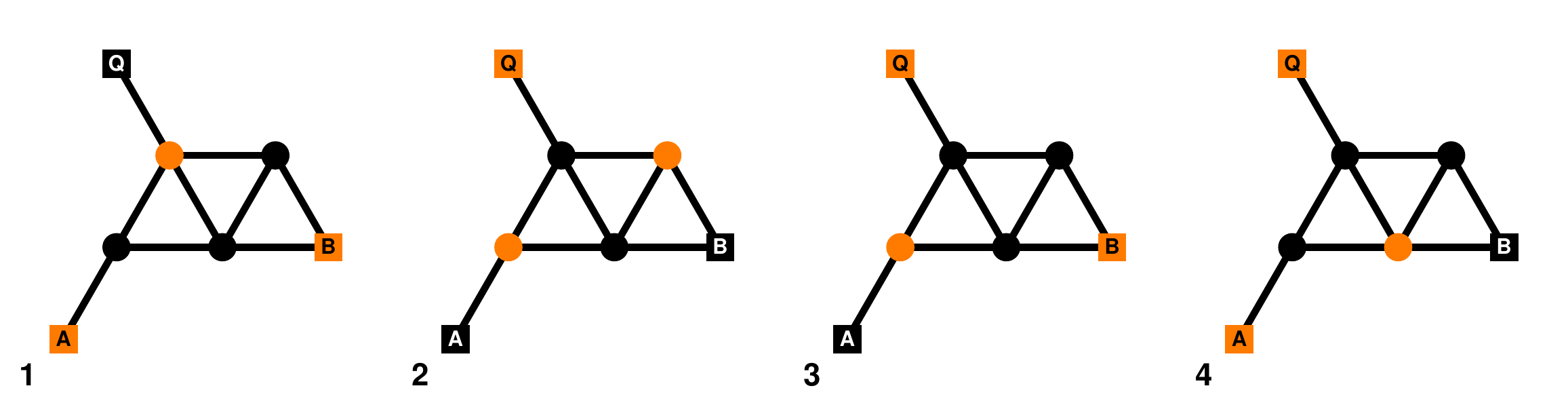}};
        \node[figure] at (14,3.9) {\includegraphics[width=1.2cm]{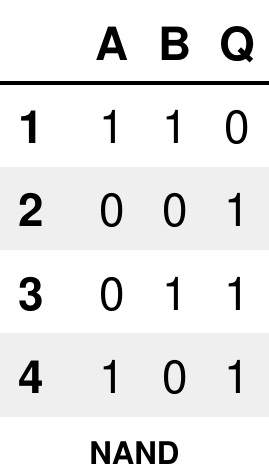}};
        \node[figure] at (0,3.6) {\includegraphics[width=1.5cm]{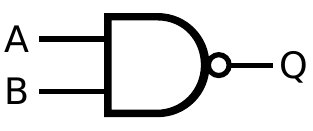}};
        \node[figtext,anchor=center] at (0,4.2) {\texttt{NAND ($\uparrow$)}};
        \node[figure] at (3,1) {\includegraphics[width=3.8cm]{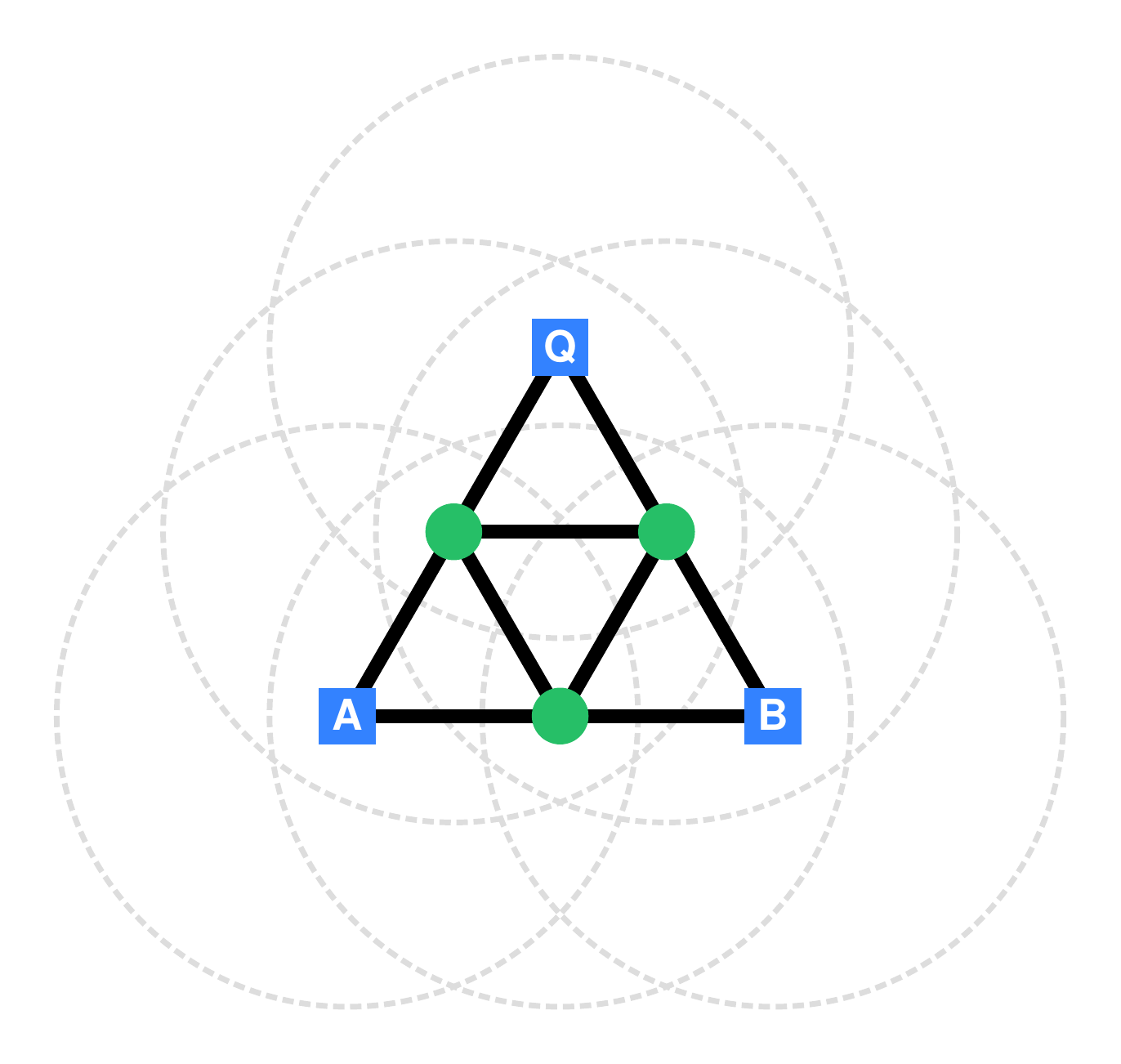}};
        \node[figure] at (9,1) {\includegraphics[width=7cm]{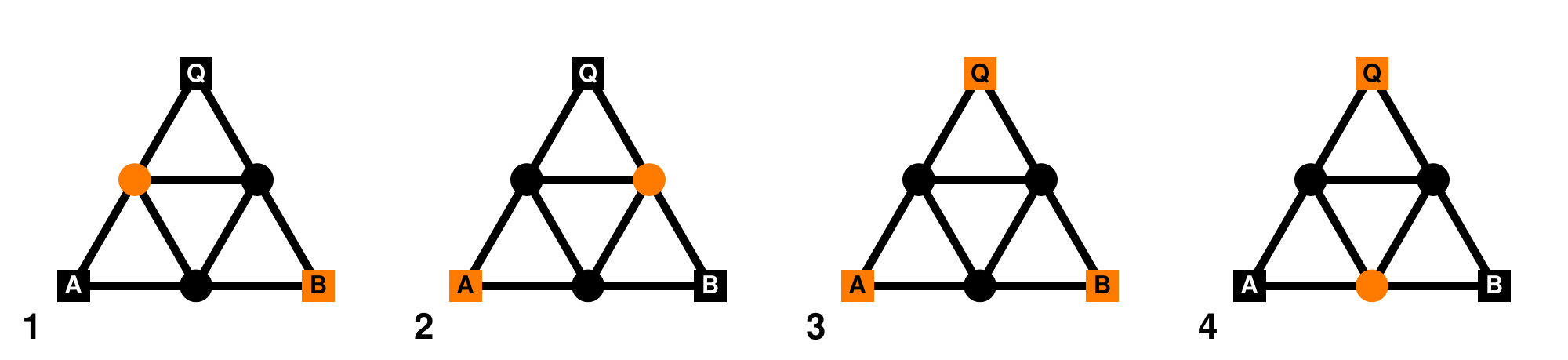}};
        \node[figure] at (14,1) {\includegraphics[width=1.2cm]{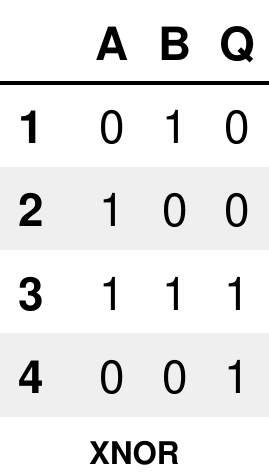}};
        \node[figure] at (0,0.7) {\includegraphics[width=1.5cm]{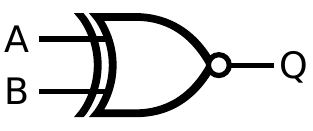}};
        \node[figtext,anchor=center] at (0,1.3) {\texttt{XNOR ($\odot$)}};
        %
    \end{tikzpicture}%
    }
    \caption{%
	\emph{Common logic primitives.}
    Rydberg complexes for the most common primitives of Boolean
    circuits. All complexes are provably minimal, see \cref{app:pxp}. Note
    that \emph{minimal} complexes are not necessarily \emph{unique};
    e.g.\ the shown \texttt{NOR}-gate is an alternative to the one in
    \cref{fig:pxp_primitives}c, both of which are minimal. For each complex
    we show (1) the geometry with blockade radii (gray dashed circles),
    (2) the complete ground state manifold (orange: $\ket{1}_i$, black:
    $\ket{0}_i$), and (3) the truth table of the ports (labeled atoms) in
    the ground state manifold. The rows of the truth tables correspond to
    the numbered ground state configurations. Colors of ancillas and ports
    in the geometry encode the detuning (see key). Atoms in blockade are
    connected by black solid lines.
    }
    \label{fig:logic_primitives}
\end{figure*}
\clearpage
\makeatletter\onecolumngrid@pop\makeatother



%
\noindent Of these gates, only \texttt{NOR} and \texttt{NAND} are universal
on their own. The following identities show that some of these gates are
simply inverted versions of others (we will use this below):
\begin{subequations}
    \label{eq:relations}
    \begin{align}
        A\wedge B&=\overline{A\uparrow B}\\
        A\vee B&=\overline{A\downarrow B}\label{eq:or}\\
        A\oplus B&=\overline{A\odot B}\,.
    \end{align}
\end{subequations}

Of the gates $\{\neg,\vee,\wedge,\uparrow,\downarrow,\oplus,\odot\}$, we
already know minimal complexes for \texttt{NOT} (2 atoms) and \texttt{NOR}
(5 atoms), recall \cref{sec:completeness}.
Using \cref{eq:or}, we can immediately construct an \texttt{OR}-complex with
six atoms by amalgamation of a \texttt{NOT}-complex to the output port of a
\texttt{NOR}-complex (remember \cref{fig:nutshell}). However, it is unclear
whether this complex is \emph{minimal}, i.e., cannot be realized with fewer
atoms. Therefore we systematically devised proofs that a given truth table
\emph{cannot} be realized with a given number $N$ of atoms, starting at $N=3$
for each gate, and increasing the number incrementally until the proof fails,
i.e., realizations can no longer be excluded.
These arguments are quite technical and can be found in \cref{app:pxp}.
However, this approach has two benefits: First, it provides rigorous lower
bounds on how many atoms are needed to realize a given gate, and second, it
often provides a blueprint for the construction of a minimal complex that
saturates this bound by carefully observing \emph{why} one cannot exclude
realizations with a given number of atoms.

To complement this rigorous approach, we conducted a brute force search on a
computer that exhaustively scans for (small) complexes that realize a given
truth table. In accordance with our proofs, we found solutions with the minimal
atom number for a given truth table (in addition, we also found non-minimal
complexes). Interestingly, there were alternative minimal solutions that we
missed in our manual approach; so minimal complexes are not necessarily unique.

A selection of provably minimal complexes for all important Boolean
primitives is shown in \cref{fig:logic_primitives} (for the sake
of completeness, we include the \texttt{NOT}-, \texttt{LNK}- and
\texttt{CPY}-complexes discussed in \cref{sec:completeness}). There
are a few comments in order. First, an example of non-unique minimal
complexes is the depicted \texttt{NOR}-complex built from five atoms
arranged in a triangular structure (cf.\ the ring-shaped structure in
\cref{fig:pxp_primitives}c). Second, the six-atom \texttt{OR}-complex
we proposed above indeed is minimal, though not unique either. Third,
the selection of minimal complexes in \cref{fig:logic_primitives} for
$\{\vee,\wedge,\uparrow,\downarrow,\oplus,\odot\}$ all build around the
triangle-based core of the \texttt{NOR}-complex, once again emphasizing
its central role in the context of Rydberg complexes. Finally, it turns
out that the relations \eref{eq:relations} are all reflected in the
minimal complexes, e.g., the amalgamation of a \texttt{NOT}-complex and
a \texttt{XNOR}-complex yields a minimal \texttt{XOR}-complex; similar
constructions hold for \texttt{NAND} and \texttt{AND} as well as \texttt{NOR}
and \texttt{OR}. If we recall the relation between \texttt{NOT} and the minimal
\texttt{LNK}-complex, the general picture emerges that inverting complexes
are simpler (by one atom) than non-inverting ones. This is understandable
in so far as inversion is the most basic operation the Rydberg blockade is
capable of, thus leading to the simplest complexes. This is in contrast to
the notation for Boolean circuits known from electrical engineering where
inverting gates are represented by more complicated symbols than their
non-inverting counterparts (\cref{fig:logic_primitives}).

\section{Crossing}
\label{sec:crossing}

\begin{figure*}[tb]
    \centering
    \begin{tikzpicture}
        \node[label] at (0,4.2) {\lbl{a}};
        \node[figure] at (2.0,2.2) {\includegraphics[width=4.5cm]{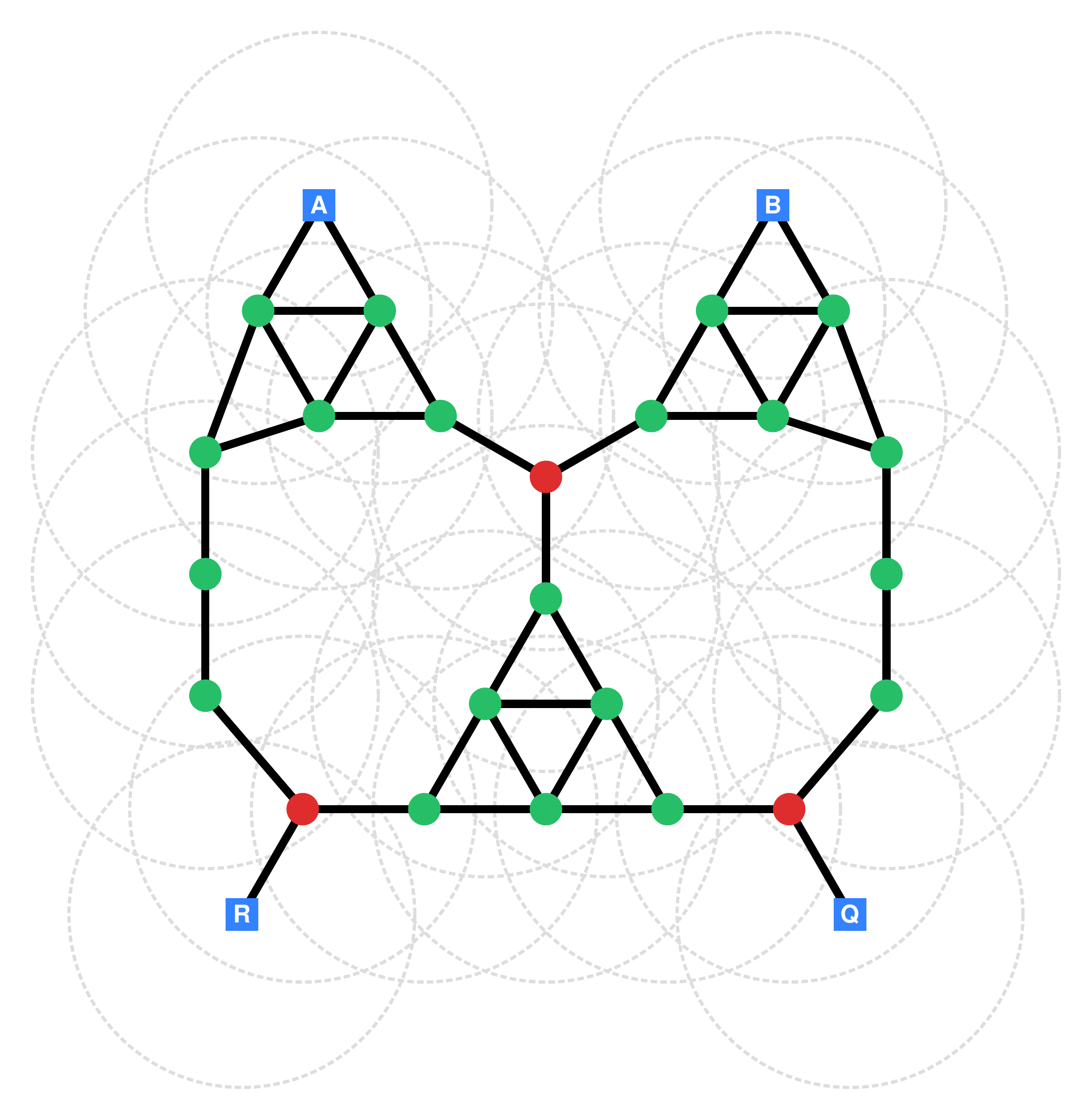}};
        \node[label] at (4.8,4.2) {\lbl{b}};
        \node[figtext,anchor=center] at (7.7,4) {\texttt{CRS}};
        \node[figure] at (9.0,4) {\includegraphics[width=1.5cm]{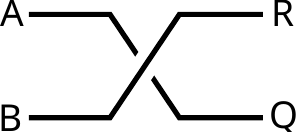}};
        \node[figure] at (6.3,2.6) {\includegraphics[width=4cm]{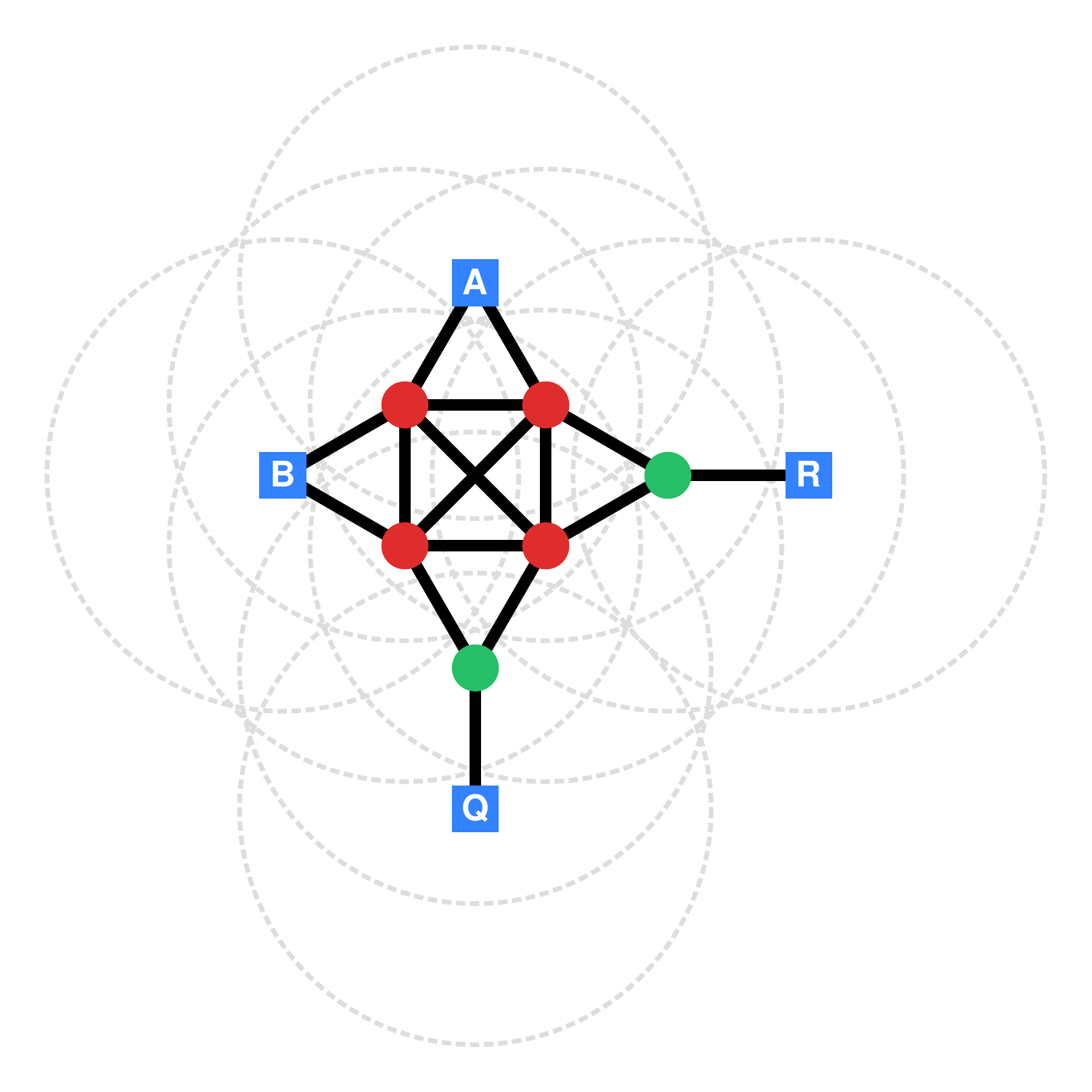}};
        \node[figure] at (7.1,0.2) {\includegraphics[width=5.2cm]{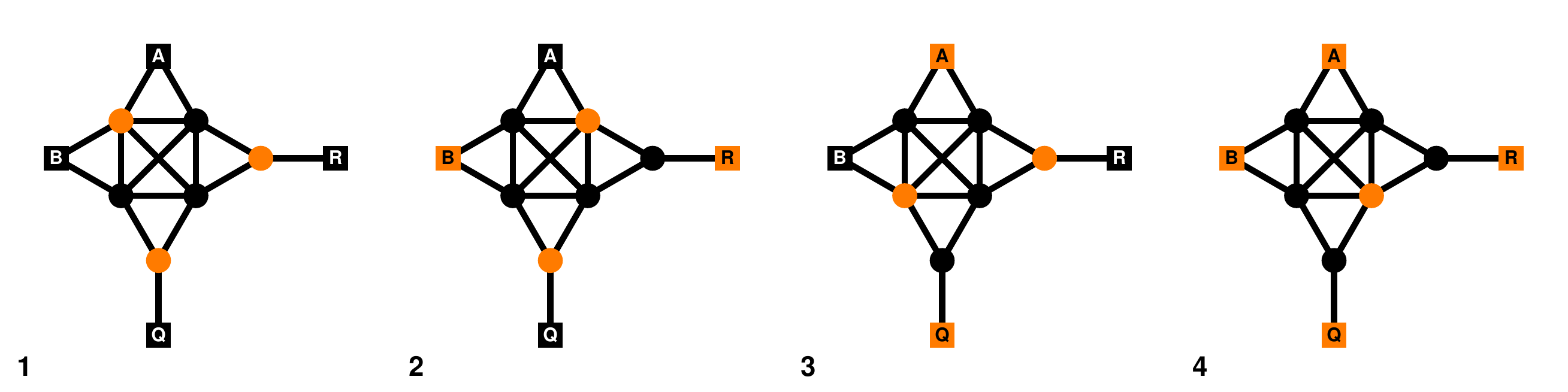}};
        \node[figure] at (9,2.3) {\includegraphics[width=1.2cm]{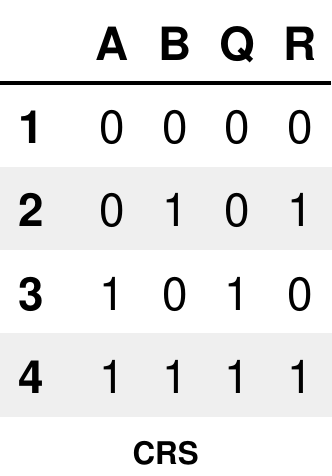}};
        \node[label] at (10.8,4.2) {\lbl{c}};
        \node[figtext,anchor=center] at (13.7,4) {\texttt{ICRS}};
        \node[figure] at (15.0,4) {\includegraphics[width=1.5cm]{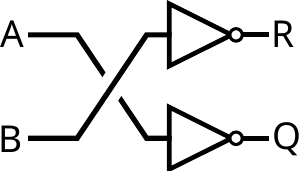}};
        \node[figure] at (12.3,2.4) {\includegraphics[width=3.8cm]{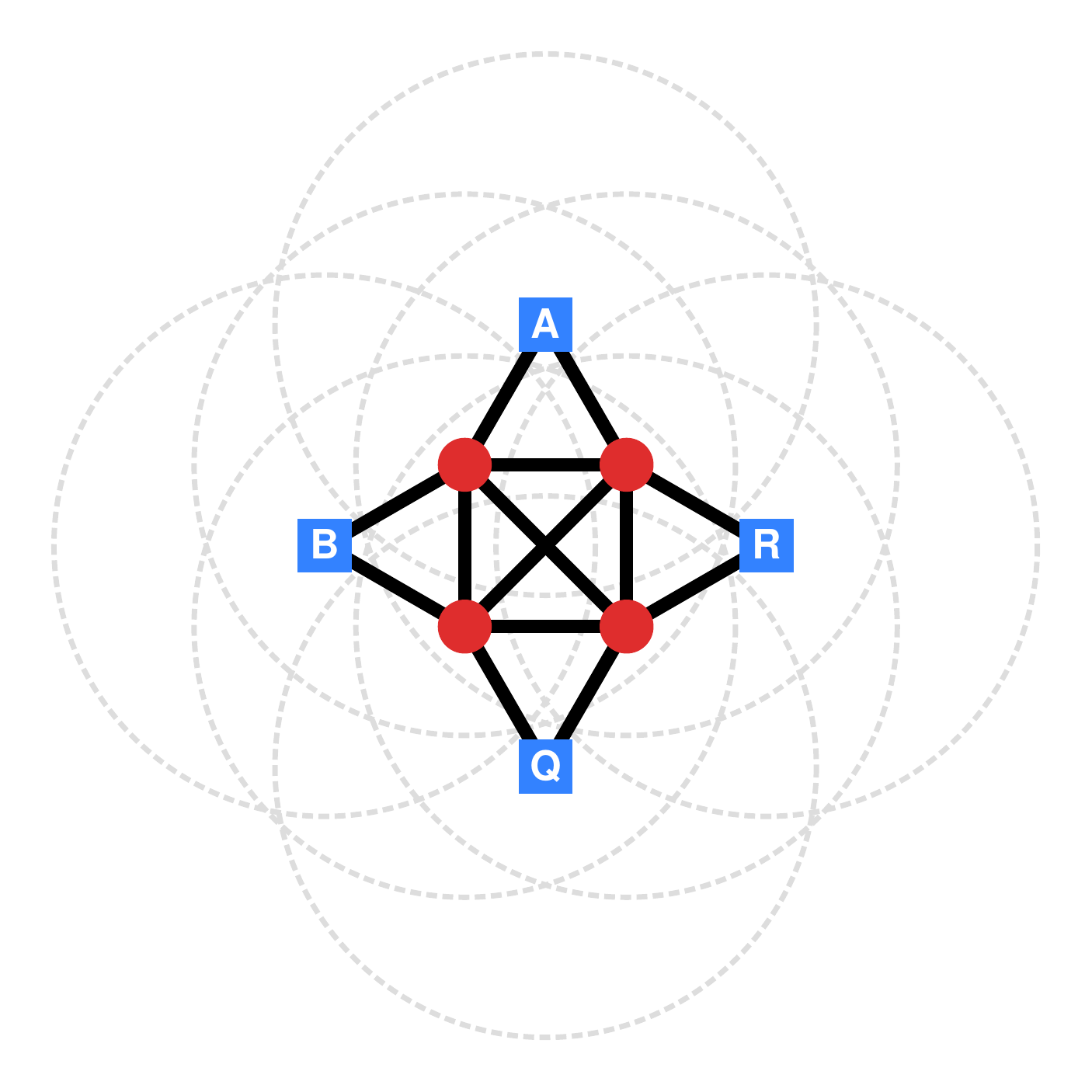}};
        \node[figure] at (13.3,0.2) {\includegraphics[width=5cm]{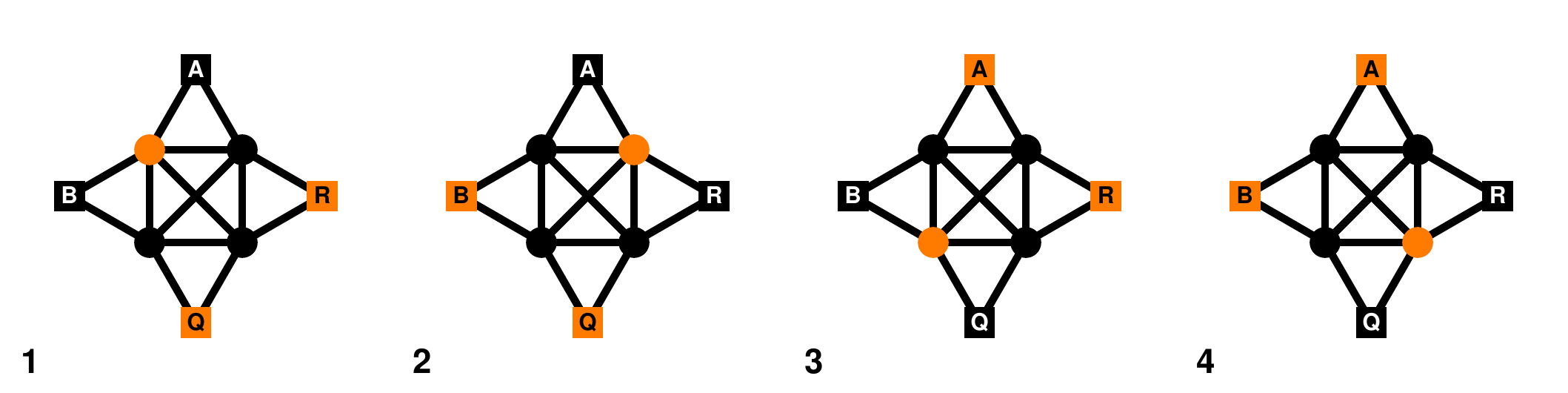}};
        \node[figure] at (15,2.3) {\includegraphics[width=1.2cm]{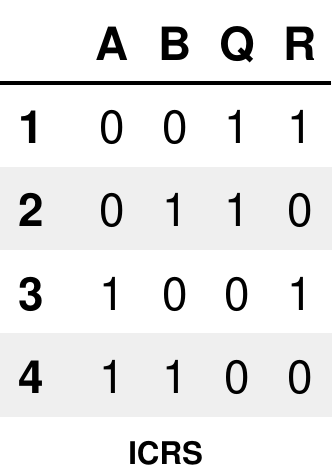}};
        \node[figure] at (2.0,0.0) {\includegraphics[width=1.5cm]{legend3.pdf}};
        %
    \end{tikzpicture}
    \caption{%
	\emph{Crossing.} 
    (a) The crossing constructed from the Boolean circuit crossing
    based on \texttt{XNOR}-gates (see Ref.~\cite{Dewdney_1979} and
    \cref{fig:logic_primitives}); it is an amalgamation of \texttt{LNK}-,
    \texttt{CPY}-, and \texttt{XNOR}-complexes. The ground state manifold
    (not shown) is 4-fold degenerate and ensures $A=Q$ and $B=R$. The complex
    requires $\sim 27$ atoms and is therefore of no practical relevance.
    (b) By contrast, the minimal crossing $\C_\texttt{CRS}$ requires only
    10 atoms; it was constructed by systematically excluding functionally
    equivalent complexes with fewer atoms. The shown data is explained in
    the caption of \cref{fig:logic_primitives}.
    (c) The minimal \emph{inverted} crossing $\C_\texttt{ICRS}$ is smaller
    than the non-inverted crossing and requires only eight atoms. To construct
    $\C_\texttt{CRS}$ from $\C_\texttt{ICRS}$, two \texttt{NOT}-complexes
    must be amalgamated to adjacent ports. This is a recurring scheme due
    to the inverting nature of the Rydberg blockade.
    }
    \label{fig:crossing}
\end{figure*}

The crossing complex realizes the somewhat surprising feature of
intersecting information channels in a strictly two-dimensional
setup of strongly interacting information carriers (recall Step~2
in \cref{sec:completeness}). The possibility to realize such a planar
crossing in a circuit with the three primitives \texttt{LNK}, \texttt{CPY}
and \texttt{NOR} was crucial for the proof of \cref{thm:1}. Note that
the \emph{existence} of such a complex followed immediately from the
existence of the three aforementioned complexes and the well-known fact
that Boolean circuits can be made planar~\cite{Dewdney_1979}. However, just
as for the Boolean gates in \cref{sec:primitives}, the \texttt{NOR}-based
implementation of the circuit crossing in Ref.~\cite{Dewdney_1979} is of low
practical value as it requires seven \texttt{NOR}-gates (if we implement
\texttt{NOT}-gates directly, \cref{fig:decomposition}d); even a simpler
crossing based on only three minimal \texttt{XNOR}-gates requires $\sim 27$
atoms, see \cref{fig:crossing}a. Thus we are again tasked with finding a
minimal complex that realizes the same function.

By systematically excluding the existence of crossing complexes for
$N=4,\dots,9$ atoms, we finally find the minimal complex $\C_\texttt{CRS}$
depicted in \cref{fig:crossing}b comprising 10 atoms. The proof for
its minimality is very technical and more complicated than for the logic
primitives because geometric constraints must be taken into account for the
crossing~\cite{Stastny2023}.
The structure with two dangling ports (Q and R) immediately suggests the
\textit{inverted} crossing $\C_\texttt{ICRS}$ in \cref{fig:crossing}c with
eight atoms, i.e., a complex that allows two signals to pass each other
while inverting both at the same time.
The minimality of the inverted crossing complex $\C_\texttt{ICRS}$ with eight
atoms follows as a corollary from the minimality of the non-inverted crossing
$\C_\texttt{CRS}$ with 10 atoms as the latter can be obtained from the former
by amalgamation of two \texttt{NOT}-complexes (thereby adding two atoms). In
line with our comment at the end of the previous \cref{sec:primitives},
the inverted variant of the crossing is smaller than its non-inverted
counterpart. We note that the inverted crossing $\C_\texttt{ICRS}$ has also
been described in Ref.~\cite{Nguyen2022} were it plays an important role in
mapping non-planar optimization problems to planar Rydberg structures.

\section{Examples: Spin Liquid Primitives}
\label{sec:examples}

In this part, we focus on our motivation outlined in the introduction,
namely the implementation of tessellated target Hilbert spaces of systems
that are characterized by local gauge constraints. We discuss two models
exemplarily: the surface code with Abelian $\mathbb{Z}_2$ topological order
and the non-Abelian Fibonacci model. For the surface code, we will be able
to utilize the Boolean primitives discussed in \cref{sec:primitives}; by
contrast, for the Fibonacci model such a reduction will not be useful.

\subsection{Surface code}
\label{subsec:toric}

The \emph{toric code}~\cite{Kitaev2003} is the prime example for a spin
liquid in two dimensions with long-range entangled ground states that do
not break any symmetries but instead feature \emph{topological order}. The
toric code is referred to as \emph{surface code} if realized on surfaces
with boundaries~\cite{Bravyi1998}; we will stick to this name in the
following. The surface code describes a gapped phase with $\mathbb{Z}_2$
topological order that is described by the mechanism of string-net
condensation~\cite{Levin2005}. It allows for localized excitations that
are Abelian anyons~\cite{Kitaev2006} which, in turn, leads to ground state
degeneracies on topologically non-trivial surfaces (including flat surfaces
with non-trivial boundaries). As a consequence, surface codes are promising
candidates for quantum memories that encode logical qubits reliably into
delocalized degrees of freedom~\cite{Dennis2002}. This makes the implementation
of systems with this kind of topological order interesting both from an
academic and an applied perspective~\cite{Barends2014,Kelly2015,Semeghini2021}.

Here we consider the surface code on a finite square lattice with ``rough''
boundaries (like the gray background lattice in \cref{fig:toric}d); ``rough''
boundaries are terminated by dangling edges that attach to quadrivalent
vertices. The Hamiltonian
\begin{equation}
    H=-J_A\sum_{\text{Sites}\,s}A_s-J_B\sum_{\text{Faces}\,p}B_p    
    \label{eq:toric_H}
\end{equation}
operates on qubits that live on the edges $e$ of the square lattice. The
operators
\begin{equation}
    A_s=\prod_{e\in s}\sigma^z_e
    \quad\text{and}\quad
    B_p=\prod_{e\in p}\sigma^x_e
    \label{eq:toric_AB}
\end{equation}
are referred to as \emph{star} and \emph{plaquette} operators, respectively.
Here, $e\in s$ denotes edges that emanate from site $s$ and $e\in p$
denotes sites that bound face $p$; $\sigma_e^\alpha$ are Pauli matrices for
$\alpha=x,y,z$ acting on the qubit on edge $e$. Since $\com{A_s}{B_p}=0$,
the Hamiltonian \eref{eq:toric_H} is frustration-free and its ground state
$\ket{G}$ is characterized by $A_s\ket{G}=B_p\ket{G}=\ket{G}$ for all sites
$s$ and faces $p$ (assuming $J_A,J_B>0$). Due to the uniform ``rough''
boundaries there is no ground state degeneracy and $\ket{G}$ is unique.

The construction of $\ket{G}$ is straightforward: To satisfy the
constraint $A_s\ket{G}=\ket{G}$ on sites $s$, one can choose the product
state $\ket{\vec{0}}$ with $\sigma_e^z\ket{\vec{0}}=\ket{\vec{0}}$ for all
edges. This state does \emph{not} satisfy the constraint $B_p\ket{G}=\ket{G}$
on faces, though. To fix this, one defines the multiplicative group
$\mathcal{B}=\langle\{B_p\,|\,\text{Faces}\,p\}\rangle$ generated by all
plaquette operators (note that $B_p^2=\id$), and constructs the superposition
\begin{align}
    \ket{G}\propto\sum_{C\in\mathcal{B}}C\ket{\vec{0}}\,.
    \label{eq:toric_gs}
\end{align}
The state $\ket{G}$ is invariant under any $B_p$ by construction since
$\mathcal{B}$ is left-invariant under any $B_p$ by definition. Furthermore,
since $\com{A_s}{B_p}=0$, the site-constraint $A_s\ket{G}=\ket{G}$ is still
satisfied. Thus \cref{eq:toric_gs} describes, up to normalization, the unique
ground state of \cref{eq:toric_H}.

The states $\ket{\vec{C}}\equiv C\ket{\vec{0}}$ have a peculiar structure:
each $C$ can be described as a collection of closed loops on the lattice where
the $\sigma_e^x$ of products of $B_p$ operators act (loops that terminate on
dangling edges at the boundary are considered closed); this loop structure
is then imprinted on $\ket{\vec{0}}$ so that $\ket{\vec{C}}$ is a product
state with a loop pattern $\vec{C}$ of flipped qubits $\ket{1}$. The ground
state \cref{eq:toric_gs} is therefore given by the equal-weight superposition
of all closed loop configurations on the square lattice---which makes it an
example of a \emph{string-net condensate}~\cite{Levin2005} with a non-trivial
pattern of long-range entanglement~\cite{Kitaev2006a,Levin2006}.

\begin{figure*}[t]
    \centering
    \begin{tikzpicture}
        \node[label] at (0,5.5) {\lbl{a}};
        \node[figure,anchor=north west] at (-0.4,5.8) {%
            \includegraphics[width=5.5cm]{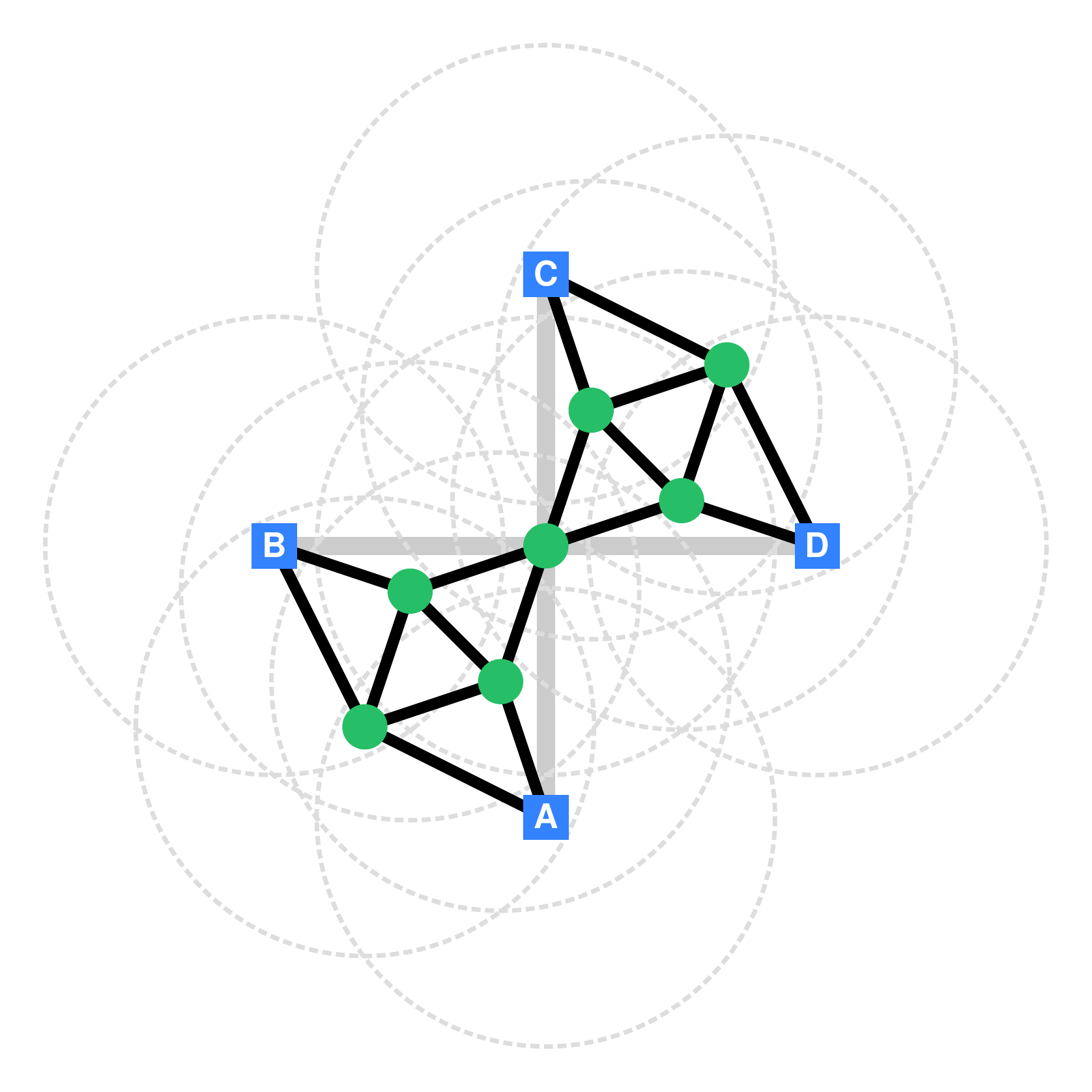}};
        \node[figtext,anchor=center] at (0.8,4.5) {$\C_\texttt{SCU}$};
        \node[label] at (0,0) {\lbl{c}};
        \node[figure,anchor=north west] at (0,0) {%
            \includegraphics[width=15cm]{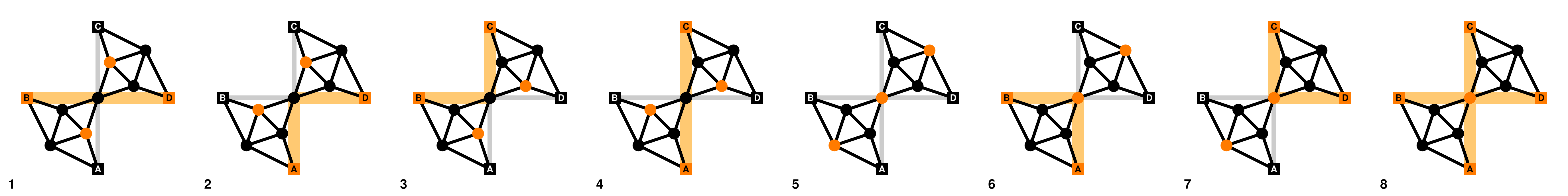}};
        \node[label] at (6.0,5.5) {\lbl{b}};
        \node[figure,anchor=north west] at (6.0,5.2) {%
            \includegraphics[width=1.6cm]{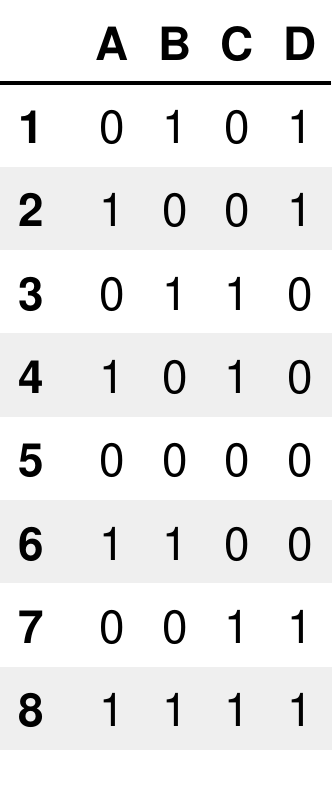}};
        \node[label] at (8.5,5.5) {\lbl{d}};
        \node[figure,anchor=north west] at (8.5,6) {%
            \includegraphics[width=6cm]{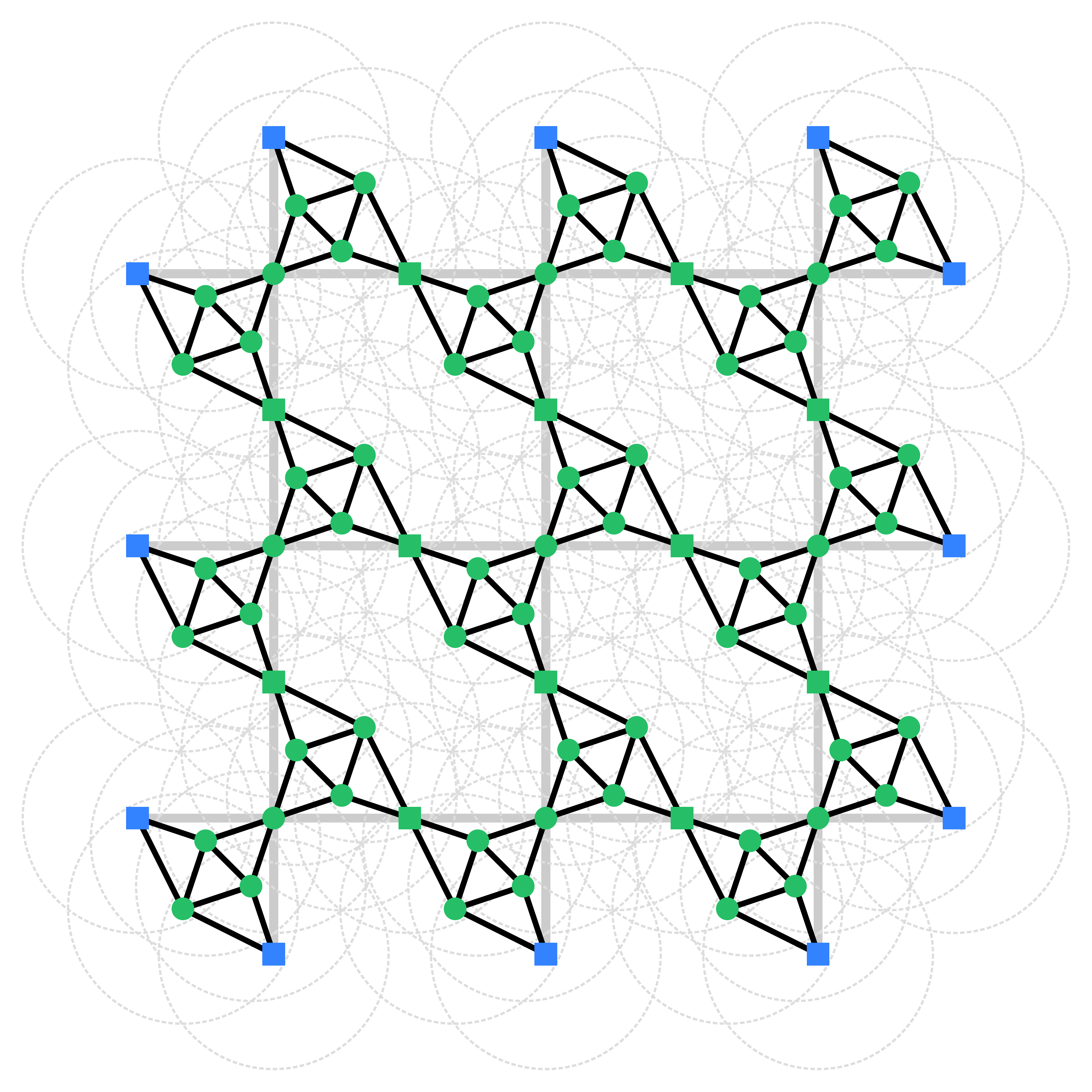}};
        \node[figtext,anchor=center] at (14.0,3.9) {$\C_\sub{\normalfont Loop}$};
        \node[figure,anchor=center] at (4.8,0.8) {\includegraphics[width=1.5cm]{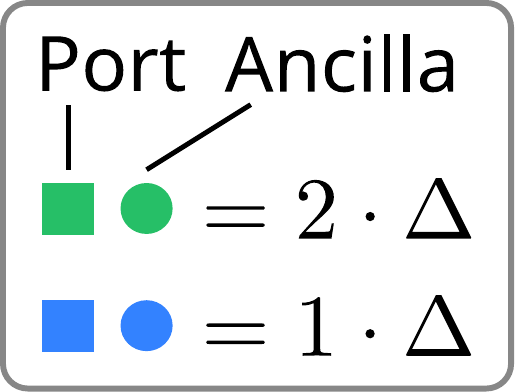}};
        %
    \end{tikzpicture}
    \caption{%
	\emph{Surface code.} 
    (a) Unit cell/vertex complex $\C_\texttt{SCU}$ for the surface code
    ($\mathbb{Z}_2$ topological order). The complex is the amalgamation and
    deformation of two \texttt{XNOR}-complexes [see \cref{fig:logic_primitives}
    and \cref{eq:toric_constraint}] and implements the check function
    constraint $f_\sub{Loop}=1$ defined in \cref{eq:toric_f}. The
    deformations are necessary to prevent an unwanted blockade of ancillas
    in the amalgamation.  Black edges denote blockades between atoms, gray
    edges illustrate the underlying square lattice.
    (b,c) Truth table and ground state manifold of the complex. The manifold
    contains all configurations with an even number of labeled atoms excited,
    thereby realizing Gauss's law on the site (colored edges). This
    provides the local isomorphism between $\H_\sub{T}=\H_\sub{Loop}$
    and $\H_0[\C_\sub{Loop}]$.
    (d) Periodic tessellation $\C_\sub{Loop}$ of the vertex complex
    $\C_\texttt{SCU}$. The copies overlap on the edges and are amalgamated
    at these ports (which makes the detunings uniform in the bulk).
    }
    \label{fig:toric}
\end{figure*}

To prepare this state in a real system, one could try to implement the
Hamiltonian \eref{eq:toric_H} and cool the system into its ground state. This
is a challenging task due to the four-body interactions \eref{eq:toric_AB}
which are notoriously hard to realize.
On the Rydberg platform, an alternative and more promising approach goes as
follows: In a first step, one prepares only the subspace
\begin{align}
    \mathcal{H}_\sub{Loop}
    &:=\{\,\ket{\Psi}\,|\,\forall\,{\text{Sites}\;s}:A_s\ket{\Psi}=\ket{\Psi}\,\}
    \nonumber\\
    &\phantom{:}=\spn{\ket{\vec{C}}\,|\,C\in\mathcal{B}}
\end{align}
as the low-energy manifold of a suitably designed structure of
atoms. ($\mathcal{H}_\sub{Loop}$ is the Hilbert space of a $\mathbb{Z}_2$
lattice gauge theory with charge-free background~\cite{Kogut1979}. The local
constraint $A_s\ket{\Psi}=\ket{\Psi}$ corresponds to the gauge symmetry
of this theory and is known as \emph{Gauss's law}.) The $B_p$-terms in
\cref{eq:toric_H} induce quantum fluctuations on this subspace which give
rise to the string-net condensed ground state in \cref{eq:toric_gs}. On the
Rydberg platform, quantum fluctuations can be induced \emph{perturbatively}
by ramping up the Rabi frequency $\Omega_i$. Such fluctuations can give
rise to interesting quantum phases, as shown in Ref.~\cite{Verresen_2021}
for a different model.
This motivates the construction of a Rydberg complex $\C_\sub{Loop}$ with
\begin{align}
    \H_0[\C_\sub{Loop}]\stackrel{\text{loc}}{\simeq} \H_\sub{T}=\H_\sub{Loop}
    =\spn{\ket{\vec{C}}\,|\,C\in\mathcal{B}}\,,
    \label{eq:toric_mapping}
\end{align}
i.e., a Rydberg complex the degenerate ground states of which can
be locally mapped one-to-one to loop configurations on the square
lattice. $\H_0[\C_\sub{Loop}]$ is then a subspace with dimension
$\dim\H_0[\C_\sub{Loop}]\sim 2^M$ where $M$ denotes the number of unit cells
of the square lattice. Note that $\H_0[\C_\sub{Loop}]$ cannot be decomposed
into factors of local Hilbert spaces (like, e.g., the full Hilbert space
$\H=(\mathbb{C}^2)^{\otimes 2M}$ can).

To this end, we assign bits $x_e^1$ to the edges of the square lattice
$\mathcal{L}$ ($K=1$). Our goal is to specify the tessellated ``loop language''
$L_\mathcal{L}[f_\sub{Loop}]$---which contains all bit patterns that trace
out closed loop configurations on the lattice (closed in the sense defined
above)---in terms of a local check function $f_\sub{Loop}$ and a local
bit-projector $\mathfrak{u}_s$ on each site $s$ of the square lattice.
The bit-projector simply selects the four bits on edges adjacent to $s$,
\begin{align}
    \mathfrak{u}_s\left(
    \includegraphics[width=1.7cm,valign=c]{u_square.pdf}
    \right)
    =(x_{e_1}^1,x_{e_2}^1,x_{e_3}^1,x_{e_4}^1)
\end{align}
and the check function reads
\begin{align}
    f_\sub{Loop}(x_1,x_2,x_3,x_4)&=(x_1\odot x_2)\odot (x_3\odot x_4)
    \label{eq:toric_f}
\end{align}
with the \texttt{XNOR}-gate $\odot$ defined in \cref{eq:XNOR}, that is, $A\odot
B=1$ iff $A=B$. It is easy to verify by inspection that $f_\sub{Loop}=1$ if
and only if the number of active bits is \emph{even}, thereby enforcing Gauss's
law on every site of the lattice (because loops cannot terminate there).

We could now construct a complex as discussed in \cref{sec:complexes}, using
the minimal \texttt{XNOR}-complex depicted in \cref{fig:logic_primitives}.
For this construction, we would amalgamate three of these complexes according
to \cref{eq:toric_f} and detune the final output to enforce $f_\sub{Loop}=1$;
this would require at least 16 atoms per site. However, we can do much better
by rewriting the constraint as an equality:
\begin{equation}
    f_\sub{Loop}=1\quad\Leftrightarrow\quad
    x_1\odot x_2 = x_3\odot x_4\,.
    \label{eq:toric_constraint}
\end{equation}
Indeed, \cref{eq:toric_constraint} evaluates to true iff $x_1+x_2+x_3+x_4$
is even.
In general, an implementation of an equality constraint $f_1=f_2$ of two
functions on separate inputs is achieved by amalgamation of their complexes
$\C_{f_1}$ and $\C_{f_2}$ at their output ports, as noted at the end of
\cref{sec:completeness}.
Therefore, the vertex complex $\C_\texttt{SCU}\equiv\C_{f_\sub{Loop}=1}$
(``Surface Code Unit cell'') that realizes the constraint
\cref{eq:toric_constraint} is that of only \emph{two} \texttt{XNOR}-gates
amalgamated at their outputs (\cref{fig:toric}a) which requires only 11
atoms. Surprisingly, it turns out that this realization is also minimal,
see \cref{app:toric} for a proof. (Note that typically the construction
of larger complexes from minimal primitives does \emph{not} yield minimal
complexes.) The two \texttt{XNOR}-complexes that make up the vertex complex
are geometrically deformed variants of the \texttt{XNOR}-complex shown in
\cref{fig:logic_primitives}. This is necessary to prevent unwanted blockades
between ancillas in the amalgamation.

In \cref{fig:toric}b we show the configurations of the four labeled ports (A,
B, C, and D) of the complex in the 8-fold degenerate ground state manifold. In
\cref{fig:toric}c we illustrate the excitation patterns of these eight ground
states (atoms excited to the Rydberg state are colored orange). Highlighting
the edges of the square lattice whenever the labeled ports associated to
them are excited yields the local mapping \eref{eq:toric_mapping} to the
loop structure of states in $\H_\sub{Loop}$. Note that the ancillas do not
add additional degrees of freedom in the ground state manifold.

For the tessellation (\cref{fig:toric}d) the vertex complex is copied and
shifted periodically along the basis vectors of the square lattice. The
labeled ports are then amalgamated to the corresponding ports of complexes
on adjacent sites. Quite remarkably, due to the amalgamation, the detunings
in the bulk become uniform, which makes this tessellation interesting under
the constraints of current platforms~\cite{Semeghini2021,Ebadi2022}. (Note
that imposing periodic boundary conditions on the lattice, i.e., going back
to the \emph{toric} code, would render the detunings completely uniform.)

Let us briefly comment on the modifications of the surface code patch in
\cref{fig:toric}d that would be necessary to use it as a quantum code.  It is
well-known~\cite{Bravyi1998} that a surface code patch encodes a single logical
qubit if its four sides alternate in boundary types: top and bottom remain
``rough'' but left and right are modified to ``smooth'' boundaries by cutting
of the dangling edges of the square lattice. On these boundaries, the sites
become trivalent ``T''-shaped with the same Gauss's law (i.e., the number
of active edges must be even). On these sites, the quadrivalent complex in
\cref{fig:toric}a must be replaced by a trivalent one.  Conveniently enough,
this is just the \texttt{XOR}-complex in \cref{fig:logic_primitives} as the
truth table of \texttt{XOR} contains exactly the four assignments of three
Boolean variables such that $x_1+x_2+x_3$ is even. As a bonus, closing of
the left and right sides of the patch with \texttt{XOR}-complexes leads to
completely uniform detunings along these boundaries. The simplicity of the
vertex complex on trivalent sites suggests a definition of the surface code on
the Honeycomb lattice (which is perfectly possible~\cite{Levin2005}). However,
because of the two sites per unit cell, this does not reduce the number of
required atoms per unit cell to implement the check function. Indeed, the
realizations with minimal Rydberg complexes on both lattices are essentially
equivalent, as can be seen in \cref{fig:toric}d by rotating the tessellation
by 45\degree.

Note that the unit cell complex in \cref{fig:toric}a for the square lattice
allows for tessellations (\cref{fig:toric}d) with 9 atoms per unit cell
(four of the 11 atoms of $\C_\texttt{SCU}$ are shared between pairs of
unit cells) such that the number of atoms for an $L\times L$ lattice is
of order $9L^2$. The number of Rydberg atoms that can be prepared and
controlled in tweezer arrays has recently reached the range of several
hundreds~\cite{Ebadi2021,Scholl2021,Schymik2022}, so that lattices with
$\sim 6\times 6$ sites are already within reach of state-of-the art platforms.

\subsection{Fibonacci model}
\label{subsec:fibonacci}

\begin{figure*}[t]
    \centering
    \begin{tikzpicture}
        \node[figtext] at (0,6) {\lbl{a}};
        \node[figure,anchor=north west] at (0,6) {%
            \includegraphics[width=4.5cm]{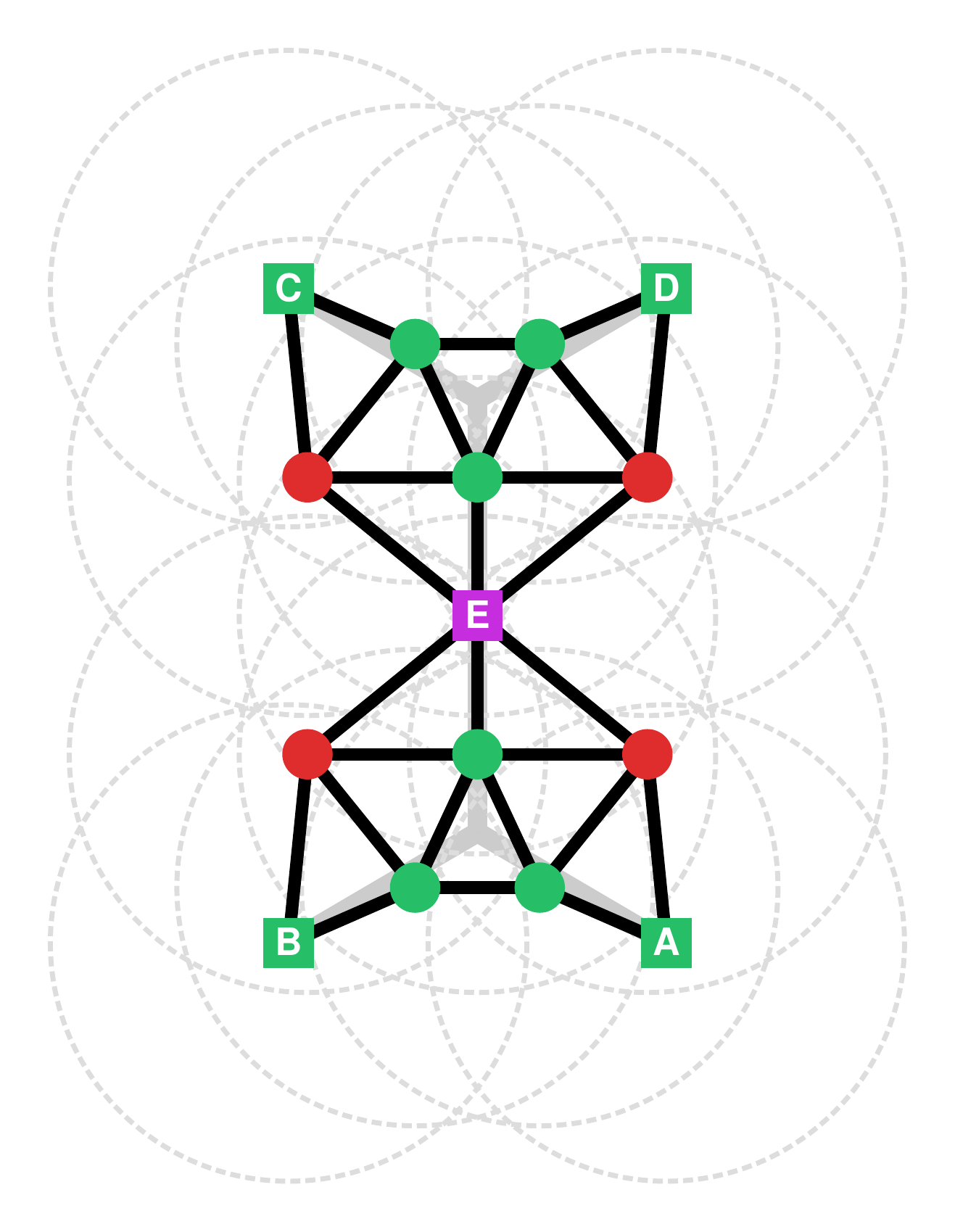}};
        \node[figtext,anchor=center] at (0.5,3.0) {$\C_\texttt{FMU}$};
        \node[figtext,anchor=center] at (2.4,5.2) {$\C_{f_\sub{Fib}=1}$};
        \draw[draw=black,line width=0.8pt,dash pattern=on 2pt off 1.5pt] (1.1,4.9) rectangle (3.6,2.8);
        \node[figtext] at (0,0.2) {\lbl{c}};
        \node[figure,anchor=north west] at (0,0) {%
            \includegraphics[width=15cm]{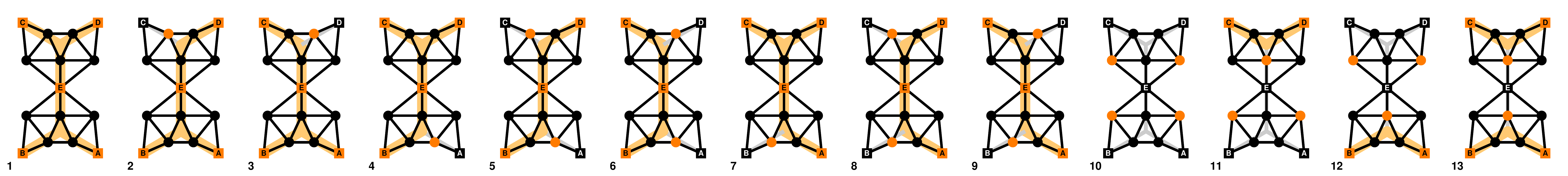}};
        \node[figtext] at (5.0,6) {\lbl{b}};
        \node[figure,anchor=north west] at (5.5,6) {%
            \includegraphics[width=1.9cm]{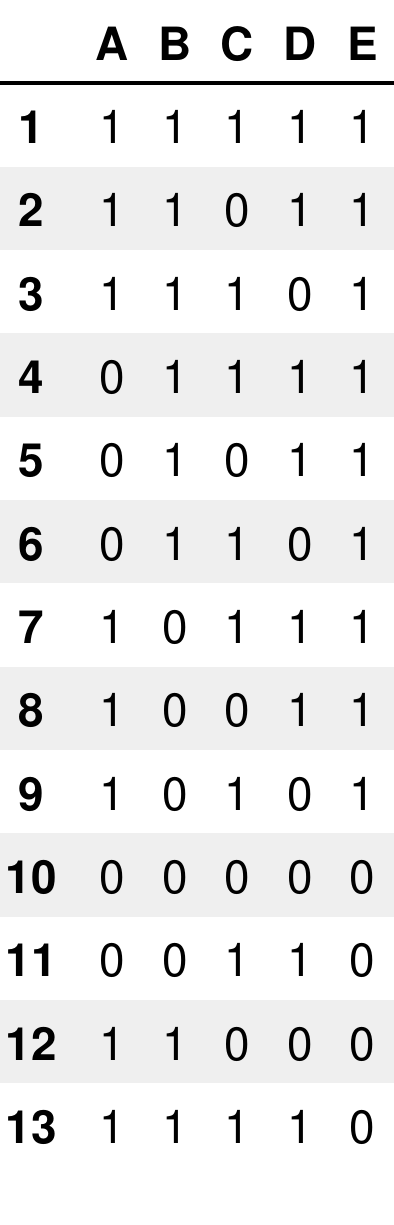}};
        \node[figtext] at (8.0,6) {\lbl{d}};
        \node[figure,anchor=north west] at (8.1,5.7) {%
            \includegraphics[width=7cm]{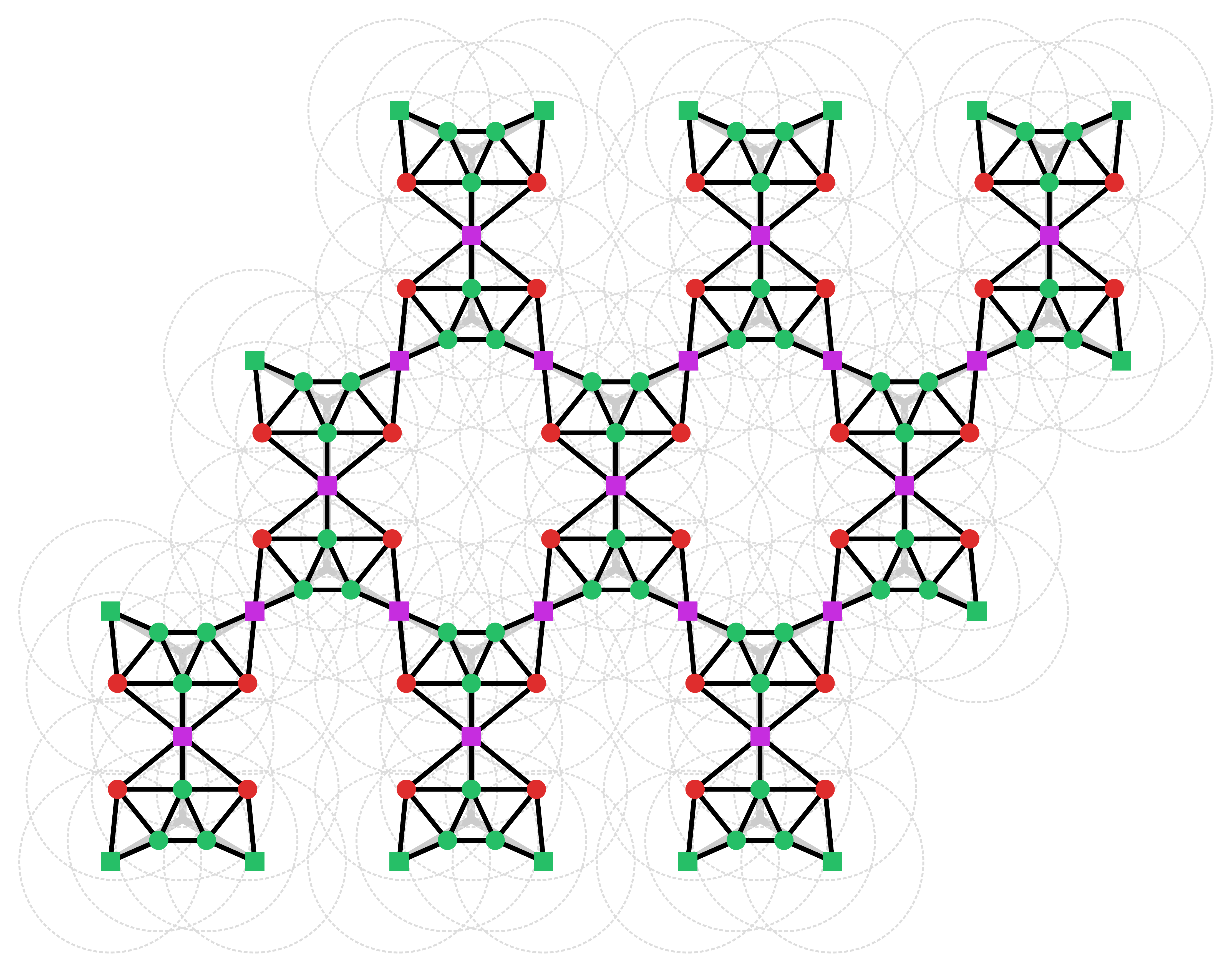}};
        \node[figtext,anchor=center] at (14.0,1.0) {$\C_\sub{Fib}$};
        \node[figure,anchor=center] at (8.9,4.7) {\includegraphics[width=1.5cm]{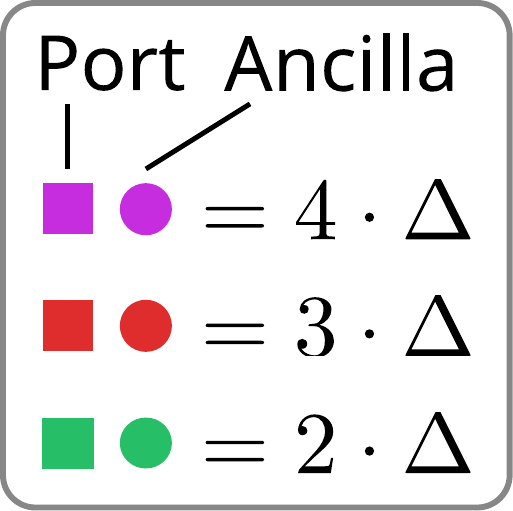}};
        %
    \end{tikzpicture}
    \caption{%
	\emph{Fibonacci model.} 
    (a) Unit cell complex $\C_\texttt{FMU}$ for the Fibonacci model that
    implements two copies of the single-site check function constraint
    $f_\sub{Fib}=1$ defined in \cref{eq:fib_f}. The complex is the
    amalgamation of two equivalent 8-atom complexes $\C_{f_\sub{Fib}=1}$ on
    the two trivalent sites that make up the basis of the honeycomb unit
    cell. Black edges denote blockades between atoms, gray edges illustrate
    the underlying Honeycomb lattice.
    (b,c) Truth table and ground state manifold of the unit cell complex. The
    manifold contains all configurations with closed strings and, in
    addition, configurations with three strings fusing on a site. This
    provides the local isomorphism between the string-net Hilbert space
    $\H_\sub{T}$ and $\H_0[\C_\sub{Fib}]$.
    (d) Periodic tessellation $\C_\sub{Fib}$ of the complex
    $\C_\texttt{FMU}$. The copies overlap on the edges and are amalgamated
    at the corresponding ports.
    }
    \label{fig:fibonacci}
\end{figure*}

The surface code only supports Abelian anyons, which are not sufficient for
universal \emph{topological quantum computation}, where gates are implemented
fault tolerantly by braiding of localized excitations and measurements
correspond to their fusion~\cite{Freedman2002,Nayak2008,Wang2010}.
The simplest anyon model that supports universal computation by braiding is
known as \emph{Fibonacci model} due to the role the Fibonacci numbers play in
the fusion rules~\cite{Freedman2002a,Preskill2004,Bonesteel2005}; it may be
realized in some fractional quantum Hall states~\cite{Read1999,Xia2004}. As
quasiparticles, the properties of Fibonacci anyons are a consequence of
and encoded in the entanglement pattern of the ground state on which they
live. The latter turns out to have a representation as a string-net condensate
with weights and ``string-net'' patterns that differ from the surface code
[cf.~\cref{eq:toric_gs}]. If we consider a Honeycomb lattice with qubits
on its edges, the fixed-point ground state of the Fibonacci model has the
form~\cite{Levin2005}
\begin{align}
    \ket{G}=\sum_{\vec S}\Phi(\vec S)\,\ket{\vec S}\,,
    \label{eq:fibonacci_gs}
\end{align}
where the sum goes over all patterns (``string-nets'') $\vec S$ of flipped
qubits $\ket{1}$ on the edges of the Honeycomb lattice where \emph{no single
string} ends on a vertex. That is, in contrast to the loop patterns $\vec{C}$
of the surface code, vertices with \emph{three} fusing strings are allowed. The
coefficients $\Phi(\vec S)$ of the superposition are non-trivial functions
of the pattern $\vec S$, so that the condensate is no longer an equal-weight
superposition~\cite{Levin2005,Fidkowski2009,Fendley2008}. It is possible to
write down a solvable, local Hamiltonian like \cref{eq:toric_H} with the exact
ground state \eref{eq:fibonacci_gs} which is, however, so complicated that it
is essentially useless for implementations~\cite{Levin2005}. This complication,
together with the potential usefulness of the model for quantum computation,
motivates again the construction of a Rydberg complex $\C_\sub{Fib}$ that
implements the tessellated target Hilbert space
\begin{align}
    \H_0[\C_\sub{Fib}]\stackrel{\text{loc}}{\simeq}\H_\sub{T}
    =\spn{\ket{\vec S}\,|\,\text{String-net}\,\vec S}
    \label{eq:fibonacci_mapping}
\end{align}
which has the dimension $\dim\H_0[\C_\sub{Fib}]\sim
(1+\varphi^2)^{M}+(1+\varphi^{-2})^{M}$ where $M$ is the number of
unit cells of the Honeycomb lattice and $\varphi=(1+\sqrt{5})/2$ is
the golden ratio~\cite{Simon2013,Schulz2013}. As for the surface code,
$\H_0[\C_\sub{Fib}]$ is a Hilbert space that cannot be decomposed into
factors of local Hilbert spaces.

Since the Honeycomb sites are trivalent, the bit-projector takes now the form
\begin{align}
    \mathfrak{u}_s\left(
    \includegraphics[width=1.5cm,valign=c]{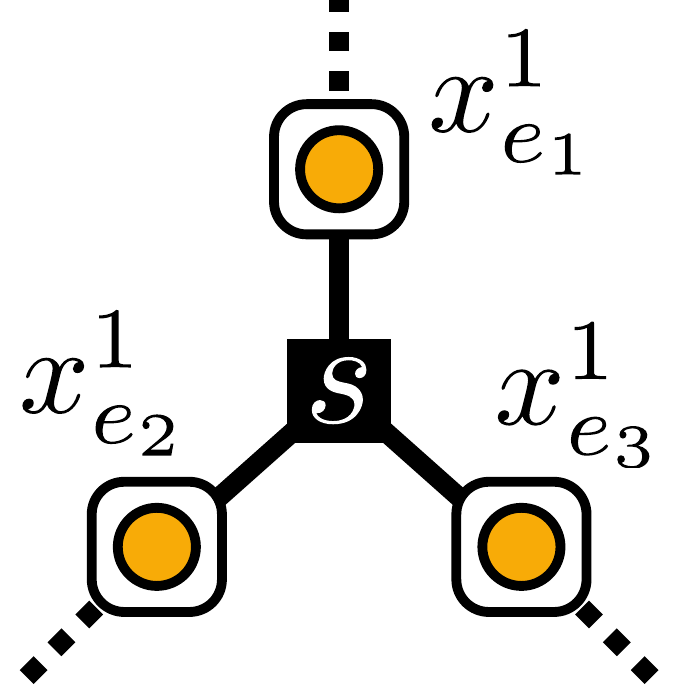}
    \right)
    =(x_{e_1}^1,x_{e_2}^1,x_{e_3}^1)
\end{align}
and the check function that specifies the allowed string-nets can be written
in the compact form
\begin{align}
    f_\sub{Fib}(x_1,x_2,x_3)=
    (x_1\oplus x_2\equiv x_3)
    \vee
    (x_1\wedge x_2 \wedge x_3)
    \label{eq:fib_f}
\end{align}
where the first clause $(x_1\oplus x_2\equiv x_3)$ realizes the loop constraint
($\equiv$ denotes the \emph{logical equivalence} which is equivalent to
the \texttt{XNOR}-gate as a connective) and the second clause $(x_1\wedge
x_2 \wedge x_3)$ allows for the fusion of three strings. Note that without
the second clause we fall back to the loop constraint of the surface code
(now on the honeycomb lattice).

Since there are \emph{five} assignments with $f_\sub{Fib}=1$, this check
function cannot be realized by a single logic gate (despite having three
ports) but must be decomposed into a circuit. Furthermore, since the
amalgamation of two logic gates always results in a complex with an even
number of ports, at least three gates would be necessary to realize the
Fibonacci constraint. This already leads into the territory of $\gtrsim 15$
atoms which we deem too much overhead for a single site. Therefore we follow
the same approach as for the logic primitives in \cref{sec:primitives}:
We systematically exclude the existence of complexes $\C_{f_\sub{Fib}=1}$
for $N=3,4,\dots,7$ atoms (\cref{app:fibonacci}). The approach fails for
$N=8$ and we find the minimal complex in \cref{fig:fibonacci}a (dashed
box). The amalgamation of two of the complexes, one mirrored horizontally,
yields the complex $\C_\texttt{FMU}$ (``Fibonacci Model Unit cell'')
for the two-site unit cell of the Honeycomb lattice, which can then be
tessellated as shown in \cref{fig:fibonacci}d. In contrast to the surface
code, the detunings are not uniform in this case. The full ground state
manifold of the unit cell is shown in \cref{fig:fibonacci}b. The colored
edges in \cref{fig:fibonacci}c for each ground state configuration establish
the local mapping in \cref{eq:fibonacci_mapping}. Note how all string-net
configurations are allowed except for single strings terminating at a site.

Note that the complex for the hexagonal unit cell with 15 atoms in
\cref{fig:fibonacci}a can be interpreted as the complex on a tilted
\emph{square} lattice (by virtually contracting the vertical edges of the
honeycomb lattice). This complex, however, is not minimal as we know of a
12 atom complex that realizes the Fibonacci check function constraint on
quadrivalent sites.

We conclude this section with a comment on the detunings $4\Delta$ of the
ports of the tessellated complex (\cref{fig:fibonacci}d). Note that this is
the first (and only) complex studied in this paper with detunings exceeding
the range $\{1\Delta,2\Delta,3\Delta\}$. This seems to be in tension with the
claim at the end of \cref{sec:completeness} (according to which this range
is sufficient to implement any Boolean constraint). The solution is simple:
Instead of amalgamating the single-site complexes $\C_{f_\sub{Fib}=1}$
directly (which yields the $4\Delta$ on the identified ports), one can use a
single 3-atom \texttt{LNK}-complex to establish the connection between two
ports. On each edge of the lattice, a single atom with detuning $4\Delta$
is then replaced by three atoms with detunings $3\Delta$, $2\Delta$, and
$3\Delta$, respectively. This modification spoils of course the minimality of
the complex, but the corollary in \cref{sec:completeness} did not come with
an assertion for minimality. This observation suggest a tradeoff between
minimizing the \emph{number} of atoms and minimizing the \emph{range}
of required detunings; a potentially interesting and useful direction for
future research (see also \cref{sec:outlook}).

\section{Geometric Optimization}
\label{sec:optimization}

So far we optimized complexes only in terms of their \emph{size} (number of
atoms) for a given language. As a result, we ended up with minimal complexes
that are defined by their blockade graph $B$, local detunings $\{\Delta_i\}$,
and an assignment of ports $\ell$, i.e., atoms that realize the desired
language in the ground state manifold.
Remember that in a blockade graph $B=(V,E)$ an edge $e=(i,j)\in E$ between
atoms $i,j\in V$ indicates that they are in blockade, i.e., cannot be excited
simultaneously. An abstract graph that can be realized in this way by placing
atoms in the plane which are in blockade if and only if their distance is
smaller than some blockade radius $\Rb$ is called a \emph{unit disk graph},
and a geometry $G_\C$ that realizes a prescribed graph as its blockade graph
is a \emph{unit disk embedding} of this graph.
So far, the actual geometry $G_\C$ of our minimal complexes was only taken
into account insofar as a unit disk embedding of the required blockade graph
$B$ must \emph{exist}. (Note that there are graphs that cannot be realized as
blockade graphs of planar geometries, so that this ``geometric realizability''
is a non-trivial condition; deciding whether a given graph can be realized
in this way is unfortunately \textsf{NP}-hard~\cite{Breu1998}.)

Whenever there exists a planar geometry $G_\C=(\vec r_i)_{i\in
V}\in\mathbb{R}^{2N}\equiv\mathfrak{C}_N$ that realizes a prescribed blockade
graph, there typically exist many such geometries: In most cases, there is a
bit of ``wiggle room'' around a given geometry without changing the blockade
graph. In addition, there can be geometrically distinct realizations of the
same blockage graph that cannot be continuously deformed into each other
without violating the blockade constraints. For example:
\begin{center}
    \includegraphics[width=0.95\linewidth]{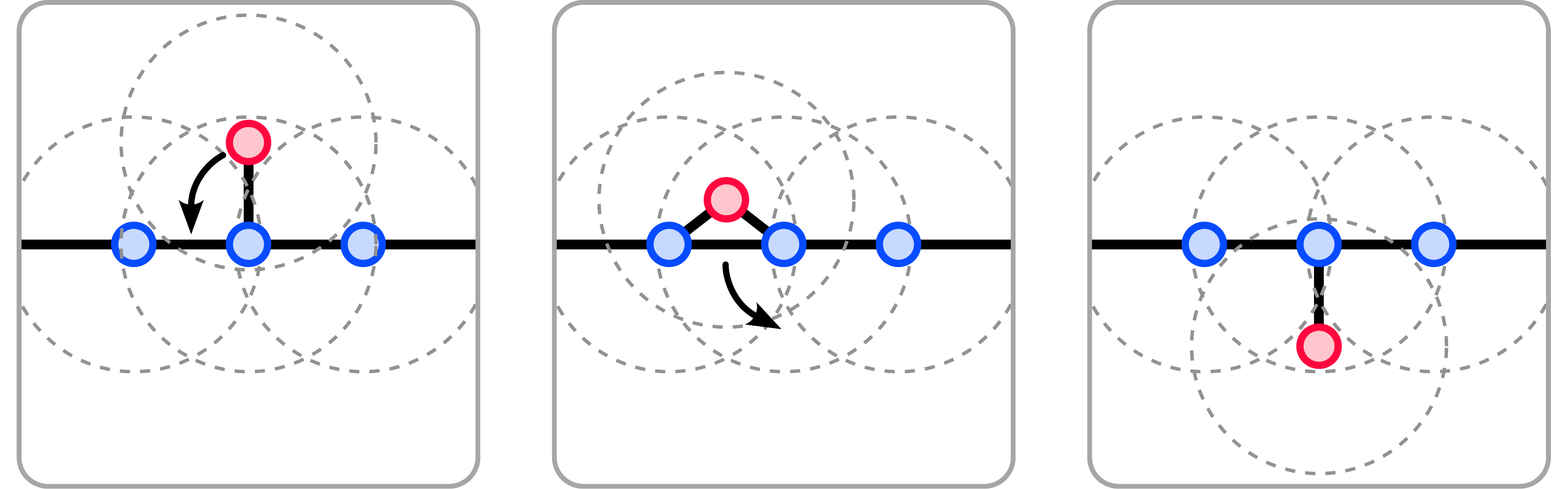}
\end{center}
This can lead to disconnected regions in the configuration space
$\mathfrak{C}_N$ that realize a given blockade graph.

To optimize the geometry of a complex in $\mathfrak{C}_N$, we have to quantify
what we mean by a ``good'' complex. To this end, we define an \emph{objective
function} $\Gamma\,:\,\mathfrak{C}_N\rightarrow\mathbb R$ that quantifies
the quality of the complex and that we seek to minimize. One example is
\begin{align}
    \tilde\Gamma(G_\C)=\frac{\delta E}{\Delta E}
    \label{eq:gamma1}
\end{align}
where $\delta E$ and $\Delta E$ are the width of the ground state manifold
and the gap (recall \cref{fig:rationale}). The problem with \cref{eq:gamma1}
is that its evaluation scales exponentially with the number of atoms $N$
because the computation of $\delta E$ and $\Delta E$ in principle requires
access to the complete spectrum of \cref{eq:H} (which is in general an
\texttt{NP}-hard problem~\cite{Pichler2018a}). While this is feasible for
small complexes, it becomes quickly a bottleneck as $\tilde\Gamma$ must be
evaluated repeatedly when iteratively optimizing a geometry. Furthermore,
in the PXP approximation, interaction energies are either infinite or zero so
that $\tilde\Gamma$ vanishes whenever the blockade constraints are satisfied.
Thus we need a simpler, heuristic quantity that can be directly computed
from the geometry of the complex.

\subsection{Geometric robustness}
\label{subsec:optimization_preliminaries}

To motivate the quantity we propose as objective function below, we first
have to review the role of the blockade radius $\Rb$ in the PXP model.
In the limit of vanishing driving, the blockade radius $\Rb$ is the distance
from an atom where the van der Waals interaction matches its detuning:
$C_6\,\Rb_i^{-6}\stackrel{!}{=}\Delta _i$. As the detunings can vary from
atom to atom in a generic structure $\C$, so does the blockade radius
$\Rb_i$ (this dependence is quite weak, though). However, as outlined in
\cref{sec:setting}, we would like to work in the approximate framework of the
PXP model with a \emph{unique} blockade radius $\Rb$, because then the effects
of interactions between atoms simplify to kinematic constraints encoded in
a blockade graph. In the following, we interpret a given blockade graph $B$
as the encoding of the constraints we would like to realize with a structure
$\C$ of yet unknown geometry $G_\C$.

We can now introduce two dimensionless quantities. First, the \emph{robustness}
of a structure with respect to a given blockade graph $B=(V,E)$ is defined as
\begin{align}
    \xi_B(\C):=\frac{\min\limits_{(i,j)\notin E}d(\vec r_i,\vec r_j)-\max\limits_{(i,j)\in E}d(\vec r_i,\vec r_j)}%
    {\min\limits_{(i,j)\notin E}d(\vec r_i,\vec r_j)+\max\limits_{(i,j)\in E}d(\vec r_i,\vec r_j)}\,,
    \label{eq:robustness}
\end{align}
where $d(\vec r_i,\vec r_j)$ denotes the Euclidean distance. The robustness
is a scale-invariant, finite number $\xi_B(\C)\in [-1,1]$ where $\xi_B(\C)>0$
indicates a valid unit disk embedding $G_\C$ that realizes the prescribed
blockade graph $B$ for blockade radii in some finite interval. Larger positive
values of $\xi_B(\C)$ indicate more \emph{robust} embeddings with more
``wiggle room'' around the positions without changing the blockade graph, or,
equivalently, a wider range of blockade radii that yield the same blockade
graph. If $\xi_B(\C)<0$, the unit disk graph induced by $G_\C$ does not
match the prescribed blockade graph $B$.

Similarly, the \emph{spread} of a structure $\C$ is defined as
\begin{align}
    s(\C)&:=
    \frac{\max_i \Rb_i-\min_i \Rb_i}%
    {\max_i \Rb_i+\min_i \Rb_i}
    \nonumber\\
    &=\frac{(\max_i \Delta_i)^{1/6}-(\min_i \Delta_i)^{1/6}}%
    {(\max_i \Delta_i)^{1/6}+(\min_i \Delta_i)^{1/6}}\,.
    \label{eq:spread}
\end{align}
The spread $s\in [0,1]$ quantifies the relative variations in blockade
radii of a structure (a system with uniform detuning $\Delta_i\equiv\Delta$
has vanishing spread). Just as \cref{eq:robustness} does not depend on the
length scale, \cref{eq:spread} is independent of the $C_6$ coefficient,
i.e., the strength of the interaction.

We can now take into account the variability of the blockade radius without
abandoning the PXP model as follows. We call a structure $\C$ a \emph{valid
implementation} of a blockade graph $B$ if
\begin{align}
    s(\C) < \xi_B(\C)\,.
    \label{eq:valid}
\end{align}
This condition ensures that the geometry $G_\C$ can be scaled such that all
distances of atoms that should (not) be in blockade according to $B$, are
smaller (larger) than the smallest (largest) blockade radius of the structure
$\C$. As this condition is scale-invariant, we do not have to specify $\Rb$
in the following. Note that all structures presented in this paper are valid
in the sense of \cref{eq:valid}.

\subsection{Numerical optimization}
\label{subsec:optimization_results}

These considerations suggest the robustness $\xi_B$ as a measure for the
quality of geometries. We therefore set $\Gamma=-\xi_B$ to maximize this
quantity by minimizing $\Gamma$. The blockade graph $B$ and the detunings
$\{\Delta_i\}$ are fixed and define the functional properties of the complex;
in particular, the spread $s(\C)$ is constant. Thus we optimize for geometries
that satisfy the validity constraint \eref{eq:valid} with a maximal margin
between robustness and spread.

\begin{figure}[tb]
    \centering
    \begin{tikzpicture}
        \node[figtext] at (-1.25,4.8) {\lbl{a}};
        \node[figtext,anchor=south east] at (1.2,0.7) {%
            \color{optimized}$\xi(\C_\texttt{NOR$\triangle$}^\mathrm{opt})=0.268$};
        \node[figtext,anchor=south east] at (1.2,2.9) {%
            \color{unoptimized}$\xi(\C_\texttt{NOR$\triangle$})=0.088$};
        \node[figtext,anchor=south east] at (1.2,-0.3) {%
            $s(\C_\texttt{NOR$\triangle$}^\mathrm{(opt)})=0.058$};
        \node[figure,anchor=center] at (0,4.2) {%
        \includegraphics[width=1.6cm]{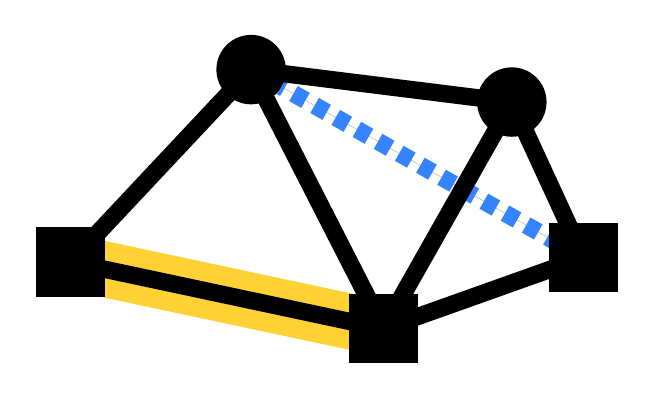}};
        \node[figure,anchor=center] at (0,2.0) {%
        \includegraphics[width=1.6cm]{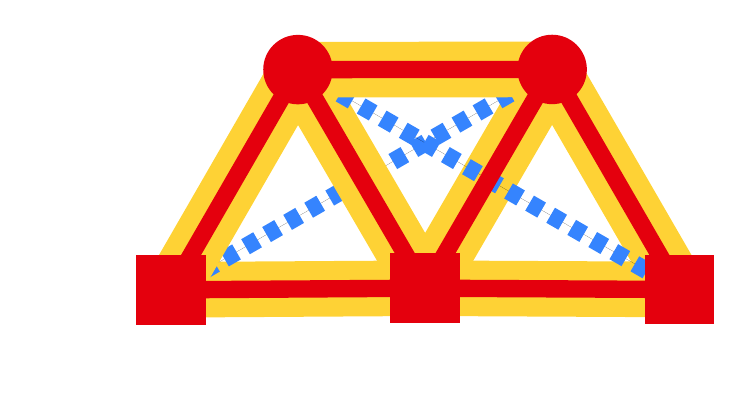}};
        \node[figtext,anchor=south east] at (3.8,0.7) {%
            \color{optimized}$\xi(\C_\texttt{NOR$\circ$}^\mathrm{opt})=0.236$};
        \node[figtext,anchor=south east] at (3.8,2.9) {%
            \color{unoptimized}$\xi(\C_\texttt{NOR$\circ$})=0.111$};
        \node[figtext,anchor=south east] at (3.85,-0.3) {%
            $s(\C_\texttt{NOR$\circ$}^\mathrm{(opt)})=0.058$};
        \node[figure,anchor=center] at (2.6,4.2) {%
        \includegraphics[width=1.6cm]{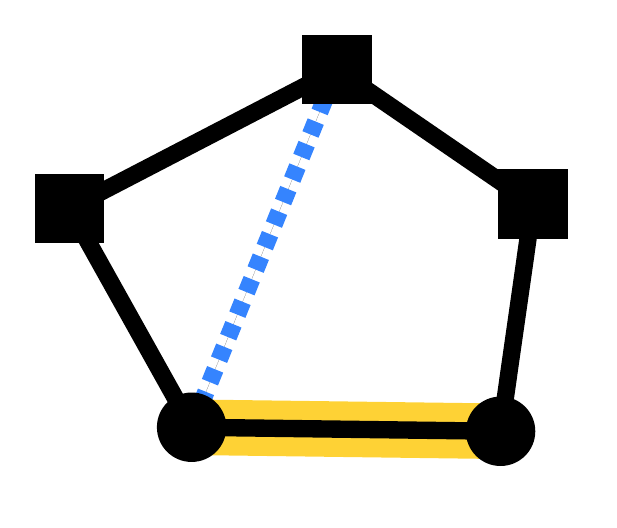}};
        \node[figure,anchor=center] at (2.6,2.0) {%
        \includegraphics[width=1.6cm]{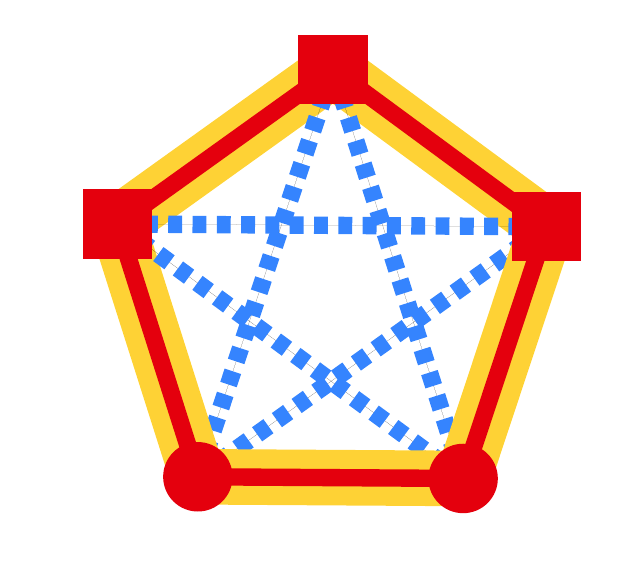}};
        \node[figtext] at (3.8,4.8) {\lbl{b}};
        \node[figtext,anchor=south east] at (6.6,0.7) {%
            \color{optimized}$\xi(\C_\texttt{SCU}^\mathrm{opt})=0.133$};
        \node[figtext,anchor=south east] at (6.6,2.9) {%
            \color{unoptimized}$\xi(\C_\texttt{SCU})=0.117$};
        \node[figtext,anchor=south east] at (6.7,-0.3) {%
            $s(\C_\texttt{SCU}^\mathrm{(opt)})=0.058$};
        \node[figure,anchor=center] at (5.4,4.2) {%
        \includegraphics[width=2.0cm]{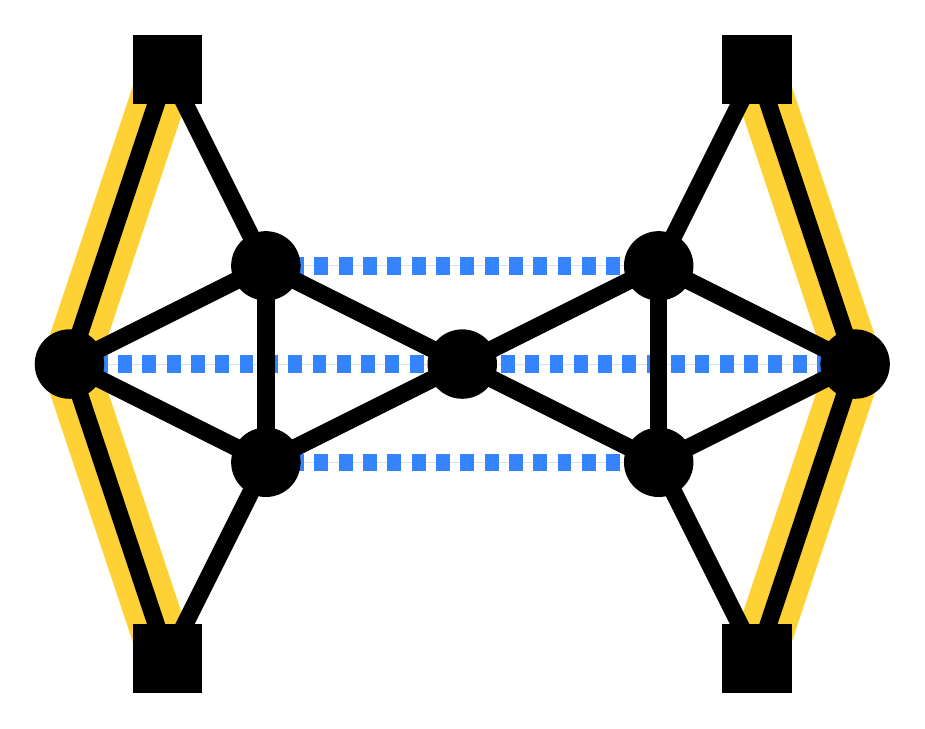}};
        \node[figure,anchor=center] at (5.4,2.0) {%
        \includegraphics[width=2.0cm]{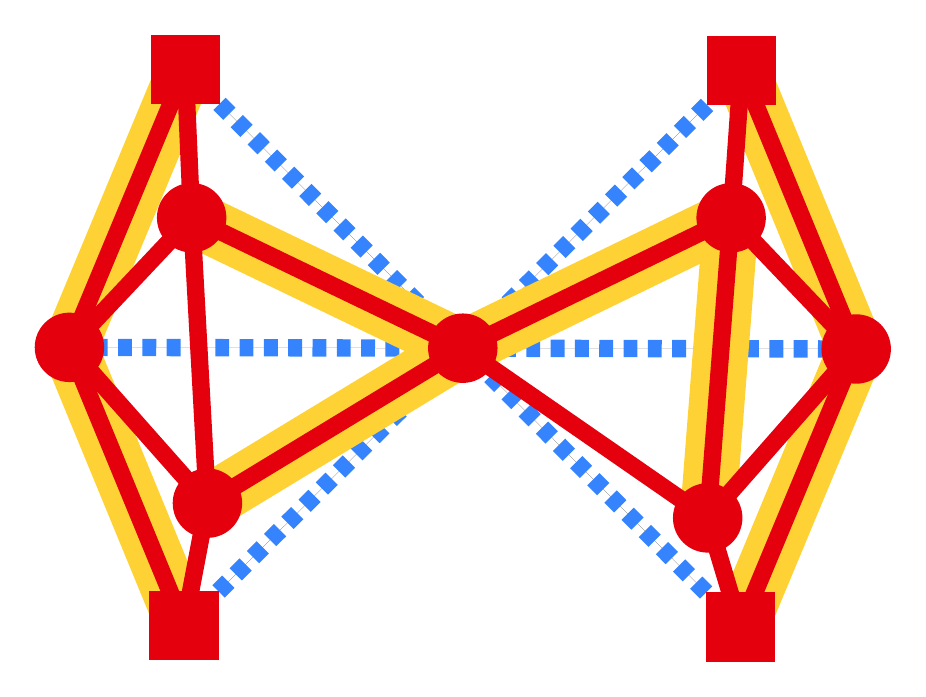}};
        \draw[draw=black,line width=0.5pt,dash pattern=on 1pt off 1.5pt] (-1.2,0.4) -- (6.7,0.4);
        %
    \end{tikzpicture}
    \caption{%
	\emph{Optimization (Examples).} 
    (a) Comparison of perturbed (black) and optimized (red) geometries
    for the two minimal \texttt{NOR}-complexes. Maximum distance
    blockades are highlighted yellow, minimum distances of unblocked
    atoms are indicated by dashed blue edges. The optimal geometries
    are highly symmetric and match the manually constructed ones in
    \cref{fig:logic_primitives} and \cref{fig:pxp_primitives}c. The
    robustness for each complex is printed below the geometries and
    the spread on the bottom of each column (we omit blockade graph
    indices). Note that $\xi(\C_\texttt{NOR$\triangle$}^\mathrm{opt})
    >\xi(\C_\texttt{NOR$\circ$}^\mathrm{opt})$ which makes the triangular
    version \texttt{NOR$\triangle$} potentially more robust than the
    ring-shaped \texttt{NOR$\circ$}. For all geometries the validity constraint
    $s(\C)<\xi(\C)$ is satisfied.
    (b) Comparison of the optimized geometry for the vertex complex
    $\C_\texttt{SCU}$ of the surface code (red) and the manually constructed
    geometry (black) from \cref{fig:toric}a; the robustness increases by
    $\Delta\xi=0.016$. Due to unconstrained atoms, the optimization can
    break the symmetry and produce slightly skewed geometries.
    }
    \label{fig:optimization}
\end{figure}

We call a complex $\C$ \emph{globally (locally) optimal} if $\xi_B(\C)>0$ and
its geometry is a global (local) minimum of $\Gamma$ in $\mathfrak{C}_N$. To
minimize $\Gamma$ on the high-dimensional space $\mathfrak{C}_N$, we employ
the \texttt{SciPy} implementation~\cite{Virtanen2020} of generalized simulated
annealing~\cite{Tsallis1996,Andricioaei1996} in combination with a local
optimization based on the Nelder-Mead algorithm~\cite{Nelder1965,Gao2010},
see \cref{app:optimization} for details.
Remember that the robustness is a scale-invariant quantity, so that the
scale of the optimized geometry is arbitrary. For normalization, we rescale
the geometries by setting the blockade radius
\begin{align}
    \Rb:=\frac{1}{2}\left[\max_{(i,j)\in E}d(\vec r_i,\vec r_j)
    +\min_{(i,j)\notin E}d(\vec r_i,\vec r_j)\right]
    \stackrel{!}{=}1\,.
    \label{eq:normalization}
\end{align}

First, we initialized the algorithm with the hand-crafted geometries
of all primitives in \cref{sec:primitives,sec:crossing} and the vertex
complexes in \cref{sec:examples} to optimize their robustness (we believe
the results to be globally optimal but we did not prove this). With these
initial configurations, the optimizer already started with a valid unit disk
embedding of $B$ ($\xi_B>0$) and tried to maximize the robustness further.
The results were typically only slightly deformed versions of the manually
constructed complexes, confirming our intuition. Some of the primitives (in
particular the ring-shaped \texttt{NOR}-complex in \cref{fig:pxp_primitives}c)
were already optimal due of their high symmetry.
In \cref{fig:optimization}a we demonstrate this by comparing slightly perturbed
geometries (black) to the subsequently optimized versions (red) for both
minimal realizations of the \texttt{NOR}-complex. In particular, we find
\begin{align}
    \xi_{B_{\texttt{NOR$\triangle$}}}(\C_\texttt{NOR$\triangle$}^\mathrm{opt})=0.268
    >
    0.236=\xi_{B_{\texttt{NOR$\circ$}}}(\C_\texttt{NOR$\circ$}^\mathrm{opt})
\end{align}
and conclude that the triangular version \texttt{NOR$\triangle$}
(\cref{fig:logic_primitives}) is potentially more robust than the ring-shaped
\texttt{NOR$\circ$} (\cref{fig:pxp_primitives}c). For both, the validity
constraint \eref{eq:valid} is safely satisfied ($x\in\{\circ,\triangle\}$):
\begin{align}
    s(\C_\texttt{NOR$x$}^\mathrm{(opt)})
    =0.058<
    \xi_{B_{\texttt{NOR$x$}}}(\C_\texttt{NOR$x$}^\mathrm{opt})\,.
\end{align}
Since the robustness depends only on the maximum (minimum) distance of
atoms that are (not) in blockade, there can be atoms with positions that are
unconstrained in small regions of the plane. These positions can be chosen
by the optimization algorithm at will, leading to slightly skewed geometries
that break the natural symmetry of the complex; an example is given by the
optimized surface code unit cell complex in \cref{fig:optimization}b. This
is an artifact of our particular objective function that can be eliminated
by more sophisticated choices for $\Gamma$ (e.g.\ motivated by specific
experimental requirements).
All optimized complexes are accessible online~\cite{data}, normalized
according to \cref{eq:normalization}.

In a second run, we went one step further and initialized the optimization with
geometries that \emph{violated} the prescribed blockade graphs (by placing
the atoms randomly). In this case, the algorithm started with $\xi_B<0$
and first had to identify valid unit disk embeddings by stochastic jumps in
the configuration space. These runs typically rediscovered the geometries
we already knew. In some cases, alternative geometries were found (which
turned out to be local maxima of robustness, though). We conclude that it
is not only possible to optimize given geometries but also to \emph{find}
them (if they exist), at least for small complexes.

As a final remark, we stress that geometric optimization is in general
not \emph{reducible}, i.e., optimizing the primitives of a larger circuit
does not necessarily optimize the whole circuit as constraints between
primitives are not taken into account by this approach. This is particularly
important for tessellated complexes of quantum phases like the spin liquids
in \cref{sec:examples}, where one should optimize the complete tessellation
to minimize unwanted residual interactions that are not present in the
optimization of a single-site or unit cell complex.

\section{Outlook \& Comments}
\label{sec:outlook}

We conclude with a few comments on open questions and directions for future
research.

\subparagraph{Minimality.}
To find and prove the minimality of complexes we systematically excluded
realizations with fewer atoms. While this approach is more efficient than a
brute force search (by exploiting constraints from the language, the detunings,
and the planar geometry), it is still far from trivial and cannot be easily
automated. It would be both interesting and useful to develop an algorithm
that, given a uniform language, constructs a \emph{minimal} graph with
weighted nodes, and a labeled node for each letter position of the language,
such that each \emph{maximum-weight independent set}~\cite{gross2013handbook}
is in one-to-one correspondence with a word of the language. We are neither
aware of such an algorithm nor of statements on the complexity to find minimal
solutions. (Note that a solution of this problem might not even be a unit disk
graph, i.e., realizable by the blockade graph of a planar Rydberg complex.)

\subparagraph{Optimization.}
It is clear that our treatment of optimization in \cref{sec:optimization}
only scratches the surface. First, our choice of the objective function
$\Gamma$ is heuristic and other functions may be more appropriate for
specific experimental settings. This would change the ``optimal'' geometries
of complexes, of course. Second, there is a plethora of alternative numerical
algorithms available that could be used to minimize the objective function
more efficiently. In particular the existence of distinct geometries that
are separated by complexes that violate the blockade graph may require more
sophisticated algorithms to escape locally optimal configurations and find
the global optimum. The algorithms also should scale well with the size
of the complex because, as mentioned previously, tessellations should be
optimized as a whole to take into account constraints between its primitives.

If we go one step further and ask for an algorithm that \emph{constructs}
geometries from a given blockade graph, we quickly enter complexity hell:
Deciding whether a given blockade graph can be realized as a unit disk graph is
known to be \textsf{NP}-hard~\cite{Breu1998}. Even if we are \emph{promised}
to be given a unit disk graph as blockade graph, there is \emph{no} efficient
algorithm that outputs the geometry of a complex that realizes it. This
is so because there are unit disk graphs that require exponentially many
bits to specify the positions of the nodes~\cite{McDiarmid2013}. To add
insult to injury, even finding certain \emph{approximations} of unit disk
graph embeddings are known to be \texttt{NP}-hard~\cite{Kuhn2004}. None
of these statements prevent us from looking for heuristic algorithms to
solve these problems for specific cases, of course (as we demonstrated in
\cref{sec:optimization}).

\subparagraph{Uniformity.}
Most of the complexes discussed in this paper make use of atom-specific
detunings (e.g.\ \cref{fig:logic_primitives} and \cref{fig:fibonacci}d). Only
the surface code tessellation in \cref{fig:toric}d is uniform in detunings,
at least in the bulk. While it is possible to realize atom-specific
detunings~\cite{Labuhn_2014,Omran2019}, single-site addressability
adds significant experimental overhead. Thus it is reasonable to
ask whether complexes with non-uniform detunings can be replaced by
(potentially larger) complexes with uniform detunings (without adding
additional degrees of freedom). For instance, there is a third minimal
\texttt{NOR}-complex with uniform detuning $\Delta_i\equiv \Delta$. However,
in amalgamated circuits this uniformity is often destroyed---on the
contrary, it is the \emph{non}-uniformity of the \texttt{XNOR}-complex
(\cref{fig:logic_primitives}) that made the bulk of the surface code uniform
(\cref{fig:toric}d). The quest for uniformity is therefore best formulated
on the level of complete circuits or tessellations.

\subparagraph{Beyond planarity.}
We focused completely on \emph{planar} Rydberg complexes to comply
with the restrictions of current experimental platforms: For the
addressability of single atoms it is simply convenient to have a
dimension of unimpeded access. However, technologically, three-dimensional
structures of Rydberg atoms are possible and have been experimentally
demonstrated~\cite{Barredo2018,Kim2022}. Releasing the planarity
constraint drastically changes the rules for the construction of Rydberg
complexes. For instance, ports that are located \emph{inside} a 2D complex
(and would require expensive crossings to be routed to the perimeter) can
be directly accessed from the third dimension, possibly simplifying certain
functional primitives. Note, however, that at least the logic primitives
in \cref{fig:logic_primitives} do \emph{not} profit from a third dimension.
(This follows from the proofs in \cref{app:pxp}.)

\subparagraph{Beyond the PXP approximation.}
Our construction of Rydberg complexes was based on the assumption that atoms
within the blockade radius can never be simultaneously excited, while atoms
separated by more than the blockade radius do not interact at all; this
``PXP approximation'' implements the dynamical effect of the interactions
as a kinematic constraint. In reality, however, the atoms interact via
the van der Waals interaction $U_\sub{vdW}=C_6\,r^{-6}$ which contributes
also beyond the blockade radius, can lift the degeneracy $\delta E$ of
the ground state manifold, and reduce the gap $\Delta E$ that separates
it from excited states. One therefore expects that complexes with $\delta
E\approx 0$ in the vdW model are geometrically more constrained than in the
PXP model. This has an effect on the geometrical optimization of complexes
(see above) and the appropriate choice of the objective function: To take
into account residual interactions properly, heuristic functions like the
robustness should be replaced by realistic functions like \cref{eq:gamma1},
at least for small complexes where they can be computed exactly.

We checked that the three primitives in \cref{fig:pxp_primitives} can
be realized with perfect degeneracy $\delta E=0$ and gap $\Delta E>0$
in the vdW model by small adjustments of the detunings to balance residual
interactions. In principle, a \texttt{NOR}-complex can even be realized with
only three atoms, arranged in a triangle with precisely defined shape. This
is possible, because the two ancillas in \cref{fig:pxp_primitives}c were
only necessary to balance the energies of states with one and two input
ports active; in the vdW model, the same can be achieved by exploiting
the residual interaction between the two input ports. Which version of the
\texttt{NOR}-complex is more useful for implementations is an open question.

\subparagraph{Quantum phase diagrams.}
In this paper, we only studied the ground state manifold of the Hamiltonian
\eref{eq:H} without quantum fluctuations ($\Omega_i=0$). As has been
demonstrated in Refs.~\cite{Verresen_2021,Samajdar_2021}, the interplay of
quantum fluctuations ($\Omega_i>0$) and the strong blockade interactions
can give rise to interesting many-body quantum phases at zero temperature.
Thus it seems natural to explore the quantum phase diagrams of the proposed
spin-liquid tessellations in \cref{sec:examples}, for example numerically
using density matrix renormalization group (DMRG) techniques. Analytically,
one could derive the effective Hamiltonians on the constructed low-energy
manifolds for finite but small Rabi frequencies $\Omega_i\ll\Delta E$ in
perturbation theory~\cite{Bravyi_2011}. Note that in general one expects
the relative strengths of the effective terms to depend on the specific
complex used to implement the local constraints. This raises the subsequent
question whether these couplings can be \emph{tuned} by modifications of
the used complexes.

\subparagraph{Dynamical preparation.}
In recent experiments~\cite{Semeghini2021}, dynamical preparation
schemes have been used to prepare long-range entangled many-body states
out-of-equilibrium~\cite{Giudici2022,Sahay2022}. The idea is to use
``quasiadiabatic'' protocols $\Omega_i(t)$ and $\Delta_i(t)$ where the
detuning increases continuously to its target value while a finite Rabi
frequency ensures the coupling of different excitation patterns.
This allows for the preparation of non-trivial superpositions of states
in the low-energy subspace of the classical Hamiltonian \eref{eq:H}. It
would be interesting to explore the states of the proposed tessellations
that can be prepared by such dynamical protocols numerically, and
study the effects of defects in the intended logic of the complexes
due to local excitations. Similar questions arise for the primitives
in \cref{sec:primitives,sec:crossing} and circuits built from these by
amalgamation.

\subparagraph{Alternative platforms.}
Rydberg atoms in optical tweezer arrays are the most prominent and advanced
platform with a high level of coherent control that features a blockade
mechanism. Our paper originated in this context and is therefore phrased
in its terminology. It is important to keep in mind, however, that our main
results only require some sort of blockade mechanism, fine-grained control
over the geometric structure of the system, and locally tunable energy shifts
(like chemical potentials or magnetic fields). A natural follow-up question is
then whether there are alternative physical systems with these features. Both
our abstract framework and the introduced complexes could be applied to and
realized by such systems.

\section{Summary}
\label{sec:summary}

In this paper, we developed a framework to design planar structures of
atoms which can be excited into Rydberg states under the constraint of the
Rydberg blockade mechanism (``Rydberg complexes''). Our framework targets
the preparation of degenerate ground state manifolds that are characterized
locally by arbitrary Boolean constraints. We proved that the truth table of
an arbitrary Boolean function can be realized as ground state manifold by
decomposing its circuit representation into three primitives that leverage
the Rydberg blockade. Motivated by this existence claim, we then presented
provably minimal complexes that realize the most important primitives of
Boolean circuits, including a crossing complex that is needed to embed
non-planar circuits into the plane. As an application of our framework, we
constructed periodic Rydberg complexes with degenerate ground state manifolds
that map locally on the non-factorizable string-net Hilbert spaces of the
surface code (with Abelian topological order) and the Fibonacci model (with
non-Abelian topological order). In combination with quantum fluctuations,
these structures may be the starting point to prepare topologically ordered
states in upcoming quantum simulators. We concluded the paper with a discussion
of the geometric optimization of Rydberg complexes using numerical algorithms
to increase their robustness against geometric imperfections and the effects
of long-range van der Waals interactions.

Our results highlight the versatility of planar structures of atoms
that interact via the Rydberg blockade mechanism. We provide a conceptual
foundation for the rationales of \emph{geometric programming}, the encoding
and solution of problems by tailoring the geometry of atomic systems, and
\emph{synthetic quantum matter}, the goal-driven design of quantum materials
on the atomic level. Due to the noisiness of near-term experimental platforms,
the latter seems particularly promising because quantum phases come with an
inherent robustness against a finite density of excitations. This robustness
is less clear in the geometric programming paradigm were the search for
(near-)optimal solutions can be severely impeded by defects in the prepared
states, especially at scale.


\begin{acknowledgments}
    We thank Sebastian Weber for comments on the manuscript.
    This project has received funding from the French-German collaboration
    for joint projects in Natural, Life and Engineering (NLE) Sciences funded
    by the Deutsche Forschungsgemeinschaft (DFG) and the Agence National de
    la Recherche (ANR, project RYBOTIN).
\end{acknowledgments}


\bibliographystyle{bib/bibstyle.bst}
\bibliography{bib/bibliography}

\begin{thebibliography}{10}
\providecommand{\url}[1]{\texttt{#1}}
\providecommand{\urlprefix}{URL }
\expandafter\ifx\csname urlstyle\endcsname\relax
  \providecommand{\doi}[1]{doi:\discretionary{}{}{}#1}\else
  \providecommand{\doi}{doi:\discretionary{}{}{}\begingroup
  \urlstyle{rm}\Url}\fi
\providecommand{\eprint}[2][]{\url{#2}}

\bibitem{Schlosser2001}
N.~Schlosser, G.~Reymond, I.~Protsenko and P.~Grangier,
\newblock \emph{Sub-poissonian loading of single atoms in a microscopic dipole
  trap},
\newblock Nature \textbf{411}(6841), 1024 (2001),
\newblock \doi{10.1038/35082512}.

\bibitem{Saffman_2010}
M.~Saffman, T.~G. Walker and K.~M{\o}lmer,
\newblock \emph{Quantum information with {Rydberg} atoms},
\newblock Reviews of Modern Physics \textbf{82}(3), 2313 (2010),
\newblock \doi{10.1103/RevModPhys.82.2313}.

\bibitem{Nogrette2014}
F.~Nogrette, H.~Labuhn, S.~Ravets, D.~Barredo, L.~B{\'{e}}guin, A.~Vernier,
  T.~Lahaye and A.~Browaeys,
\newblock \emph{Single-atom trapping in holographic 2d arrays of microtraps
  with arbitrary geometries},
\newblock Physical Review X \textbf{4}(2), 021034 (2014),
\newblock \doi{10.1103/physrevx.4.021034}.

\bibitem{Barredo_2016}
D.~Barredo, S.~de~L{\'{e}}s{\'{e}}leuc, V.~Lienhard, T.~Lahaye and A.~Browaeys,
\newblock \emph{An atom-by-atom assembler of defect-free arbitrary
  two-dimensional atomic arrays},
\newblock Science \textbf{354}(6315), 1021 (2016),
\newblock \doi{10.1126/science.aah3778}.

\bibitem{Barredo2018}
D.~Barredo, V.~Lienhard, S.~de~L{\'{e}}s{\'{e}}leuc, T.~Lahaye and A.~Browaeys,
\newblock \emph{Synthetic three-dimensional atomic structures assembled atom by
  atom},
\newblock Nature \textbf{561}(7721), 79 (2018),
\newblock \doi{10.1038/s41586-018-0450-2}.

\bibitem{Weimer2010}
H.~Weimer, M.~Müller, I.~Lesanovsky, P.~Zoller and H.~P. Büchler,
\newblock \emph{A {Rydberg} quantum simulator},
\newblock Nature Physics \textbf{6}(5), 382 (2010),
\newblock \doi{10.1038/nphys1614}.

\bibitem{Georgescu2014}
I.~Georgescu, S.~Ashhab and F.~Nori,
\newblock \emph{Quantum simulation},
\newblock Reviews of Modern Physics \textbf{86}(1), 153 (2014),
\newblock \doi{10.1103/revmodphys.86.153}.

\bibitem{Gross2017}
C.~Gross and I.~Bloch,
\newblock \emph{Quantum simulations with ultracold atoms in optical lattices},
\newblock Science \textbf{357}(6355), 995 (2017),
\newblock \doi{10.1126/science.aal3837}.

\bibitem{Altman2021}
E.~Altman, K.~R. Brown, G.~Carleo, L.~D. Carr, E.~Demler, C.~Chin, B.~DeMarco,
  S.~E. Economou, M.~A. Eriksson, K.-M.~C. Fu, M.~Greiner, K.~R. Hazzard
  \emph{et~al.},
\newblock \emph{Quantum simulators: Architectures and opportunities},
\newblock {PRX} Quantum \textbf{2}(1), 017003 (2021),
\newblock \doi{10.1103/prxquantum.2.017003}.

\bibitem{Degen2017}
C.~Degen, F.~Reinhard and P.~Cappellaro,
\newblock \emph{Quantum sensing},
\newblock Reviews of Modern Physics \textbf{89}(3), 035002 (2017),
\newblock \doi{10.1103/revmodphys.89.035002}.

\bibitem{Pezze2018}
L.~Pezz{\`{e}}, A.~Smerzi, M.~K. Oberthaler, R.~Schmied and P.~Treutlein,
\newblock \emph{Quantum metrology with nonclassical states of atomic
  ensembles},
\newblock Reviews of Modern Physics \textbf{90}(3), 035005 (2018),
\newblock \doi{10.1103/revmodphys.90.035005}.

\bibitem{Ladd2010}
T.~D. Ladd, F.~Jelezko, R.~Laflamme, Y.~Nakamura, C.~Monroe and J.~L. O'Brien,
\newblock \emph{Quantum computers},
\newblock Nature \textbf{464}(7285), 45 (2010),
\newblock \doi{10.1038/nature08812}.

\bibitem{Henriet2020}
L.~Henriet, L.~Beguin, A.~Signoles, T.~Lahaye, A.~Browaeys, G.-O. Reymond and
  C.~Jurczak,
\newblock \emph{Quantum computing with neutral atoms},
\newblock Quantum \textbf{4}, 327 (2020),
\newblock \doi{10.22331/q-2020-09-21-327}.

\bibitem{Graham2022}
T.~M. Graham, Y.~Song, J.~Scott, C.~Poole, L.~Phuttitarn, K.~Jooya, P.~Eichler,
  X.~Jiang, A.~Marra, B.~Grinkemeyer, M.~Kwon, M.~Ebert \emph{et~al.},
\newblock \emph{Multi-qubit entanglement and algorithms on a neutral-atom
  quantum computer},
\newblock Nature \textbf{604}(7906), 457 (2022),
\newblock \doi{10.1038/s41586-022-04603-6}.

\bibitem{Bluvstein2022}
D.~Bluvstein, H.~Levine, G.~Semeghini, T.~T. Wang, S.~Ebadi, M.~Kalinowski,
  A.~Keesling, N.~Maskara, H.~Pichler, M.~Greiner, V.~Vuleti{\'{c}} and M.~D.
  Lukin,
\newblock \emph{A quantum processor based on coherent transport of entangled
  atom arrays},
\newblock Nature \textbf{604}(7906), 451 (2022),
\newblock \doi{10.1038/s41586-022-04592-6}.

\bibitem{Gallagher2006}
T.~Gallagher,
\newblock \emph{{Rydberg} atoms},
\newblock In \emph{Springer Handbook of Atomic, Molecular, and Optical
  Physics}, pp. 235--245. Springer New York,
\newblock \doi{10.1007/978-0-387-26308-3_14} (2006).

\bibitem{Sibalic2018}
N.~Sibalic and C.~S. Adams,
\newblock \emph{{Rydberg} Physics},
\newblock {IOP} Publishing,
\newblock \doi{10.1088/978-0-7503-1635-4} (2018).

\bibitem{Jaksch2000}
D.~Jaksch, J.~I. Cirac, P.~Zoller, S.~L. Rolston, R.~C{\^{o}}t{\'{e}} and M.~D.
  Lukin,
\newblock \emph{Fast quantum gates for neutral atoms},
\newblock Physical Review Letters \textbf{85}(10), 2208 (2000),
\newblock \doi{10.1103/physrevlett.85.2208}.

\bibitem{Tong2004}
D.~Tong, S.~M. Farooqi, J.~Stanojevic, S.~Krishnan, Y.~P. Zhang,
  R.~C{\^{o}}t{\'{e}}, E.~E. Eyler and P.~L. Gould,
\newblock \emph{Local blockade of {Rydberg} excitation in an ultracold gas},
\newblock Physical Review Letters \textbf{93}(6), 063001 (2004),
\newblock \doi{10.1103/physrevlett.93.063001}.

\bibitem{Singer2004}
K.~Singer, M.~Reetz-Lamour, T.~Amthor, L.~G. Marcassa and M.~Weidemüller,
\newblock \emph{Suppression of excitation and spectral broadening induced by
  interactions in a cold gas of {Rydberg} atoms},
\newblock Physical Review Letters \textbf{93}(16), 163001 (2004),
\newblock \doi{10.1103/physrevlett.93.163001}.

\bibitem{Gaetan2009}
A.~Gaëtan, Y.~Miroshnychenko, T.~Wilk, A.~Chotia, M.~Viteau, D.~Comparat,
  P.~Pillet, A.~Browaeys and P.~Grangier,
\newblock \emph{Observation of collective excitation of two individual atoms in
  the {Rydberg} blockade regime},
\newblock Nature Physics \textbf{5}(2), 115 (2009),
\newblock \doi{10.1038/nphys1183}.

\bibitem{Urban2009}
E.~Urban, T.~A. Johnson, T.~Henage, L.~Isenhower, D.~D. Yavuz, T.~G. Walker and
  M.~Saffman,
\newblock \emph{Observation of {Rydberg} blockade between two~atoms},
\newblock Nature Physics \textbf{5}(2), 110 (2009),
\newblock \doi{10.1038/nphys1178}.

\bibitem{Ebadi2021}
S.~Ebadi, T.~T. Wang, H.~Levine, A.~Keesling, G.~Semeghini, A.~Omran,
  D.~Bluvstein, R.~Samajdar, H.~Pichler, W.~W. Ho, S.~Choi, S.~Sachdev
  \emph{et~al.},
\newblock \emph{Quantum phases of matter on a 256-atom programmable quantum
  simulator},
\newblock Nature \textbf{595}(7866), 227 (2021),
\newblock \doi{10.1038/s41586-021-03582-4}.

\bibitem{Scholl2021}
P.~Scholl, M.~Schuler, H.~J. Williams, A.~A. Eberharter, D.~Barredo, K.-N.
  Schymik, V.~Lienhard, L.-P. Henry, T.~C. Lang, T.~Lahaye, A.~M. Läuchli and
  A.~Browaeys,
\newblock \emph{Quantum simulation of {2D} antiferromagnets with hundreds of
  {Rydberg} atoms},
\newblock Nature \textbf{595}(7866), 233 (2021),
\newblock \doi{10.1038/s41586-021-03585-1}.

\bibitem{Schymik2022}
K.-N. Schymik, B.~Ximenez, E.~Bloch, D.~Dreon, A.~Signoles, F.~Nogrette,
  D.~Barredo, A.~Browaeys and T.~Lahaye,
\newblock \emph{In situ equalization of single-atom loading in large-scale
  optical tweezer arrays},
\newblock Physical Review A \textbf{106}(2), 022611 (2022),
\newblock \doi{10.1103/physreva.106.022611}.

\bibitem{Note1}
NISQ = Noisy Intermediate-Scale Quantum technology, i.e., near-term quantum
  technology without full-fledged quantum error correction, see Ref.~\cite
  {Preskill2018}.

\bibitem{Pichler2018b}
H.~Pichler, S.-T. Wang, L.~Zhou, S.~Choi and M.~D. Lukin,
\newblock \emph{Quantum optimization for maximum independent set using
  {Rydberg} atom arrays}  (2018),
\newblock \doi{10.48550/arxiv.1808.10816}.

\bibitem{Clark1990}
B.~N. Clark, C.~J. Colbourn and D.~S. Johnson,
\newblock \emph{Unit disk graphs},
\newblock Discrete Mathematics \textbf{86}(1-3), 165 (1990),
\newblock \doi{10.1016/0012-365x(90)90358-o}.

\bibitem{Pichler2018a}
H.~Pichler, S.-T. Wang, L.~Zhou, S.~Choi and M.~D. Lukin,
\newblock \emph{Computational complexity of the {Rydberg} blockade in two
  dimensions}  (2018),
\newblock \doi{10.48550/arxiv.1809.04954}.

\bibitem{Serret_2020}
M.~F. Serret, B.~Marchand and T.~Ayral,
\newblock \emph{Solving optimization problems with {Rydberg} analog quantum
  computers: Realistic requirements for quantum advantage using noisy
  simulation and classical benchmarks},
\newblock Physical Review A \textbf{102}(5), 052617 (2020),
\newblock \doi{10.1103/PhysRevA.102.052617}.

\bibitem{Wurtz2022}
J.~Wurtz, P.~L.~S. Lopes, N.~Gemelke, A.~Keesling and S.~Wang,
\newblock \emph{Industry applications of neutral-atom quantum computing solving
  independent set problems}  (2022),
\newblock \doi{10.48550/arxiv.2205.08500}.

\bibitem{Dalyac2022}
C.~Dalyac and L.~Henriet,
\newblock \emph{Embedding the {MIS} problem for non-local graphs with bounded
  degree using {3D} arrays of atoms}  (2022),
\newblock \doi{10.48550/arxiv.2209.05164}.

\bibitem{Nguyen2022}
M.-T. Nguyen, J.-G. Liu, J.~Wurtz, M.~D. Lukin, S.-T. Wang and H.~Pichler,
\newblock \emph{Quantum optimization with arbitrary connectivity using
  {Rydberg} atom arrays},
\newblock PRX Quantum \textbf{4}, 010316 (2023),
\newblock \doi{10.1103/PRXQuantum.4.010316}.

\bibitem{Lanthaler2023}
M.~Lanthaler, C.~Dlaska, K.~Ender and W.~Lechner,
\newblock \emph{Rydberg-blockade-based parity quantum optimization},
\newblock Phys. Rev. Lett. \textbf{130}, 220601 (2023),
\newblock \doi{10.1103/PhysRevLett.130.220601}.

\bibitem{Jeong2023}
S.~Jeong, M.~Kim, M.~Hhan and J.~Ahn,
\newblock \emph{Quantum programming of the satisfiability problem with
  {Rydberg} atom graphs}  (2023),
\newblock \doi{10.48550/arxiv.2302.14369}.

\bibitem{Note2}
Of course one should not expect an exponential speedup by these mappings as it
  is widely believed~\cite {Bennett1997} that $\protect \text {\protect \textsf
  {NP}}\nsubseteq \protect \text {\protect \textsf {BQP}}$.

\bibitem{Byun2022}
A.~Byun, M.~Kim and J.~Ahn,
\newblock \emph{Finding the maximum independent sets of platonic graphs using
  {Rydberg} atoms},
\newblock {PRX} Quantum \textbf{3}(3), 030305 (2022),
\newblock \doi{10.1103/prxquantum.3.030305}.

\bibitem{Kim2022}
M.~Kim, K.~Kim, J.~Hwang, E.-G. Moon and J.~Ahn,
\newblock \emph{{Rydberg} quantum wires for maximum independent set problems},
\newblock Nature Physics \textbf{18}(7), 755 (2022),
\newblock \doi{10.1038/s41567-022-01629-5}.

\bibitem{Ebadi2022}
S.~Ebadi, A.~Keesling, M.~Cain, T.~T. Wang, H.~Levine, D.~Bluvstein,
  G.~Semeghini, A.~Omran, J.-G. Liu, R.~Samajdar, X.-Z. Luo, B.~Nash
  \emph{et~al.},
\newblock \emph{Quantum optimization of maximum independent set using {Rydberg}
  atom arrays},
\newblock Science \textbf{376}(6598), 1209 (2022),
\newblock \doi{10.1126/science.abo6587}.

\bibitem{Celi2020}
A.~Celi, B.~Vermersch, O.~Viyuela, H.~Pichler, M.~D. Lukin and P.~Zoller,
\newblock \emph{Emerging two-dimensional gauge theories in {Rydberg}
  configurable arrays},
\newblock Physical Review X \textbf{10}(2), 021057 (2020),
\newblock \doi{10.1103/physrevx.10.021057}.

\bibitem{Verresen_2021}
R.~Verresen, M.~D. Lukin and A.~Vishwanath,
\newblock \emph{Prediction of toric code topological order from {Rydberg}
  blockade},
\newblock Phys. Rev. X \textbf{11}, 031005 (2021),
\newblock \doi{10.1103/PhysRevX.11.031005}.

\bibitem{Samajdar_2021}
R.~Samajdar, W.~W. Ho, H.~Pichler, M.~D. Lukin and S.~Sachdev,
\newblock \emph{Quantum phases of {Rydberg} atoms on a kagome lattice},
\newblock Proceedings of the National Academy of Sciences \textbf{118}(4),
  e2015785118 (2021),
\newblock \doi{10.1073/pnas.2015785118}.

\bibitem{Moessner2001}
R.~Moessner, S.~L. Sondhi and E.~Fradkin,
\newblock \emph{Short-ranged resonating valence bond physics, quantum dimer
  models, and ising gauge theories},
\newblock Physical Review B \textbf{65}(2), 024504 (2001),
\newblock \doi{10.1103/physrevb.65.024504}.

\bibitem{Semeghini2021}
G.~Semeghini, H.~Levine, A.~Keesling, S.~Ebadi, T.~T. Wang, D.~Bluvstein,
  R.~Verresen, H.~Pichler, M.~Kalinowski, R.~Samajdar, A.~Omran, S.~Sachdev
  \emph{et~al.},
\newblock \emph{Probing topological spin liquids on a programmable quantum
  simulator},
\newblock Science \textbf{374}(6572), 1242 (2021),
\newblock \doi{10.1126/science.abi8794}.

\bibitem{Giudici2022}
G.~Giudici, M.~D. Lukin and H.~Pichler,
\newblock \emph{Dynamical preparation of quantum spin liquids in {Rydberg} atom
  arrays},
\newblock Physical Review Letters \textbf{129}(9), 090401 (2022),
\newblock \doi{10.1103/physrevlett.129.090401}.

\bibitem{Sahay2022}
R.~Sahay, A.~Vishwanath and R.~Verresen,
\newblock \emph{Quantum spin puddles and lakes: {NISQ}-era spin liquids from
  non-equilibrium dynamics}  (2022),
\newblock \doi{10.48550/arxiv.2211.01381}.

\bibitem{Kitaev2003}
A.~Kitaev,
\newblock \emph{Fault-tolerant quantum computation by anyons},
\newblock Annals of Physics \textbf{303}(1), 2 (2003),
\newblock \doi{10.1016/s0003-4916(02)00018-0}.

\bibitem{Levin2005}
M.~A. Levin and X.-G. Wen,
\newblock \emph{String-net condensation: {A} physical mechanism for topological
  phases},
\newblock Physical Review B \textbf{71}, 045110 (2005),
\newblock \doi{10.1103/physrevb.71.045110}.

\bibitem{Note3}
To prevent misconceptions, we stress that the term ``complex'' in ``Rydberg
  complex'' refers to a \protect \emph {spatial arrangement} of Rydberg atoms
  (with additional data) and is \protect \emph {not} related to the
  mathematical concept of an \protect \emph {independence complex}, i.e., the
  family of independent sets of a graph.

\bibitem{Lukin_2001}
M.~D. Lukin, M.~Fleischhauer, R.~Cote, L.~M. Duan, D.~Jaksch, J.~I. Cirac and
  P.~Zoller,
\newblock \emph{Dipole blockade and quantum information processing in
  mesoscopic atomic ensembles},
\newblock Physical Review Letters \textbf{87}(3), 037901 (2001),
\newblock \doi{10.1103/PhysRevLett.87.037901}.

\bibitem{Schauss2015}
P.~Schau{\ss}, J.~Zeiher, T.~Fukuhara, S.~Hild, M.~Cheneau, T.~Macr{\`{\i}},
  T.~Pohl, I.~Bloch and C.~Gross,
\newblock \emph{Crystallization in {I}sing quantum magnets},
\newblock Science \textbf{347}(6229), 1455 (2015),
\newblock \doi{10.1126/science.1258351}.

\bibitem{Labuhn2016}
H.~Labuhn, D.~Barredo, S.~Ravets, S.~de~L{\'{e}}s{\'{e}}leuc, T.~Macr{\`{\i}},
  T.~Lahaye and A.~Browaeys,
\newblock \emph{Tunable two-dimensional arrays of single {R}ydberg atoms for
  realizing quantum {I}sing models},
\newblock Nature \textbf{534}(7609), 667 (2016),
\newblock \doi{10.1038/nature18274}.

\bibitem{Bernien_2017}
H.~Bernien, S.~Schwartz, A.~Keesling, H.~Levine, A.~Omran, H.~Pichler, S.~Choi,
  A.~S. Zibrov, M.~Endres, M.~Greiner, V.~Vuleti{\'{c}} and M.~D. Lukin,
\newblock \emph{Probing many-body dynamics on a 51-atom quantum simulator},
\newblock Nature \textbf{551}(7682), 579 (2017),
\newblock \doi{10.1038/nature24622}.

\bibitem{Labuhn_2014}
H.~Labuhn, S.~Ravets, D.~Barredo, L.~B{\'{e}}guin, F.~Nogrette, T.~Lahaye and
  A.~Browaeys,
\newblock \emph{Single-atom addressing in microtraps for quantum-state
  engineering using {Rydberg} atoms},
\newblock Physical Review A \textbf{90}(2), 023415 (2014),
\newblock \doi{10.1103/PhysRevA.90.023415}.

\bibitem{Omran2019}
A.~Omran, H.~Levine, A.~Keesling, G.~Semeghini, T.~T. Wang, S.~Ebadi,
  H.~Bernien, A.~S. Zibrov, H.~Pichler, S.~Choi, J.~Cui, M.~Rossignolo
  \emph{et~al.},
\newblock \emph{Generation and manipulation of {Schrödinger} cat states in
  {Rydberg} atom arrays},
\newblock Science \textbf{365}(6453), 570 (2019),
\newblock \doi{10.1126/science.aax9743}.

\bibitem{Lesanovsky2011}
I.~Lesanovsky,
\newblock \emph{Many-body spin interactions and the ground state of a dense
  {Rydberg} lattice gas},
\newblock Physical Review Letters \textbf{106}(2), 025301 (2011),
\newblock \doi{10.1103/physrevlett.106.025301}.

\bibitem{Browaeys_2020}
A.~Browaeys and T.~Lahaye,
\newblock \emph{Many-body physics with individually controlled {Rydberg}
  atoms},
\newblock Nature Physics \textbf{16}(2), 132 (2020),
\newblock \doi{10.1038/s41567-019-0733-z}.

\bibitem{Davis1994}
M.~Davis, R.~Sigal and E.~Weyuker,
\newblock \emph{Computability, Complexity, and Languages: Fundamentals of
  Theoretical Computer Science},
\newblock Computer Science and Scientific Computing. Elsevier Science,
\newblock ISBN 9780080502465 (1994).

\bibitem{Kogut1979}
J.~B. Kogut,
\newblock \emph{An introduction to lattice gauge theory and spin systems},
\newblock Reviews of Modern Physics \textbf{51}, 659 (1979),
\newblock \doi{10.1103/revmodphys.51.659}.

\bibitem{Note4}
Here, \protect \emph {disjoint} means that $(x,y)\protect \neq (x',y')$ implies
  $x\protect \neq x'$ \protect \emph {and} $y\protect \neq y'$; thus $\gamma $
  can be interpreted as a \protect \emph {partial bijection} between character
  positions of the two languages $L_1$ and $L_2$.

\bibitem{Wernick1942}
W.~Wernick,
\newblock \emph{Complete sets of logical functions},
\newblock Transactions of the American Mathematical Society \textbf{51}(0), 117
  (1942),
\newblock \doi{10.1090/s0002-9947-1942-0005281-2}.

\bibitem{Sheffer1913}
H.~M. Sheffer,
\newblock \emph{A set of five independent postulates for {B}oolean algebras,
  with application to logical constants},
\newblock Transactions of the American Mathematical Society \textbf{14}(4), 481
  (1913),
\newblock \doi{10.1090/s0002-9947-1913-1500960-1}.

\bibitem{Dewdney_1979}
A.~K. Dewdney,
\newblock \emph{Logic circuits in the plane},
\newblock {ACM} {SIGACT} News \textbf{10}(3), 38 (1979),
\newblock \doi{10.1145/1113654.1113655}.

\bibitem{Stastny2023}
S.~Stastny,
\newblock \emph{Functional {Rydberg} Complexes in the {PXP}-Model},
\newblock Master's thesis, University of Stuttgart, Stuttgart (2023).

\bibitem{Bravyi1998}
S.~B. Bravyi and A.~Y. Kitaev,
\newblock \emph{Quantum codes on a lattice with boundary},
\newblock arXiv e-prints quant-ph/9811052 (1998),
\newblock \eprint{quant-ph/9811052}.

\bibitem{Kitaev2006}
A.~Kitaev,
\newblock \emph{Anyons in an exactly solved model and beyond},
\newblock Annals of Physics \textbf{321}(1), 2 (2006),
\newblock \doi{10.1016/j.aop.2005.10.005}.

\bibitem{Dennis2002}
E.~Dennis, A.~Kitaev, A.~Landahl and J.~Preskill,
\newblock \emph{Topological quantum memory},
\newblock Journal of Mathematical Physics \textbf{43}(9), 4452 (2002),
\newblock \doi{10.1063/1.1499754}.

\bibitem{Barends2014}
R.~Barends, J.~Kelly, A.~Megrant, A.~Veitia, D.~Sank, E.~Jeffrey, T.~C. White,
  J.~Mutus, A.~G. Fowler, B.~Campbell, Y.~Chen, Z.~Chen \emph{et~al.},
\newblock \emph{Superconducting quantum circuits at the surface code threshold
  for fault tolerance},
\newblock Nature \textbf{508}, 500 (2014),
\newblock \doi{10.1038/nature13171}.

\bibitem{Kelly2015}
J.~Kelly, R.~Barends, A.~G. Fowler, A.~Megrant, E.~Jeffrey, T.~C. White,
  D.~Sank, J.~Y. Mutus, B.~Campbell, Y.~Chen, Z.~Chen, B.~Chiaro \emph{et~al.},
\newblock \emph{State preservation by repetitive error detection in a
  superconducting quantum circuit},
\newblock Nature \textbf{519}, 66 (2015),
\newblock \doi{10.1038/nature14270}.

\bibitem{Kitaev2006a}
A.~Kitaev and J.~Preskill,
\newblock \emph{Topological entanglement entropy},
\newblock Physical Review Letters \textbf{96}, 110404 (2006),
\newblock \doi{10.1103/physrevlett.96.110404}.

\bibitem{Levin2006}
M.~Levin and X.-G. Wen,
\newblock \emph{Detecting topological order in a ground state wave function},
\newblock Physical Review Letters \textbf{96}, 110405 (2006),
\newblock \doi{10.1103/physrevlett.96.110405}.

\bibitem{Freedman2002}
M.~H. Freedman, A.~Kitaev, M.~J. Larsen and Z.~Wang,
\newblock \emph{Topological quantum computation},
\newblock Bulletin of the American Mathematical Society \textbf{40}(1), 31
  (2002),
\newblock \doi{10.1090/s0273-0979-02-00964-3}.

\bibitem{Nayak2008}
C.~Nayak, S.~H. Simon, A.~Stern, M.~Freedman and S.~Das~Sarma,
\newblock \emph{Non-abelian anyons and topological quantum computation},
\newblock Reviews of Modern Physics \textbf{80}, 1083 (2008),
\newblock \doi{10.1103/revmodphys.80.1083}.

\bibitem{Wang2010}
Z.~Wang,
\newblock \emph{Topological Quantum Computation},
\newblock No. 112 in Regional Conference Series in Mathematics / Conference
  Board of the Mathematical Sciences. American Mathematical Society,
  Providence, Rhode Island,
\newblock ISBN 9780821849309 (2010).

\bibitem{Freedman2002a}
M.~H. Freedman, M.~Larsen and Z.~Wang,
\newblock \emph{A modular functor which is universal for quantum computation},
\newblock Communications in Mathematical Physics \textbf{227}(3), 605 (2002),
\newblock \doi{10.1007/s002200200645}.

\bibitem{Preskill2004}
J.~Preskill,
\newblock \emph{Lecture notes for physics 219: Quantumcomputation} (2004).

\bibitem{Bonesteel2005}
N.~E. Bonesteel, L.~Hormozi, G.~Zikos and S.~H. Simon,
\newblock \emph{Braid topologies for quantum computation},
\newblock Physical Review Letters \textbf{95}(14), 140503 (2005),
\newblock \doi{10.1103/physrevlett.95.140503}.

\bibitem{Read1999}
N.~Read and E.~Rezayi,
\newblock \emph{Beyond paired quantum {Hall} states: Parafermions and
  incompressible states in the first excited {Landau} level},
\newblock Physical Review B \textbf{59}(12), 8084 (1999),
\newblock \doi{10.1103/physrevb.59.8084}.

\bibitem{Xia2004}
J.~S. Xia, W.~Pan, C.~L. Vicente, E.~D. Adams, N.~S. Sullivan, H.~L. Stormer,
  D.~C. Tsui, L.~N. Pfeiffer, K.~W. Baldwin and K.~W. West,
\newblock \emph{Electron correlation in the second {Landau} level: A
  competition between many nearly degenerate quantum phases},
\newblock Physical Review Letters \textbf{93}(17), 176809 (2004),
\newblock \doi{10.1103/physrevlett.93.176809}.

\bibitem{Fidkowski2009}
L.~Fidkowski, M.~Freedman, C.~Nayak, K.~Walker and Z.~Wang,
\newblock \emph{From string nets to nonabelions},
\newblock Communications in Mathematical Physics \textbf{287}(3), 805 (2009),
\newblock \doi{10.1007/s00220-009-0757-9}.

\bibitem{Fendley2008}
P.~Fendley,
\newblock \emph{Topological order from quantum loops and nets},
\newblock Annals of Physics \textbf{323}(12), 3113 (2008),
\newblock \doi{10.1016/j.aop.2008.04.011}.

\bibitem{Simon2013}
S.~H. Simon and P.~Fendley,
\newblock \emph{Exactly solvable lattice models with crossing symmetry},
\newblock Journal of Physics A: Mathematical and Theoretical \textbf{46}(10),
  105002 (2013),
\newblock \doi{10.1088/1751-8113/46/10/105002}.

\bibitem{Schulz2013}
M.~D. Schulz, S.~Dusuel, K.~P. Schmidt and J.~Vidal,
\newblock \emph{Topological phase transitions in the golden string-net model},
\newblock Physical Review Letters \textbf{110}(14), 147203 (2013),
\newblock \doi{10.1103/physrevlett.110.147203}.

\bibitem{Breu1998}
H.~Breu and D.~G. Kirkpatrick,
\newblock \emph{Unit disk graph recognition is {NP}-hard},
\newblock Computational Geometry \textbf{9}(1-2), 3 (1998),
\newblock \doi{10.1016/s0925-7721(97)00014-x}.

\bibitem{Virtanen2020}
P.~Virtanen, R.~Gommers, T.~E. Oliphant, M.~Haberland, T.~Reddy, D.~Cournapeau,
  E.~Burovski, P.~Peterson, W.~Weckesser, J.~Bright, S.~J. van~der Walt,
  M.~Brett \emph{et~al.},
\newblock \emph{{SciPy} 1.0: fundamental algorithms for scientific computing in
  python},
\newblock Nature Methods \textbf{17}(3), 261 (2020),
\newblock \doi{10.1038/s41592-019-0686-2}.

\bibitem{Tsallis1996}
C.~Tsallis and D.~A. Stariolo,
\newblock \emph{Generalized simulated annealing},
\newblock Physica A: Statistical Mechanics and its Applications
  \textbf{233}(1-2), 395 (1996),
\newblock \doi{10.1016/s0378-4371(96)00271-3}.

\bibitem{Andricioaei1996}
I.~Andricioaei and J.~E. Straub,
\newblock \emph{Generalized simulated annealing algorithms using tsallis
  statistics: Application to conformational optimization of a tetrapeptide},
\newblock Physical Review E \textbf{53}(4), R3055 (1996),
\newblock \doi{10.1103/physreve.53.r3055}.

\bibitem{Nelder1965}
J.~A. Nelder and R.~Mead,
\newblock \emph{A simplex method for function minimization},
\newblock The Computer Journal \textbf{7}(4), 308 (1965),
\newblock \doi{10.1093/comjnl/7.4.308}.

\bibitem{Gao2010}
F.~Gao and L.~Han,
\newblock \emph{Implementing the {Nelder-Mead} simplex algorithm with~adaptive
  parameters},
\newblock Computational Optimization and Applications \textbf{51}(1), 259
  (2010),
\newblock \doi{10.1007/s10589-010-9329-3}.

\bibitem{data}
S.~Stastny, H.~P. Büchler and N.~Lang,
\newblock \emph{{Data for ``Functional completeness of planar Rydberg blockade
  structures''}},
\newblock \doi{10.18419/darus-3307} (2023).

\bibitem{gross2013handbook}
J.~Gross, J.~Yellen and P.~Zhang,
\newblock \emph{Handbook of Graph Theory, Second Edition},
\newblock Discrete Mathematics and Its Applications. Taylor \& Francis,
\newblock ISBN 9781439880180 (2013).

\bibitem{McDiarmid2013}
C.~McDiarmid and T.~Müller,
\newblock \emph{Integer realizations of disk and segment graphs},
\newblock Journal of Combinatorial Theory, Series B \textbf{103}(1), 114
  (2013),
\newblock \doi{10.1016/j.jctb.2012.09.004}.

\bibitem{Kuhn2004}
F.~Kuhn, T.~Moscibroda and R.~Wattenhofer,
\newblock \emph{Unit disk graph approximation},
\newblock In \emph{Proceedings of the 2004 joint workshop on Foundations of
  mobile computing - {DIALM}-{POMC} {\textquotesingle}04}. {ACM} Press,
\newblock \doi{10.1145/1022630.1022634} (2004).

\bibitem{Bravyi_2011}
S.~Bravyi, D.~P. DiVincenzo and D.~Loss,
\newblock \emph{{S}chrieffer-{W}olff transformation for quantum many-body
  systems},
\newblock Annals of Physics \textbf{326}(10), 2793 (2011),
\newblock \doi{10.1016/j.aop.2011.06.004}.

\bibitem{Preskill2018}
J.~Preskill,
\newblock \emph{Quantum computing in the {NISQ} era and beyond},
\newblock Quantum \textbf{2}, 79 (2018),
\newblock \doi{10.22331/q-2018-08-06-79}.

\bibitem{Bennett1997}
C.~H. Bennett, E.~Bernstein, G.~Brassard and U.~Vazirani,
\newblock \emph{Strengths and weaknesses of quantum computing},
\newblock {SIAM} Journal on Computing \textbf{26}(5), 1510 (1997),
\newblock \doi{10.1137/s0097539796300933}.

\bibitem{Xiang1997}
Y.~Xiang, D.~Sun, W.~Fan and X.~Gong,
\newblock \emph{Generalized simulated annealing algorithm and its application
  to the thomson model},
\newblock Physics Letters A \textbf{233}(3), 216 (1997),
\newblock \doi{10.1016/s0375-9601(97)00474-x}.

\bibitem{Xiang2000}
Y.~Xiang and X.~G. Gong,
\newblock \emph{Efficiency of generalized simulated annealing},
\newblock Physical Review E \textbf{62}(3), 4473 (2000),
\newblock \doi{10.1103/physreve.62.4473}.

\end{thebibliography}

\clearpage


\onecolumngrid
\appendix

\section{Minimality of logic primitives}
\label{app:pxp}

Here we prove the claims in \cref{sec:completeness} and \cref{sec:primitives}
about the minimality of the logic primitives. The proofs in this section do
\emph{not} require geometric arguments (i.e.\ whether a given blockade graph is
a unit disk graph or not). This makes the claims independent of the embedding
dimension; in particular, they remain valid for three-dimensional complexes.

We start with a few general remarks.
First, the languages we seek to implement as ground state manifolds (GSM)
are \emph{irreducible} in the sense that they cannot be written as a product
of two smaller languages.  [The product of two formal languages is simply
the set of all words from the first concatenated with all words from the
second.] This is easy to check for all Boolean gates by inspecting their
truth tables. The crucial point is that irreducible languages can only be
implemented by complexes with \emph{connected} blockade graphs.

Second, because we are only interested in GSM of PXP models, all detunings
can be assumed to be strictly positive, $\Delta_i>0$.
Indeed, atoms with \emph{negative} detuning cannot be excited in the GSM
so that they can be deleted from the complex without changing the GSM (and
without closing the gap).
The argument against atoms with \emph{vanishing} detuning is more subtle.
If such an atom is not excited in any of the GSM states, it can be deleted
without changing the GSM. If it \emph{is} excited in some of the GSM states,
there is always an otherwise identical state in the GSM where it is not
excited. Such an atom therefore must be a port because as an ancilla it would
add internal degrees of freedom that are not accessible via the ports (this
follows from our definition of a complex). The language that corresponds
to a complex with a zero-detuning port therefore has the property that
for every word with a ``1'' at the corresponding position, there must be
an otherwise identical word with a ``0''. (This does not imply that the
language is reducible; for example, $L=\{111,011,000\}$ has this property
for the first letter but is irreducible.) While such languages do exist,
they cannot be truth tables of Boolean functions because such a port cannot
be used as an input or an output (assuming we forbid ``dummy'' inputs that
have no effect on the output). All languages discussed and implemented in this
paper (also the ones for the vertex complexes of spin liquids) do \emph{not}
have this property, hence we can assume non-vanishing detunings.

Because of the positivity of all detunings, ground states are always given by
\emph{maximal independent sets} (MIS*) of the blockade graph. [A \emph{maximal}
independent set is a subset of vertices such that (1) no two vertices of the
set are connected by an edge of the graph and (2) no vertex can be added to
the set without violating (1). \emph{Maximum} independent sets (MIS) are the
largest \emph{maximal} independent sets.] The inverse is not necessarily
true: Depending on the detunings, not every MIS* describes a ground state
configuration (an example is the ring-like \texttt{NOR}-complex).

\subsection{\texttt{CPY}-complex}
\label{app:cpy}

\begin{lemma}
    A \texttt{CPY}-complex cannot be realized with less than 4 atoms
    (1 ancilla).
\end{lemma}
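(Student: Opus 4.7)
The plan is to proceed by contradiction and show that no 3-atom complex can realize the \texttt{CPY}-language $L_\texttt{CPY} = \{000,111\}$. Since $L_\texttt{CPY}$ has word length $n=3$, a complex with $N=3$ atoms necessarily has zero ancillas; all three atoms are forced to be ports and to map bijectively onto the letter positions A, Q, R.

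First I would exploit the presence of the all-ones word. Because $\ket{111} \in \H_0[\C]$ must be a finite-energy ground state of the PXP Hamiltonian, the configuration in which all three ports are simultaneously excited has to be admissible under the blockade constraints. Hence no pair of the three atoms can be in blockade, which forces the blockade graph $B$ to be the edgeless graph on three vertices. With no blockade edges present, the Hamiltonian reduces to $H[\C] = -\sum_{i=1}^{3}\Delta_i n_i$.

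Next I would invoke the general observation at the start of this appendix, namely that in a minimal complex every atom can be taken to have strictly positive detuning (negative $\Delta_i$ deletes the atom, and vanishing $\Delta_i$ is incompatible with the port interpretation demanded by $L_\texttt{CPY}$, whose two words disagree in every position). It then follows at once that the unique ground state of $H[\C]$ is $\ket{111}$ with energy $-\sum_i\Delta_i < 0$, while $\ket{000}$ has energy $0$. This contradicts $\ket{000} \in \H_0[\C]$ and rules out $N=3$. Together with the explicit 4-atom realization exhibited in \cref{fig:pxp_primitives}b, this gives the claimed lower bound and shows it is tight.

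There is no real obstacle in this argument: its entire force rests on the single observation that the presence of the all-ones word prohibits every pairwise port blockade, leaving no mechanism to energetically degenerate $\ket{000}$ with $\ket{111}$. The only subtlety is the clean appeal to the positivity of detunings, but that reduction has already been carried out in the preamble and can simply be cited.
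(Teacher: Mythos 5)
Your argument is correct and follows essentially the same route as the paper's own proof: the word $111$ forces an edgeless blockade graph, positivity of the detunings (as justified in the appendix preamble) then gives $E_{111}=-(\Delta_1+\Delta_2+\Delta_3)<0=E_{000}$, contradicting the required degeneracy. The paper additionally notes a second, purely kinematic argument (an irreducible language needs a connected blockade graph), but your energetic version matches its primary proof.
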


\begin{proof}
    Assume there is a complex without ancillas described by
    \begin{align}
        H=-\Delta_1 n_1-\Delta_2 n_2-\Delta_3 n_3=: E_{n_1n_2n_3}\,.
    \end{align}
    Since $(n_1n_2n_3)=(111)$ must be a ground state of the complex,
    none of the pairs of the atoms can be in blockade so that there is
    no kinematic constraint on the configurations $(n_1n_2n_3)$. To be a
    \texttt{CPY}-complex, it must be
    \begin{align}
        -(\Delta_1+\Delta_2+\Delta_3)=E_{111}\stackrel{!}{=}E_{000}=0
        \quad\text{and}\quad
        E_{n_1n_2n_3}> 0
        \quad\text{for all}\quad(n_1n_2n_3)\neq (000),(111)\,.
    \end{align}
    The finite-gap condition requires in particular $\Delta_i>0$ for all
    $i=1,2,3$ which leads to $-(\Delta_1+\Delta_2+\Delta_3)<0$ and thereby
    contradicts the degeneracy condition.

    \emph{Alternative argument:} The copy language $L_\texttt{CPY}=\{000,111\}$
    is irreducible. Since $(111)$ must be in the GSM, the only admissible
    blockade graph is the trivial graph on three vertices without edges:
    $B=(V=\{1,2,3\},E=\emptyset)$. But a disconnected blockade graph cannot
    implement an irreducible language.
\end{proof}

\subsection{\texttt{NOR}-complex}
\label{app:nor}

\begin{lemma}
    A \texttt{NOR}-complex cannot be realized with less than 5 atoms
    (2 ancillas).
\end{lemma}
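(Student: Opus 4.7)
The plan is to rule out realizations with $N=3$ atoms (no ancillas) and with $N=4$ atoms (one ancilla) separately, leveraging the general facts established at the start of this appendix: all detunings may be assumed strictly positive, and in each ground-state word $\vec x \in L_\texttt{NOR}$ the ancilla state must be uniquely determined (since an ancilla is not allowed to contribute an independent low-energy degree of freedom).

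For the 3-atom case with ports $A,B,Q$, the fact that $(110) \in L_\texttt{NOR}$ immediately forces the pair $(A,B)$ not to be in blockade, so the energy of each of the four ground configurations is just $-\sum_i \Delta_i x_i$. Degeneracy of the GSM then requires
\[
-\Delta_Q \;=\; -\Delta_B \;=\; -\Delta_A \;=\; -\Delta_A-\Delta_B \, ,
\]
which forces $\Delta_A = \Delta_B = 0$, contradicting strict positivity. Notably this conclusion is independent of whether $(A,Q)$ and $(B,Q)$ are in blockade, because any additional port-port blockade is automatically satisfied by configurations in $L_\texttt{NOR}$ and therefore does not alter their energies.

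For the 4-atom case I introduce a single ancilla with detuning $\tilde\Delta_1 > 0$ and parametrize the argument by the subset $N \subseteq \{A,B,Q\}$ of ports in blockade with it. Uniqueness of the ancilla state in each ground configuration, together with $\tilde\Delta_1 > 0$, implies that the ancilla sits in state $1$ on a given $\vec x \in L_\texttt{NOR}$ if and only if $\vec x$ vanishes on all positions in $N$. Consequently the total energy of each GSM configuration reads $E(\vec x) = -\sum_i \Delta_i x_i - \tilde\Delta_1 \cdot [\vec x|_N = \vec 0]$, with $[\cdot]$ the Iverson bracket. The four-fold degeneracy then yields three linear equations in the detunings. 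The two extreme cases $N = \emptyset$ and $N = \{A,B,Q\}$ make the ancilla contribution constant across the GSM and thus reduce immediately to the 3-atom obstruction. For each of the remaining six subsets, I expect to read off a relation forcing one of $\Delta_A, \Delta_B$ to vanish; for instance, $N=\{A,Q\}$ gives $\Delta_Q = \Delta_B + \tilde\Delta_1 = \Delta_A = \Delta_A + \Delta_B$, hence $\Delta_B = 0$. The cases $N=\{A\}, \{B\}, \{Q\}, \{A,B\}, \{B,Q\}$ are handled analogously, with the $A \leftrightarrow B$ symmetry cutting the work roughly in half.

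The main obstacle is not depth but bookkeeping: I must check that the case analysis on $N$ is genuinely exhaustive and that no loophole is overlooked. The three delicate points are (i) the uniqueness of the ancilla state in each GSM configuration, which rests on our definition of a complex combined with $\tilde\Delta_1 > 0$; (ii) the possibility of additional port-port blockades, which leave GSM energies invariant and therefore cannot dissolve the contradictions; and (iii) configurations outside $L_\texttt{NOR}$ being ruled out energetically rather than kinematically, which is irrelevant since we only use equalities of energies \emph{within} the GSM. Once these three points are secured, the proof reduces to a straightforward finite enumeration over the eight subsets $N$.
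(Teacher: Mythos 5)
Your proof is correct; the three-atom part coincides with the paper's, and the four-atom part reaches the same conclusion by a genuinely different closing argument. Both you and the paper reduce the one-ancilla case to a finite enumeration over how the ancilla is blockaded against the ports, but the paper closes via a \emph{gap} condition: it parametrizes the ancilla's conditional minimal energy $\varepsilon(n_1n_2n_3)$, uses the degeneracy equations to eliminate the port detunings, and derives the single inequality $\varepsilon(000)+\varepsilon(110)>\varepsilon(010)+\varepsilon(100)$ from the requirement that $(000)$ be gapped out. Since all four configurations in that inequality have the output off, only blockades between the ancilla and the two \emph{inputs} matter, so the paper needs just three cases, each of which violates the inequality. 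You instead enumerate all eight blockade neighborhoods $N\subseteq\{A,B,Q\}$ and show that the four \emph{degeneracy} equalities alone, together with strict positivity of the detunings, already force $\Delta_A=0$ or $\Delta_B=0$ in every case (immediately from $E_{100}=E_{110}$ whenever the ancilla state agrees on those two words, and via the $A\leftrightarrow B$ symmetry of $L_\texttt{NOR}$ otherwise). Your route never invokes the exclusion of $(000)$ from the ground state manifold, which makes it more economical in hypotheses at the cost of a finer (but still trivially finite) case split; the paper's route needs fewer cases but must use the gap. One cosmetic remark: the rule that the ancilla is excited iff all ports in $N$ are off follows from energy minimization given the port configuration together with $\tilde\Delta_1>0$ (any ground state must minimize the ancilla energy conditionally, else an admissible lower-energy state would exist); the uniqueness of $a(\vec x)$ demanded by the definition of a complex is then a consequence rather than an ingredient. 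Your three ``delicate points'' are all handled correctly, and the reduction to $\tilde\Delta_1>0$ is legitimately inherited from the preliminary discussion of vanishing and negative detunings.
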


\begin{proof}
    We show that a \texttt{NOR}-complex cannot be realized with
    one ancilla or less. First, assume there is no ancilla so that the
    Hamiltonian is again
    \begin{align}
        H=-\Delta_1 n_1-\Delta_2 n_2-\Delta_3 n_3=:E_{n_1n_2n_3}\,,
    \end{align}
    now with potential kinematic constraints due to the Rydberg blockade. The
    conditions for a \texttt{NOR}-complex demand the equality of the following
    energies:
    \begin{subequations}
        \begin{align}
            E_{001}&=-\Delta_3\\
            E_{010}&=-\Delta_2\\
            E_{100}&=-\Delta_1\\
            E_{110}&=-\Delta_1-\Delta_2\,.
        \end{align}
    \end{subequations}
    It follows immediately $\Delta_1=\Delta_2=\Delta_3$ and $\Delta_1=0$
    so that all detunings must vanish. But then $(n_1n_2n_3)=(000)$
    is---independent of the configuration and its implied kinematic
    constraints---degenerate with the four states that belong to the
    \texttt{NOR}-manifold (which it must not be).

    \emph{Alternative argument:} The \texttt{NOR}-language
    $L_\texttt{NOR}=\{001,010,100,110\}$ is irreducible and forbids a blockade
    between the two input ports [because of $(110)$]. The only consistent
    blockade graph $B$ is therefore the line graph of three vertices.
    But this graph has only \emph{two} maximal independent sets, whereas we
    need at least \emph{four} to realize~$L_\texttt{NOR}$.

    So let us assume a system with one additional ancilla,
    \begin{align}
        H=-\Delta_1 n_1-\Delta_2 n_2-\Delta_3 n_3-\Delta_4 \tilde n_4\,,
    \end{align}
    and an arbitrary geometry that may lead to kinematic constraints on the
    allowed configurations.
    Let now $\vep(n_1n_2n_3)$ denote the \emph{minimal} energy of the system
    \emph{without} the contribution from the ports under the ``boundary
    condition'' that these are in the state $(n_1n_2n_3)$ and under the
    kinematic constraints imposed by the Rydberg blockade; furthermore, set
    $E_{n_1n_2n_3}:=-\Delta_1 n_1-\Delta_2 n_2-\Delta_3 n_3+\vep(n_1n_2n_3)$.
    In the current situation with only one ancilla, it is either
    $\vep(n_1n_2n_3)=0$ if the minimum is obtained by $\tilde n_4=0$,
    or $\vep(n_1n_2n_3)=-\Delta_4$ if $\tilde n_4=1$ minimizes the energy
    (and this is consistent with the configuration $(n_1n_2n_3)$).
    With this notation, the conditions to be a \texttt{NOR}-complex take
    the following form. First, the degeneracy of the \texttt{NOR}-manifold
    demands the equivalence of the following expressions:
    \begin{subequations}
        \begin{align}
            E_{001}&=-\Delta_3+\vep(001)\\
            E_{010}&=-\Delta_2+\vep(010)\\
            E_{100}&=-\Delta_1+\vep(100)\\
            E_{110}&=-\Delta_1-\Delta_2+\vep(110)\,,
        \end{align}
    \end{subequations}
    which immediately implies
    \begin{subequations}
        \begin{align}
            \Delta_1&=\vep(110)-\vep(010)\\
            \Delta_2&=\vep(110)-\vep(100)\\
            \Delta_3&=\vep(110)+\vep(001)-\vep(100)-\vep(010)\,.
        \end{align}
    \end{subequations}
    Second, the gap condition requires (among other conditions)
    \begin{subequations}
        \begin{align}
            \vep(000)=E_{000}&\stackrel{!}{>}E_{100}=-\Delta_1+\vep(100)=\vep(010)-\vep(110)+\vep(100)\\
            \Leftrightarrow\quad
            \vep(000)+\vep(110)&>\vep(010)+\vep(100)
            \label{eq:cond1}
        \end{align}
    \end{subequations}
    because a state with $(n_1n_2n_3)=(000)$ is not allowed in the
    \texttt{NOR}-manifold. Note that the only kinematic constraints on the
    ancilla in \cref{eq:cond1} can come from the two input vertices since
    $n_3=0$ for all four terms. We show now that \cref{eq:cond1} cannot be
    satisfied with a single ancilla.

    Consider first the case where $\Delta_4\leq 0$. Then the minimal energy
    under any condition $(n_1n_2n_3)$ is reached by switching the ancilla
    \emph{off}, $\tilde n_4=0$ (this is possible for all kinematic constraints),
    so that $0+0>0+0$ leads to a contradiction.
    Thus we have to assume $\Delta_4>0$ (this we could have anticipated from
    the arguments above). Now the energy can be lowered by switching the
    ancilla \emph{on}, but this might be forbidden by the kinematic constraints
    for certain boundary conditions $(n_1n_2n_3)$. We consider three cases:
    \begin{enumerate}[(i)]
        \item No blockade between the two inputs and the ancilla.
            In this case, the ancilla will be switched on in all four terms
            of \cref{eq:cond1} so that $-\Delta_4-\Delta_4>-\Delta_4-\Delta_4$
            violates the gap condition.
        \item The ancilla is in blockade with one of the inputs.
            W.l.o.g.\ let $n_1$ be in blockade with $\tilde n_4$. Then
            \cref{eq:cond1} reads $-\Delta_4+0>-\Delta_4+0$ which again
            violates the gap condition.
        \item The ancilla is in blockade with both inputs. Now
            \cref{eq:cond1} reads $-\Delta_4+0 > 0 + 0$ which is in
            contradiction with the assumption $\Delta_4>0$.
    \end{enumerate}
    In conclusion, we showed that it is impossible to satisfy the gap condition
    with a single ancilla. 

    \emph{Alternative argument:} Of the six connected graphs on four vertices,
    only the ``tetrahedron graph'' has four maximal independent sets (the
    others have at most three), which is necessary to realize the four words
    in $L_\texttt{NOR}$. But none of these four maximal independent sets
    contain more than one vertex [which would be necessary for $(110)$].
\end{proof}

\subsection{\texttt{AND}-complex, \texttt{OR}-complex and \texttt{XNOR}-complex}
\label{app:andorxnor}
 
\begin{figure*}[t]
    \centering
    \begin{tikzpicture}
        \node[figtext,anchor=south] at (-1.35,0.3) {$n_1$};
        \node[figtext,anchor=south] at (-0.65,0.3) {$\tilde n_2$};
        \node[figtext,anchor=south] at (0,0.3) {$n_3$};
        \node[figtext,anchor=south] at (0.65,0.3) {$\tilde n_4$};
        \node[figtext,anchor=south] at (1.35,0.3) {$n_5$};
        \node[figure] at (0,0) {\includegraphics[width=3.0cm]{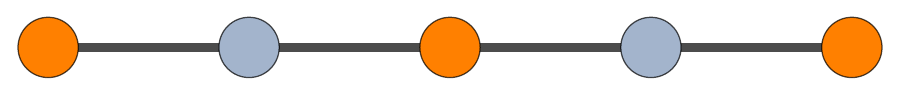}};
        \node[figtext,anchor=center] at (0,-0.5) {$(n_1n_3n_5)=(111)$};
        \node[figure] at (4,0) {\includegraphics[width=3.0cm]{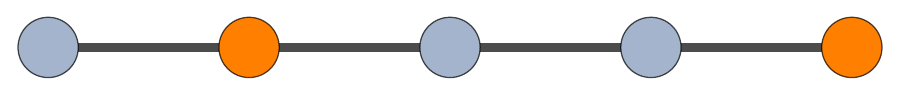}};
        \node[figtext,anchor=center] at (4,-0.5) {$(n_1n_3n_5)=(001)$};
        \node[figure] at (8,0) {\includegraphics[width=3.0cm]{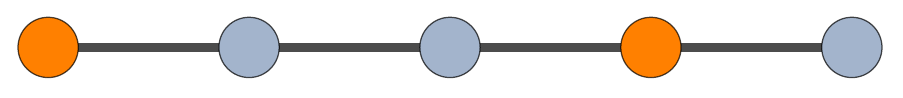}};
        \node[figtext,anchor=center] at (8,-0.5) {$(n_1n_3n_5)=(100)$};
        \node[figure] at (12,0) {\includegraphics[width=3.0cm]{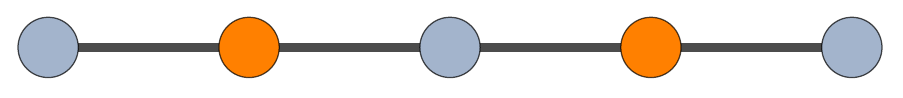}};
        \node[figtext,anchor=center] at (12,-0.5) {$(n_1n_3n_5)=(000)$};
        %
    \end{tikzpicture}
    \caption{%
        The line graph is the only connected blockade graph on five vertices
        with (at least) four maximal independent sets (orange vertices),
        (at least) one of which has (at least) three vertices. To realize
        the state $(111)$, the vertices $\{1,3,5\}$ must be chosen as ports,
        with $3$ as output; the four maximal independent sets then realize
        the truth table of \texttt{AND} (these four states cannot be made
        degenerate while maintaining a gap, see text).
    }
    \label{fig:and_or_xnor}
\end{figure*}

\begin{lemma}
    \texttt{AND}-, \texttt{OR}- and \texttt{XNOR}-complexes cannot be realized
    with less then 6 atoms (3 ancillas).
\end{lemma}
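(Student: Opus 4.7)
Following the pattern of \cref{app:cpy} and \cref{app:nor}, the proof proceeds by successively ruling out realizations with $N=3, 4, 5$ atoms. The unifying observation is that each of the three languages
\[
L_\texttt{AND}=\{000,010,100,111\},\quad
L_\texttt{OR}=\{000,011,101,111\},\quad
L_\texttt{XNOR}=\{001,010,100,111\}
\]
contains the word $(111)$. Hence in any valid complex the three ports must form an independent set of the blockade graph, and the four words of the language must be realized by exactly four maximal independent sets (MIS*), one of which is the port triple itself. Since each language is irreducible, the blockade graph must moreover be connected. This reduces the problem to a combination of the energetic and graph-theoretic arguments developed for the \texttt{CPY}- and \texttt{NOR}-complexes.

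For $N=3$ (no ancillas), the non-blockade of the three ports leaves the edgeless graph as the only option, and equating the four required ground-state energies of $H=-\sum_i\Delta_i n_i$ forces all detunings to vanish in each of the three cases, contradicting the gap condition. For $N=4$ (one ancilla) the connected blockade graphs compatible with an independent port triple are $K_{1,3}$ and its connected proper subgraphs. Introducing the minimal-ancilla energy $\vep(n_1n_2n_3)$ as in \cref{app:nor}, one writes down the four degeneracy equations together with the relevant gap inequalities and shows by a short case distinction on which ports the ancilla blockades that the system is always inconsistent; the intuition is that a single ancilla has only one binary degree of freedom, which is insufficient to lift the three ``off-diagonal'' words of any of the three languages to a common energy without spuriously degenerating a forbidden word.

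The critical case is $N=5$ (two ancillas) and is the main obstacle. My plan is to first establish the combinatorial lemma sketched in \cref{fig:and_or_xnor}: the only connected graph on five vertices that admits at least four MIS* including one of size three is the path $P_5$, with the three ports forced onto the odd-indexed vertices $\{1,3,5\}$ and the two ancillas onto $\{2,4\}$. I would prove this by a case analysis on the degree sequence (or, equivalently, by a direct enumeration of the $21$ connected five-vertex graphs), taking care to also cover blockade graphs in which the two ancillas are themselves in blockade. The MIS* of $P_5$ are $\{1,3,5\}, \{1,4\}, \{2,4\}, \{2,5\}$, giving the port-configuration multiset $\{(111), (100), (000), (001)\}$ of Hamming weights $\{0,1,1,3\}$. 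Because Hamming weights are invariant under permutations of port labels, $L_\texttt{OR}$ (weight multiset $\{0,2,2,3\}$) and $L_\texttt{XNOR}$ (weight multiset $\{1,1,1,3\}$) are immediately ruled out; only $L_\texttt{AND}$ (weight multiset $\{0,1,1,3\}$) survives the matching test.

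It remains to eliminate $L_\texttt{AND}$ on $P_5$. A quick inspection shows that the unique port labeling consistent with \texttt{AND} places the output $Q$ on the central vertex $3$, and equating the four MIS*-energies then yields a linear system forcing $\Delta_3=0$, which violates the gap condition on the output port. This completes the exclusion of $N\le 5$ realizations. The principal technical obstacle is the graph-theoretic classification of five-vertex blockade graphs; once this is in place, the energetic arguments are short and directly analogous to those in \cref{app:nor}, and the remainder of the lemma follows.
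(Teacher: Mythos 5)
Your proposal is correct and follows essentially the same route as the paper's proof in \cref{app:andorxnor}: exclude $N\le 4$ via connectivity and the independence of the port triple, enumerate the connected five-vertex graphs to isolate the path $P_5$ (ports on $\{1,3,5\}$) as the only candidate, rule out \texttt{OR} and \texttt{XNOR} combinatorially, and eliminate \texttt{AND} by the degeneracy equations forcing $\Delta_3=0$. Your Hamming-weight multiset argument is a slightly cleaner way to phrase the final combinatorial check, and your energetic treatment of $N=4$ is more laborious than the paper's one-line observation that the star graph $K_{1,3}$ has only two maximal independent sets, but both variations are sound.
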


\begin{proof}
All these complexes contain the state $(111)$ such that no two ports can be
in blockade with each other. This implies that no realization of these gates
is possible with four or less atoms as the only connected blockade graph
which fulfills this constraint is the star graph of the \texttt{CPY}-complex
(which has only two MIS*).

The number of vertices is still small enough to systematically screen the
21 connected graphs on five vertices and select the 11 relevant ones with at
least four maximal independent sets. One can check that only the chain graph
has a MIS* with (at least) three vertices, which is needed to realize the
port configuration $(111)$ (\cref{fig:and_or_xnor}). This MIS* contains the
vertices $\{1,3,5\}$ of the chain, which we therefore must choose as ports:
$(n_1n_3n_5)=(111)$. With these ports, the set of four MIS* then realizes
the language $L=\{111,100,001,000\}$ which we identify as the truth table
of the \texttt{AND}-gate if we choose the port on the central atom 3 as
output. This proves that the \texttt{OR}- and \texttt{XNOR}-complex cannot
be realized with five atoms (even if another port is declared as output).

So far the arguments were purely \emph{kinematic} insofar as only the blockade
constraints and the knowledge that the GSM is generate by maximal independent
sets were used. To exclude the \texttt{AND}-gate, this is not enough, and
we have to use \emph{energetic} arguments by studying possible choices for
detunings. The degeneracy of the GSM requires the following four expressions
to be equal:
\begin{subequations}
    \begin{align}
        E_{111}&=-\Delta_1-\Delta_3-\Delta_5\\
        E_{100}&=-\Delta_1-\Delta_4\\
        E_{001}&=-\Delta_2-\Delta_5\\
        E_{000}&=-\Delta_2-\Delta_4\,,
    \end{align}
\end{subequations}
which immediately implies $\Delta_4=\Delta_5$ and therefore $\Delta_3=0$,
which is not allowed (remember that vanishing detunings are forbidden). This
proves that also the \texttt{AND}-complex cannot be realized with five atoms.
\end{proof}

\subsection{\texttt{NAND}-complex and \texttt{XOR}-complex}
\label{app:nandxor}
 
\begin{lemma}
    \texttt{NAND}- and \texttt{XOR}-complexes cannot be realized with less
    than 7 atoms (4 ancillas).
\end{lemma}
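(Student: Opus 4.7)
The plan is to follow the kinematic/energetic two-step scheme already used in \cref{app:andorxnor}. First I would extract the hard blockade constraints from the languages $L_\texttt{NAND} = \{001,011,101,110\}$ and $L_\texttt{XOR} = \{000,011,101,110\}$: both contain the three configurations $(011)$, $(101)$, and $(110)$, so no pair of ports can be in blockade with one another. Consequently the port-induced subgraph is edgeless, and together with the irreducibility of either language (which forces the blockade graph to be connected) this already rules out $N \le 5$ by the same reasoning as in \cref{app:andorxnor}: the only connected five-vertex graph with a three-vertex MIS* among the ports is the line graph, whose MIS* structure realizes \texttt{AND} and neither \texttt{NAND} nor \texttt{XOR}.

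The core of the argument is therefore to exclude $N = 6$, i.e., three ancillas. I would enumerate the connected six-vertex blockade graphs with an edgeless port triple, grouping them by the ``port-adjacency multiset'' that records, for each ancilla, which ports it is in blockade with. The canonical and most symmetric class has three ancillas $\alpha_A, \alpha_B, \alpha_Q$, each adjacent to a single distinct port. In this case the four MIS* that hit the required port patterns come with uniquely determined ancilla states (e.g., $\{\alpha_A, B, Q\}$ for $(011)$, $\{A, B, \alpha_Q\}$ for $(110)$, and $\{\alpha_A, \alpha_B, \alpha_Q\}$ for $(000)$ in the \texttt{XOR} case, or $\{Q, \alpha_A, \alpha_B\}$ for $(001)$ in the \texttt{NAND} case). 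Imposing the three degeneracy equations collapses them to the linear relations $\Delta_A = \Delta_{\alpha_A}$, $\Delta_B = \Delta_{\alpha_B}$, $\Delta_Q = \Delta_{\alpha_Q}$, and consequently $E_{111} = -\Delta_A - \Delta_B - \Delta_Q = E_{110}$; since $(111) \notin L$ for either gate, the gap condition is violated and this topology is excluded.

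The residual cases are those in which some ancilla is adjacent to two or three ports simultaneously, an ancilla has no port neighbor and is tied in only through ancilla-ancilla edges, or several ancillas attach to the same port. For each of these I expect one of two obstructions: either the count of MIS* with distinct port projections drops below four (kinematic failure, handled in the spirit of \cref{app:nor} and \cref{app:andorxnor}), or the degeneracy system forces $\Delta_j = 0$ for some atom $j$, or it promotes a forbidden port configuration ($(111)$ in both cases, and additionally $(000)$ for \texttt{NAND} or $(001)$ for \texttt{XOR}) to an MIS* that is degenerate with the ground state. A subtle point that will need attention is that in the canonical class ancilla-ancilla edges are required for connectivity, but such edges can conflict with the ancilla configurations demanded by $(000)$ or $(001)$; this tension is what ultimately forces at least one ancilla to be shared between ports, moving the topology into the more constrained subcases. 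The $A \leftrightarrow B$ symmetry of both languages halves the casework.

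The main obstacle will be the combinatorial volume of the six-vertex enumeration and the need to run \texttt{NAND} and \texttt{XOR} in parallel, since their forbidden-configuration sets differ ($(000)$ for \texttt{NAND} versus $(001)$ for \texttt{XOR}). I would tame this by first classifying the ancilla subgraph up to isomorphism and processing each class uniformly; the reduction from \cref{app:nor} that an ancilla in blockade with all remaining atoms is energetically useless should quickly discard several subclasses, and the port-distinguishability requirement prunes many more. The explicit seven-atom \texttt{NAND}- and \texttt{XOR}-complexes depicted in \cref{fig:logic_primitives} witness that the bound $N \ge 7$ is tight, so once the $N = 6$ case is closed the lemma follows.
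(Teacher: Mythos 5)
Your overall strategy (kinematic constraints from the words $(011)$, $(101)$, $(110)$, followed by a case analysis over port--ancilla adjacencies for $N=6$) is aligned with the paper's, but there are two genuine gaps. First, your exclusion of $N\le 5$ imports the screening criterion of \cref{app:andorxnor} --- that the line graph is the only connected five-vertex graph with at least four maximal independent sets, one of which contains three vertices --- but that criterion is only relevant because \texttt{AND}, \texttt{OR} and \texttt{XNOR} contain the word $(111)$ and therefore need an MIS* containing all three ports. Neither $L_\texttt{NAND}$ nor $L_\texttt{XOR}$ contains $(111)$; every word activates at most two ports, so the required MIS* need not have three vertices and the line-graph-only conclusion does not follow. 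You would have to screen the five-vertex graphs against a different condition. The paper avoids this entirely: it fixes one input port in the active state, notes that this deactivates at least one of the two ancillas, and observes that the two remaining (mutually unblocked) ports together with at most one ancilla would then have to realize the \texttt{NOT}-language $\{01,10\}$, which is impossible since a single intermediate ancilla yields only the \texttt{LNK}-language $\{00,11\}$.

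Second, your $N=6$ analysis is not closed. You work out only the perfect-matching class (each ancilla adjacent to one distinct port), and even there you flag but do not resolve how the ancilla--ancilla edges forced by connectivity affect $\vep(000)$ and $\vep(001)$; the remaining classes are dispatched with ``I expect one of two obstructions,'' which is a plan rather than a proof. The paper's trichotomy is exhaustive and each branch closes quickly: if some port touches all three ancillas, activating it forces the other two ports on, contradicting one of $(110)$, $(101)$, $(011)$; if some port touches a single ancilla, amputating that port yields a five-atom complex whose GSM must contain $(010)$, $(001)$ and $(111)$, which by \cref{app:andorxnor} forces the line graph and then fails energetically; otherwise every port touches exactly two ancillas, and each of the three bipartite graphs in \cref{fig:nand_xor} admits a pair of ports that, when activated, blocks all neighbors of the third port and forces it on, again contradicting one of $(110)$, $(101)$, $(011)$. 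Note that your canonical perfect-matching class falls under the paper's degree-one case and is handled there by amputation rather than by the degeneracy equations you set up, which sidesteps the ancilla--ancilla edge subtlety altogether.
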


\begin{proof}
The truth tables of both \texttt{NAND} and \texttt{XOR} contain the states
$(110),(101)$ and $(011)$ so that no two ports can be in blockade with
each other. This excludes a realization with less than four atoms (see
\cref{app:andorxnor}). If two ancillas are available, we can switch one of the
input ports on; this switches (at least) one ancilla off. The remaining two
ports and (at most) one ancilla then must realize the \texttt{NOT}-language
$L_\neg=\{01,10\}$. This is impossible since the two ports cannot be directly
connected and the only blockade graph with a single ancilla realizes the
\texttt{LNK}-language $L_\texttt{LNK}=\{00,11\}$.
So let us assume that the complexes can be realized with three ports and
three ancillas. For the following arguments, only the edges between ports
and ancillas are of importance; potential blockades between ancillas can be
ignored. We consider three cases:
\begin{enumerate}[(i)]
\item There is at least one port that connects to all three ancillas. If this
port is on, all ancillas are off, hence the two remaining ports must be on
as well; but then at least two of the three states $(110),(101)$ and $(011)$
cannot be realized in the GSM.
\item There is at least one port that connects to a single ancilla. This
edge can be interpreted as an amalgamated \texttt{NOT}-complex. If we delete
the port, subtract its detuning from the connected ancilla, and declare the
latter as a new port, the new complex of five atoms realizes the truth table of
the original complex with one column inverted (w.l.o.g.\ the first one). For
both gates, this new manifold contains the states $(010),(001)$ and $(111)$
(plus another one that depends on the gate). The only blockade graph on
five vertices with at least four MIS*, one of which contains at least three
vertices [needed for $(111)$], has been identified in \cref{app:andorxnor}
as the line graph. There it has also been shown that there is no assignment
of detunings that realizes a four-fold degenerate GSM.
\item All inputs are connected with exactly two ancillas. There
are three possibilities to connect three ports with two ancillas each
(\cref{fig:nand_xor}). By inspection one shows that in all three cases there
is a pair of ports that, when activated, forces all ancillas connected to
the third port to be off; as this forces the third port to be on, at least
one of the states $(110),(101)$ and $(011)$ cannot be realized in the GSM.
\end{enumerate}
This proves that the \texttt{NAND}- and \texttt{XOR}-complex cannot be
realized with six atoms.
\end{proof}

\begin{figure*}[t]
    \centering
    \begin{tikzpicture}
        \node[figtext,anchor=south] at (12.3,-1.65) {$\Delta_1$};
        \node[figtext,anchor=south] at (12.3,1.3) {$\tilde\Delta_1$};
        \node[figtext,anchor=south] at (11.0,0.7) {$\Delta_2$};
        \node[figtext,anchor=south] at (11.0,-1.05) {$\tilde\Delta_3$};
        \node[figtext,anchor=south] at (13.5,0.7) {$\Delta_3$};
        \node[figtext,anchor=south] at (13.5,-1.05) {$\tilde\Delta_2$};
        \node[label] at (0,1.5) {\lbl{a}};
        \node[figure,anchor=west] at (0,0) {\includegraphics[width=5.1cm]{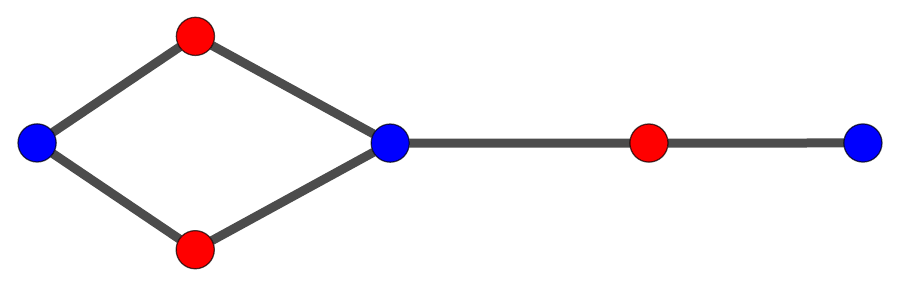}};
        \node[label] at (6,1.5) {\lbl{b}};
        \node[figure,anchor=west] at (6,0) {\includegraphics[width=2.9cm]{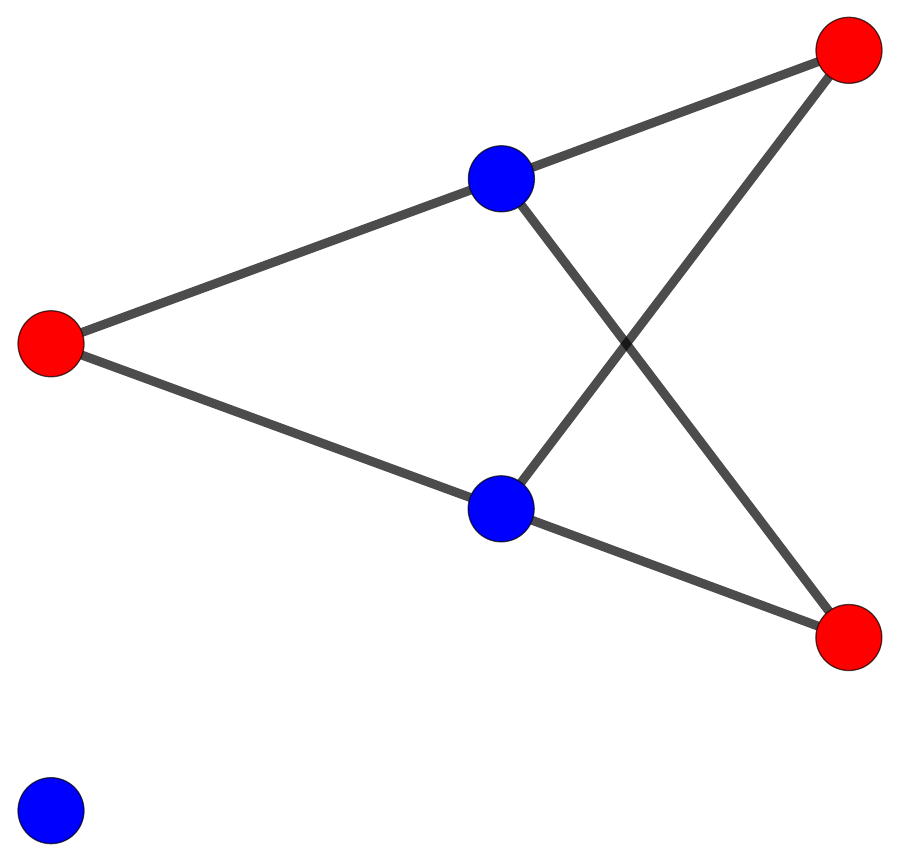}};
        \node[label] at (10.5,1.5) {\lbl{c}};
        \node[figure,anchor=west] at (11,0) {\includegraphics[width=2.2cm]{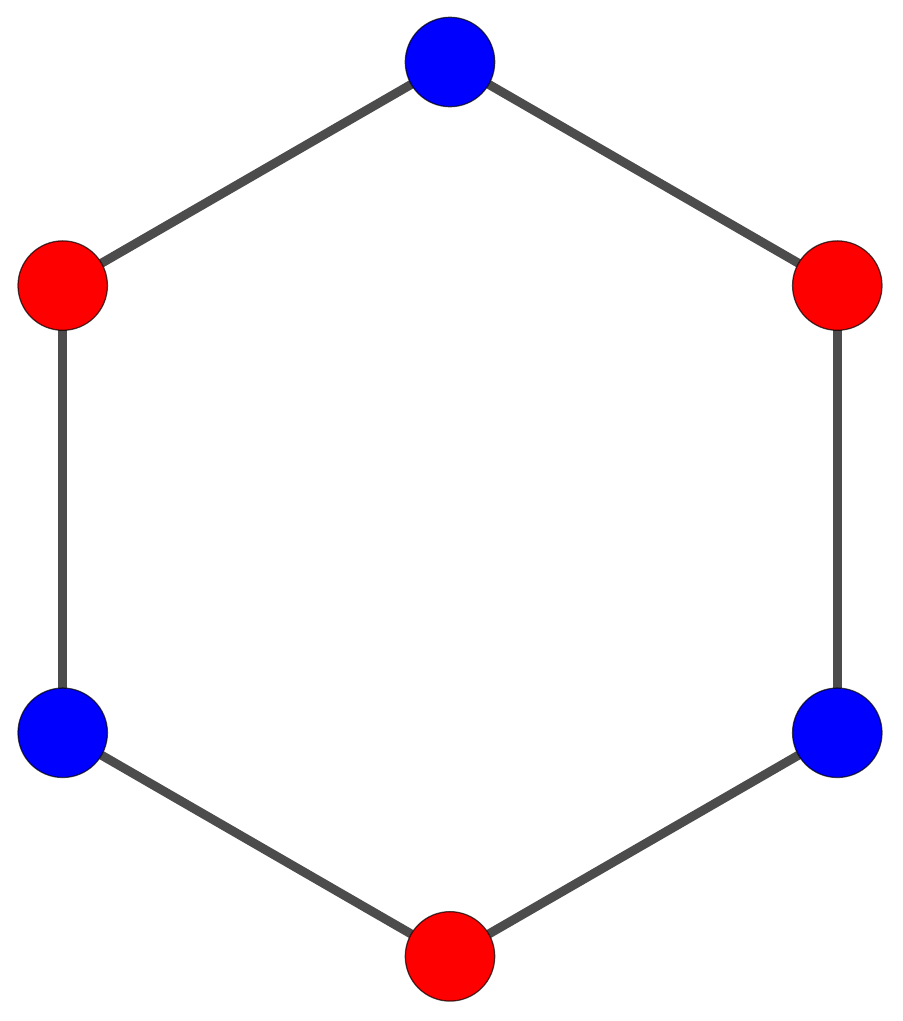}};
        %
    \end{tikzpicture}
    \caption{%
        The three bipartite graphs between three ports (red) and three
        ancillas (blue) where all ports have degree two. Note that these
        do not represent complete blockade graphs as we omit blockades
        between ancillas. The detunings in (c) are used in \cref{app:xnor}.
    }
    \label{fig:nand_xor}
\end{figure*}

\emph{Note:} Removing a \texttt{NOT}-complex by deleting the port, subtracting
its detuning from its ancilla, and declaring the ancilla as new port, is the
inverse of amalgamation; let us call it \emph{amputation}. One has to make sure
that the subtraction of the detuning of the port $\Delta_p$ from the detuning
of its adjacent ancilla $\Delta_a$ does not lead to negative (or vanishing)
detunings on the ancilla (= new port). Indeed, if $\Delta_p>\Delta_a$, the
port would be always on in all ground state configurations; this makes the port
superfluous and the language of the GSM reducible. If $\Delta_p=\Delta_a$, the
language of the original complex would have the property that for every word
with a ``0'' at the corresponding position, there is a otherwise identical word
with a ``1''. This is the dual property of the one discussed at the beginning
of \cref{app:pxp} and no language discussed in this paper has this property.

\subsection{Uniqueness of the blockade graph of the minimal \texttt{XNOR}-complex}
\label{app:xnor}

In contrast to the minimal \texttt{NOR}-complexes (for which there are
different blockade graph realizations), there is only one realization of the
minimal \texttt{XNOR}-complex. This will be useful in \cref{app:fibonacci}
to prove the minimality of the vertex complex of the Fibonacci model.
\begin{lemma}
    The blockade graph of the minimal \texttt{XNOR}-complex with 6 atoms
    (\cref{fig:logic_primitives}) is unique.
\end{lemma}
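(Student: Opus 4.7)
The plan is to adapt the case analysis used in the proof of the NAND/XOR lemma (three cases on the degree sequence of the bipartite graph between ports and ancillas) to the language $L_\texttt{XNOR}=\{001,010,100,111\}$, and to show that exactly one blockade graph survives.

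First I would fix the ingredients forced by $L_\texttt{XNOR}$. Since $(111)\in L_\texttt{XNOR}$, no two ports can be in blockade, so the only relevant blockade edges incident on ports live in a bipartite graph $G$ between the three ports $\{A,B,Q\}$ and the three ancillas. Moreover, every port must have degree at least one in $G$: otherwise its letter would toggle freely in the GSM and the realized language would be reducible, contradicting $|L_\texttt{XNOR}|=4$ and the irreducibility of $L_\texttt{XNOR}$. This sets the stage for a three-case split according to whether some port has degree three in $G$, some port has degree one in $G$, or every port has degree exactly two.

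The degree-one case I would dispatch by amputation (the inverse of amalgamating a \texttt{NOT}-complex): deleting the degree-one port and subtracting its detuning from the adjacent ancilla yields a 5-atom complex realizing $L_\texttt{XNOR}$ with one letter inverted; inverting any letter of $L_\texttt{XNOR}$ produces $L_\texttt{XOR}$, which the previous lemma excludes for $N<7$. The degree-three case I would handle by inspection of the three choices for the distinguished port $X\in\{A,B,Q\}$: exciting $X$ forces all three ancillas off and, by positivity of detunings, the two remaining ports to be excited as well; the resulting GSM configuration is either outside $L_\texttt{XNOR}$ (when $X\in\{A,B\}$, one quickly finds a forbidden state) or incompatible with the degeneracy equalities coming from the other three words of $L_\texttt{XNOR}$ (when $X=Q$).

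The remaining case is that every port has degree two in $G$, so $G$ is one of the three bipartite graphs depicted in \cref{fig:nand_xor}. Here the key step is to test each candidate against $L_\texttt{XNOR}$ and against each possible assignment of the output label $Q$. For each $(G,Q)$ pair this reduces to a small system: three equalities that enforce degeneracy across the four words of $L_\texttt{XNOR}$, and strict inequalities that gap out the remaining four configurations of the ports. I expect two of the three bipartite graphs to be infeasible and the third to pin down the detunings, up to the $A\leftrightarrow B$ symmetry of $\texttt{XNOR}$, to the values shown in \cref{fig:logic_primitives}. Finally, I would check that no ancilla--ancilla blockade can be added to the surviving $G$ without either removing a word of $L_\texttt{XNOR}$ from the GSM or closing the gap; this promotes uniqueness of the port--ancilla subgraph to uniqueness of the full blockade graph. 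The main obstacle is the bookkeeping in the last case: one must enumerate all $(G,Q)$ pairs and, for the feasible one, verify rigidity of the ancilla--ancilla sector, but every step is a small, self-contained energy-balance argument of the same flavor as those used throughout \cref{app:pxp}.
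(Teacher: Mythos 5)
Your overall strategy matches the paper's: the same three-way split on port degrees in the port--ancilla bipartite graph, amputation to reduce the degree-one case to the impossibility of a 6-atom \texttt{XOR}-complex, a kinematic exclusion of the degree-three case, and a reduction of the degree-two case to the three bipartite graphs of \cref{fig:nand_xor}. (Two small simplifications you miss: since $L_\odot=\{001,010,100,111\}$ is symmetric under permutations of its letters, there is no need to enumerate choices of the output label $Q$, and the degree-three case needs no degeneracy equalities --- exciting that port unblocks the other two ports, which then must also be excited, contradicting whichever single-excitation word has that port on and the others off. Likewise, two of the three degree-two graphs die purely kinematically against the single-excitation words, so the energetic system only has to be solved for the hexagon graph.)

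However, your final step is logically inverted, and this is a genuine gap. You propose to show that \emph{no} ancilla--ancilla blockade can be added to the surviving hexagon graph, concluding that the bare bipartite graph is the unique answer. The opposite is true. Degeneracy of each single-excitation word with $(111)$ forces $\tilde\Delta_i=\Delta_j+\Delta_k$ for the ancilla opposite port $i$ (the opposite ancilla must be on to block the other two ports, and its detuning must balance theirs). Positivity of all detunings then gives, for any pair of ancillas,
\begin{align}
    -\tilde\Delta_i-\tilde\Delta_j < -\Delta_1-\Delta_2-\Delta_3 = E_{111}\,,
\end{align}
so if any two ancillas could be excited simultaneously (they block all three ports, so this is kinematically allowed in the bare hexagon), the resulting configuration would undercut $E_{111}$ and $(111)$ would not be in the ground state manifold. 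Hence blockades between \emph{all} pairs of ancillas are \emph{required}, and the unique minimal blockade graph is the hexagon plus the ancilla triangle, as in \cref{fig:logic_primitives}. As written, your check would either fail (adding ancilla--ancilla edges removes nothing from the GSM and does not close the gap) or, if you insisted on the bare hexagon, you would be left with a structure that does not realize $L_\odot$ at all.
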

\begin{proof}
We showed in \cref{app:andorxnor} that a \texttt{XNOR}-complex needs at least
six atoms; so let us assume we have six atoms at our disposal. We now try to
contrive a complex that realizes the language $L_\odot=\{001,010,100,111\}$
systematically:
\begin{enumerate}[(i)]

\item Assume there exists such a complex with at least one port that connect
to only one ancilla. If this port is amputated, the remaining $5$ atoms
realize the \texttt{XOR}-language $\overline{L}_\odot=\{101,110,000,011\}$,
which is impossible as shown in \cref{app:nandxor}.

\item Assume at least one port connects to all three ancillas. If this port
is switched on, all ancillas are switched off and therefore the other two
ports must be active. This is inconsistent with one of the states $(001)$,
$(010)$ and $(100)$.

\item Because of (i) and (ii), only the case where all ports connect to two
ancillas remains. There are three classes of blockade graphs that satisfy
this, \cref{fig:nand_xor}. The first two graphs in \cref{fig:nand_xor} can
be immediately excluded as they are inconsistent with the states $(001)$,
$(010)$ and $(100)$ (= only \emph{one} port activated). Only the ``hexagon
graph'' in \cref{fig:nand_xor} remains as a possible blockade structure
between ports and ancillas. Without additional blockades between the
ancillas, the maximal independent sets of this graph allow for the states
$\{000,001,010,100,111\}\supset L_\odot$.

Let $\Delta_{1,2,3}$ denote the detunings of the three ports
and $\tilde\Delta_{1,2,3}$ the detunings of the three ancillas
(where $\tilde\Delta_i$ describes the ancilla opposite of port $i$,
\cref{fig:nand_xor}). In the state $(100)$, only the first port is excited. So
the opposite ancilla must be excited as well to block the two other ports
(if this ancilla were off, one could lower the energy by switching the other
two ports on). To balance this state energetically with the state $(111)$, the
detuning of the ancilla must equal the sum of the detunings of its two adjacent
ports. Due to the permutation symmetry of $L_\odot$ and the rotation symmetry
of the ``hexagon graph'', this argument is valid for all three ancillas:
\begin{align}
    \label{eq:ancilla_det}
    \tilde\Delta_1=\Delta_2+\Delta_3\,,\quad
    \tilde\Delta_2=\Delta_1+\Delta_3\,,\quad\text{and}\quad
    \tilde\Delta_3=\Delta_1+\Delta_2\,.
\end{align}
Because all detunings must be positive, this implies for any pair of ancillas
\begin{align}
    -\tilde\Delta_i-\tilde\Delta_j < -\Delta_1-\Delta_2-\Delta_3 = E_{111}\,.
\end{align}
Since $(111)$ must be in the GSM (i.e., $E_{111}$ must be the lowest allowed
energy), there must be an additional blockade between all pairs of ancillas
to prevent them from being excited simultaneously. This yields the blockade
graph of the \texttt{XNOR}-complex depicted in \cref{fig:logic_primitives}.
It has only four maximal independent sets that realize the language
$L_\odot=\{001,010,100,111\}$. The choices of the port detunings $\Delta_i>0$
are arbitrary; the ancilla detunings are then given by \cref{eq:ancilla_det}.
\end{enumerate}
We conclude that the blockade graph of the minimal realization of a
\texttt{XNOR}-complex with six atoms is unique (there is only freedom
in choosing the detunings). In addition, we proved that no \emph{strict
superset} of $L_\odot$ can be realized by a complex with six atoms or less
(this is used in \cref{app:fibonacci}).
\end{proof}

\section{Constructing subcomplexes}
\label{app:subcomplex}

Here we discuss a method to construct subcomplexes by fixing a port in
the active state and deleting its adjacent ancillas in the blockade graph.
This method is used in the proofs of \cref{app:examples} and the final remark
of \cref{sec:completeness}.
Consider a complex $\C$ that realizes a language $L$ with ports that are not
in blockade with each other. We can select one of the ports $p$ and define
the sublanguage $L_p\subset L$ of words $\vec x\in L$ with $x_p=1$. Our goal
is to construct a $L_p'$-complex $\C'_p$ where $L_p'$ is obtained from $L_p$
by deleting the constant letter at position $p$ that corresponds to the
fixed port.
The simplest solution is to keep the geometry of the complex $\C$ and increase
the detuning of the fixed port $\Delta_p$, thereby creating a gap between
states of the original GSM where the port is on and states where it is off;
the port can then be downgraded to an ancilla. In all states of the new
GSM this ancilla is active, while its adjacent ancillas are inactive. This
suggests that one can delete these atoms to obtain a smaller complex $\C_p'$
that realizes the same language $L_p'$:

\begin{lemma}
    Let the finite complex $\C$ realize the irreducible language $L$ with
    ports that are not in blockade with each other (with $\delta E=0$ and
    $\Delta E>0$).  Consider one of the ports $p$ with detuning $\Delta_p>0$
    and let the languages $L_p$ and $L_p'$ be defined as above. Then the
    structure $\C_p'$ obtained from $\C$ by deleting the port $p$ and all
    its adjacent ancillas is a $L_p'$-complex if the ports of $\C_p'$ are
    inherited from $\C$ in the natural way.
\end{lemma}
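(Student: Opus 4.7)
The strategy is to set up a natural extension map from configurations of $\C'_p$ to configurations of $\C$ whose image is exactly the set of configurations of $\C$ with $x_p=1$, and to show that this map preserves energies up to a constant $-\Delta_p$ while also preserving the blockade structure. Concretely, for any configuration $\sigma'$ of $\C'_p$, define $\tilde\sigma$ on $\C$ by copying $\sigma'$ on the shared atoms $V\setminus(\{p\}\cup a)$ (where $a$ denotes the ancillas of $\C$ adjacent to $p$), setting $n_p=1$, and setting $\tilde n_i=0$ for all $i\in a$. The key observation is that $\tilde\sigma$ satisfies all blockade constraints of $\C$ whenever $\sigma'$ satisfies those of $\C'_p$: the only blockade partners of $p$ in $\C$ are atoms in $a$ (which are off) and other ports (which are \emph{not} in blockade with $p$ by hypothesis). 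This uses the assumption that ports are pairwise unblocked in a crucial way.

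Next I would carry out the energy accounting. Since the atoms in $a$ contribute $0$ to $H[\C](\tilde\sigma)$ and the port $p$ contributes $-\Delta_p$, one obtains the clean relation
\begin{equation}
    H[\C](\tilde\sigma)=H[\C'_p](\sigma')-\Delta_p.
\end{equation}
The extension is injective, and its image consists precisely of configurations of $\C$ with $n_p=1$ and $\tilde n_i=0$ for $i\in a$. Because any ground state of $\C$ with $x_p=1$ automatically forces $a$ off (by the blockade edges between $p$ and $a$), every such ground state lies in the image of the extension. Minimizing over $\sigma'$ therefore yields the ground state energy $E_0^{\C'_p}=E_0+\Delta_p$, and the ground states of $\C'_p$ are in bijection with the ground states of $\C$ whose port configuration lies in $L_p$.

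To conclude that $\C'_p$ realizes $L_p'$, I would invoke the $L$-complex hypothesis on $\C$: for each $\vec x\in L_p$, the ancilla values $a(\vec x)$ are uniquely determined. The ancillas of $\C'_p$ form a subset of those of $\C$, so they remain uniquely determined by the restricted port word $\vec x^{\{p\}}\in L_p'$. This supplies the isomorphism $\H_0[\C'_p]\simeq\H(L_p')$ in the form required by \cref{eq:lcomplex}, with no residual ancilla degrees of freedom. The gap condition follows from the same correspondence: if $H[\C'_p](\sigma')=E_0^{\C'_p}+\epsilon$, then $H[\C](\tilde\sigma)=E_0+\epsilon$, so the gap $\Delta E>0$ of $\C$ forces either $\epsilon=0$ or $\epsilon\geq\Delta E$, giving $\delta E^{\C'_p}=0$ and $\Delta E^{\C'_p}\geq\Delta E>0$.

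The main obstacle is not really depth of argument but careful bookkeeping of which atoms remain and which blockade edges survive in $\C'_p$; the hypothesis that no two ports are in blockade is exactly what prevents the extension from secretly introducing a blockade violation at $p$, and the irreducibility assumption on $L$ together with $\Delta_p>0$ is what rules out the degenerate case $L_p=\emptyset$ (in which $L_p'$ is empty and the statement is vacuous). One should also note that the extension map being energy-shifted by $-\Delta_p$ globally is what guarantees that \emph{no} spurious low-energy configuration of $\C'_p$ arises outside the image of ground states of $\C$ with $x_p=1$; without that global shift, one could imagine a configuration of $\C'_p$ that descends below $E_0+\Delta_p$, violating the gap inherited from $\C$.
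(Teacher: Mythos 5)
Your proposal is correct and follows essentially the same route as the paper's proof: an extension/truncation correspondence between admissible configurations of $\C_p'$ and admissible configurations of $\C$ with port $p$ active, the energy bookkeeping $E_0(\C_p')=E_0(\C)+\Delta_p$, irreducibility to guarantee $L_p\neq\emptyset$, and the complex property of $\C$ to show the inherited ports still distinguish the ground states of $\C_p'$. The only cosmetic difference is that you derive the quantitative gap bound $\Delta E^{\C_p'}\geq\Delta E$, whereas the paper simply notes that finiteness of the structure guarantees some positive gap.
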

\begin{proof}

    First, note that since $L$ is irreducible, it is $L_p\neq\emptyset$, i.e.,
    there are configurations in the GSM of $\C$ where the port $p$ is active.
    We have to show two things: (a) the structure $\C_p'$ together with the
    inherited ports is a complex (i.e., its ground states can be labeled by
    the configurations of the ports), and (b) the language that describes
    this GSM is $L_p'$. 
    
    Let the GSM of the new structure $\C_p'$ be defined by $\delta
    E=0$ (since the structure is finite, it is automatically $\Delta E
    >0$). Every kinetically allowed (= admissible) configuration in this
    GSM can be \emph{extended} to an admissible configuration of $\C$
    by setting the deleted ancillas to \emph{off} and the port $p$ to
    \emph{on}. If $E_0(\C_p')$ denotes the ground state energy of $\C_p'$
    and $E_0(\C)$ the same for $\C$, this implies that $E_0(\C)\leq
    E_0(\C_p')-\Delta_p$. Conversely, because $L_p\neq\emptyset$, there
    are admissible configurations in the GSM of $\C$ where the port $p$ is
    \emph{on} and, consequently, all adjacent ancillas are \emph{off}. By
    \emph{truncating} the configurations of the adjacent ancillas and
    the port $p$, this yields a admissible configuration for $\C_p'$ with
    energy $E_0(\C)+\Delta_p$ so that $E_0(\C_p')\leq E_0(\C)+\Delta_p$. In
    combination, we have
    \begin{align}
        E_0(\C_p')=E_0(\C)+\Delta_p
    \end{align}
    for the ground state energy of the new structure $\C_p'$.
    Using this result and the mappings of \emph{extension} and
    \emph{truncation}, we can draw two conclusions:
    \begin{enumerate}[(1)]

        \item Every configuration in the GSM of $\C_p'$ can be \emph{extended}
        to a configuration in the GSM of $\C$ which corresponds to a word
        in $L_p$. We can immediately conclude two things:
        \begin{enumerate}[(i)]

            \item Since the extended configurations must be distinguishable
            by the ports of the \emph{complex} $\C$ ignoring port $p$
            (this port is always \emph{on} for configurations in $L_p$),
            and because these ports are inherited by the structure $\C_p'$,
            we can conclude that the configurations of the GSM of $\C_p'$
            can also be distinguished by these ports. This makes $\C_p'$
            a \emph{complex} that realizes some language $L^{?}$.

            \item Every word in $L^{?}$ is mapped by the extension to a word
            in $L_p$ which implies $L^{?}\subseteq L_p'$.

        \end{enumerate}
        
        \item Conversely, every configuration in the GSM of $\C$ which
        corresponds to a word in $L_p$ can be \emph{truncated} to an admissible
        configuration of $\C_p'$ with energy $E_0(\C)+\Delta_p=E_0(\C_p')$,
        which implies $L_p'\subseteq L^{?}$.

    \end{enumerate}
    In conclusion, we showed that $L^{?}=L_p'$ and therefore that $\C_p'$
    is indeed a $L_p'$-complex.
\end{proof}

\section{Minimality of spin liquid primitives}
\label{app:examples}

\subsection{Vertex/Unit cell complex for the surface code ($\C_\texttt{SCU}$)}
\label{app:toric}

\begin{lemma}
    The vertex complex (unit cell complex) $\C_\texttt{SCU}$ of the surface
    code on the square lattice cannot be realized with less than 11 atoms.
\end{lemma}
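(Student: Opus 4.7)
My plan is to apply the subcomplex reduction of \cref{app:subcomplex} together with the minimality (\cref{app:andorxnor}) and uniqueness (\cref{app:xnor}) results for the \texttt{XNOR}-complex, pin down the unique abstract blockade graph that any hypothetical $10$-atom realization must have, and then exclude this graph on geometric grounds.

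\emph{Combinatorial reduction.} Since $1111 \in L_\sub{Loop}$, no two ports can be in blockade. Let $k_p$ denote the number of ancillas adjacent to port $p$. Fixing $p$ to $1$ and deleting $p$ together with its neighbors yields, by \cref{app:subcomplex}, a complex on the remaining three ports whose language is $\{001,010,100,111\}$, i.e., the \texttt{XNOR}-language. By \cref{app:andorxnor} this residual has at least six atoms, giving $N \geq 7 + k_p$. Assume for contradiction that $N = 10$; then $k_p \leq 3$ for every port, and each residual must be minimal, hence by \cref{app:xnor} it must be the unique hexagon-plus-triangle structure in which each of the three surviving ports has precisely two adjacent ancillas. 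Writing $m_{pq}$ for the number of ancillas common to two distinct ports, the residual about $p$ forces $k_q - m_{pq} = 2$ for every $q \neq p$; using $m_{pq} = m_{qp}$, this gives $k_p \equiv k$ and $m_{pq} = k - 2$ for every pair, so $N = 10$ forces $k = 3$ and $m_{pq} = 1$. The six ancillas then naturally label the six pairs of ports, each has degree $2$, and the XNOR triangles in the residuals force any two ancillas sharing a port to be in blockade. The ancilla subgraph is therefore the octahedron $K_{2,2,2}$, and the abstract blockade graph of any $10$-atom realization is pinned uniquely.

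\emph{Geometric obstruction.} It remains to show that this unique abstract graph is not a unit disk graph in $\mathbb{R}^2$. The geometric claim is that in any placement of four ports with pairwise distances greater than $\Rb$ but at most $2\Rb$ (required for all six lens ancillas to exist), at least one lens $\text{disk}(X) \cap \text{disk}(Y)$ is contained in $\text{disk}(Z) \cup \text{disk}(W)$ with $Z,W$ the remaining two ports, so the required ancilla $XY$ cannot be placed without blockade to $Z$ or $W$. I would establish this by a direct calculation: exploiting that every pairwise port distance lies in $(\Rb,2\Rb]$, one derives a quadratic inequality of the form $d(P,Z)^2 + d(P,W)^2 \leq 2\Rb^2$ for every $P$ in the lens of some pair, which forces $\min\{d(P,Z),d(P,W)\} \leq \Rb$.

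\emph{Conclusion.} The abstract $N = 10$ blockade graph cannot be realized geometrically, so $N \geq 11$; the construction in \cref{fig:toric}a achieves $N = 11$ and proves tightness. The main obstacle is the final geometric step, which must rule out \emph{all} placements of four ports (not just highly symmetric ones such as a square or an equilateral triangle with its centroid). I expect this to reduce to an explicit extremization over four-port configurations using only the lower bound on the pairwise distances, essentially showing that among any four points mutually within $2\Rb$ and pairwise further than $\Rb$, some pair is always ``sandwiched'' between the other two.
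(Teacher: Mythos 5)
Your opening reduction---fix a port active, delete it and its $k_p$ neighbors, and observe that the residual must be an \texttt{XNOR}-complex on at least six atoms, giving $N\geq 7+k_p$---is exactly the paper's first move, and invoking the uniqueness of the minimal \texttt{XNOR} blockade graph is a legitimate tool (the paper uses it in the Fibonacci proof). But the step ``each residual must be minimal'' does not follow from $k_p\leq 3$: the residual about $p$ has $9-k_p$ atoms, so it is the unique six-atom hexagon-plus-triangle only when $k_p=3$. Ports of degree $1$ or $2$ leave residuals of seven or eight atoms, about which \cref{app:xnor} says nothing, and these are precisely the cases where the paper expends most of its effort (amputation of degree-$1$ ports using the inversion symmetry of $L_\texttt{SCU}$, and a case analysis over $\sim 20$ bipartite port--ancilla graph classes at $N=10$, six at $N=9$). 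Your argument also never excludes $N=8$ and $N=9$: the inequality $N\geq 7+k_p$ only rules out $N\leq 7$ once connectivity forces $k_p\geq 1$. So the proposal proves minimality only within the subfamily of $10$-atom candidates in which every port has exactly three adjacent ancillas.

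Within that subfamily your identification of the candidate graph (six ancillas labeled by port pairs, ancillas sharing a port in blockade) is correct, but the geometric obstruction you defer to is both unproven and unnecessary---and it would weaken the result, since the paper deliberately avoids unit-disk arguments so that minimality survives in three dimensions. The pinned-down graph dies without geometry: the residual \texttt{XNOR} structure forces $\tilde\Delta_{ij}=\Delta_i+\Delta_j$ for the ancilla on pair $\{i,j\}$, so for the port configuration $(0000)$ the three complementary ancilla pairings $\{a_{12},a_{34}\}$, $\{a_{13},a_{24}\}$, $\{a_{14},a_{23}\}$ all have energy $-\sum_i\Delta_i$, exactly degenerate with $E_{1111}$. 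If complementary ancillas are not in blockade, the word $(0000)$ then carries three degenerate ancilla configurations, so the ancillas introduce extra ground-state degrees of freedom and the structure is not a complex in the sense of \cref{eq:lcomplex}; if they are in blockade, at most one ancilla can be excited and $(0000)$ is gapped out of the manifold. Either way the graph is excluded energetically, in the same spirit as the paper's treatment of the ``cross'' graph at $N=9$. I recommend replacing the geometric step with this degeneracy argument and then supplying the missing low-degree and $N=8,9$ cases along the lines of \cref{app:toric}.
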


\begin{proof}

Here we show that the vertex complex of the surface code on the square
lattice requires at least 11 atoms; to this end, we use and expand on the
tricks introduced in \cref{app:nandxor}.
First, note that the GSM is symmetric under the permutation of ports
(\cref{fig:toric}b) and includes the state $(1111)$, i.e., no two ports can
be in blockade with each other. In addition, the GSM is symmetric under the
simultaneous inversion of an even number of letters in all words (= columns):
\begin{align}
    L_\texttt{SCU}\equiv\begin{Bmatrix}1111\\1100\\0011\\1001\\0110\\0101\\1010\\0000\\\end{Bmatrix}
    \;\xrightarrow{\text{inv.\ 4.\ letter}}\;
    \overline{L}_\texttt{SCU}\equiv\begin{Bmatrix}1110\\1101\\0010\\1000\\0111\\0100\\1011\\0001\\\end{Bmatrix}
    \;\xrightarrow{\text{inv.\ 3.\ letter}}\;
    L_\texttt{SCU}=\begin{Bmatrix}1100\\1111\\0000\\1010\\0101\\0110\\1001\\0011\\\end{Bmatrix}
    \;\xrightarrow{\text{inv.\ 2.\ letter}}\;
    \overline{L}_\texttt{SCU}=\begin{Bmatrix}1000\\1011\\0100\\1110\\0001\\0010\\1101\\0111\\\end{Bmatrix}
    \;\cdots
    \label{eq:symmetry}
\end{align}

Let us now systematically exclude the existence of surface code complexes
with $N\leq 10$ atoms:
\begin{itemize}

\item $N<8$: If one fixes one port of a surface code complex as active, the
remaining three ports realize a \texttt{XNOR}-complex with at least two atoms
less than the surface code complex (because the active port deactivates at
least one ancilla permanently). Since we proved in \cref{app:andorxnor} that
\texttt{XNOR}-complexes require at least six atoms, this implies immediately
that the surface code complex cannot be realized with $N<8$ atoms.

\item $N=8$: If there are at least two ports that are connected to only one
ancilla each, we can consider these as amalgamated \texttt{NOT}-complexes and
amputate two of them (see the note in \cref{app:nandxor}), thereby creating a
complex with only six atoms that realizes the same GSM due to the inversion
symmetry detailed in \cref{eq:symmetry}. Since this is not possible, there
can be at most one port that connects to only one ancilla. Choose one of the
other ports that connect to at least two ancillas and again fix it in the
active state (here we use the permutation symmetry of $L_\texttt{SCU}$). This
produces a complex with at most five atoms (the fixed port plus at least two
ancillas are removed from the surface code complex) that realizes again the
\texttt{XNOR}-manifold, which is impossible. Hence the surface code complex
cannot be realized with eight atoms.

\item $N=9$: To show that the complex cannot be realized with nine atoms we
consider three cases:
\begin{enumerate}[(i)]
\item Assume there is at least one port connected to three or more ancillas. If
this port is fixed as active, it blocks at least three ancillas. The resulting
\texttt{XNOR}-complex on the three remaining ports has at most five atoms,
which is impossible.
\item Assume at least one port connects to a single ancilla. Amputating this
port yields a $\overline{L}_\texttt{SCU}$-complex with eight atoms. If an
arbitrary port of this complex is fixed as active, the resulting complex has
at most six atoms. Inspection of the language $\overline{L}_\texttt{SCU}$
[\cref{eq:symmetry}] shows that this complex realizes the truth table of a
\texttt{XOR}-gate, which, however, requires at least seven atoms (as shown
in \cref{app:nandxor}).

\begin{figure*}[t]
    \centering
    \begin{tikzpicture}
        \node[label] at (0,3.2) {\lbl{a}};
        \node[figure,anchor=north west] at (0,3.2) {\includegraphics[width=2.0cm]{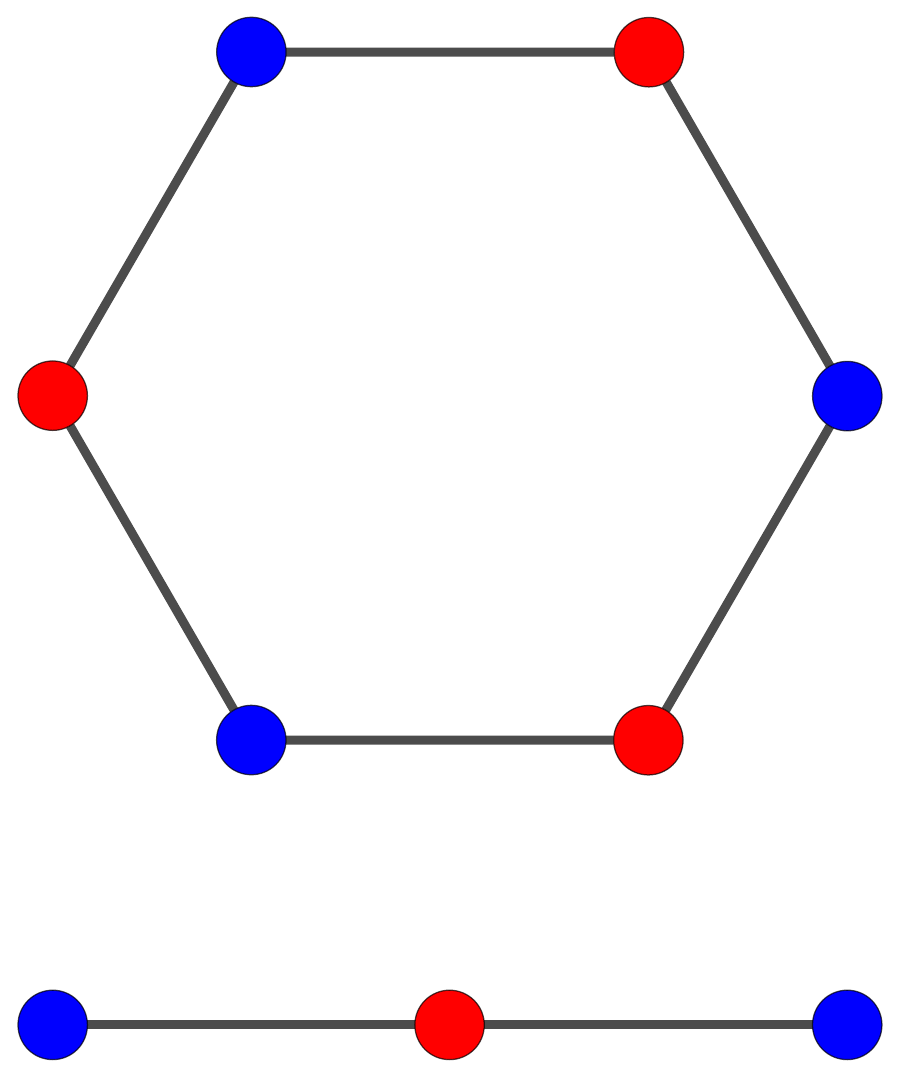}};
        \node[label] at (5,3.2) {\lbl{b}};
        \node[figure,anchor=north west] at (5,3.2) {\includegraphics[width=3.7cm]{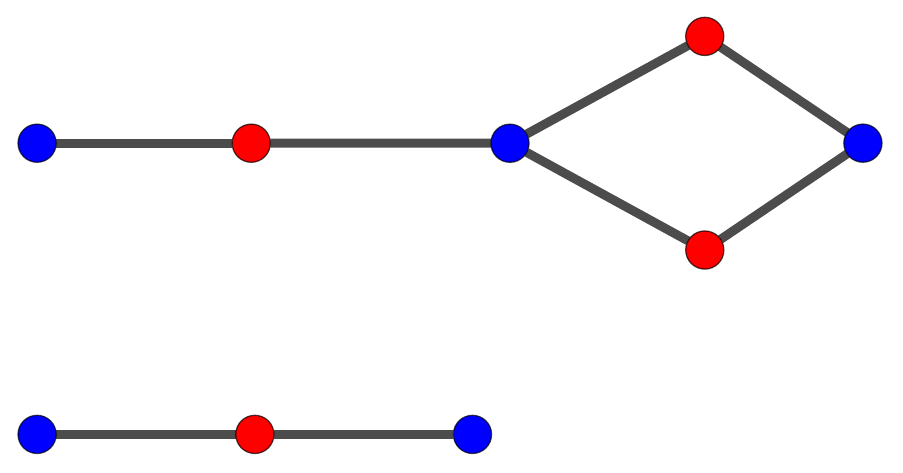}};
        \node[label] at (10,3.2) {\lbl{c}};
        \node[figure,anchor=north west] at (10.4,3.2) {\includegraphics[width=2.5cm]{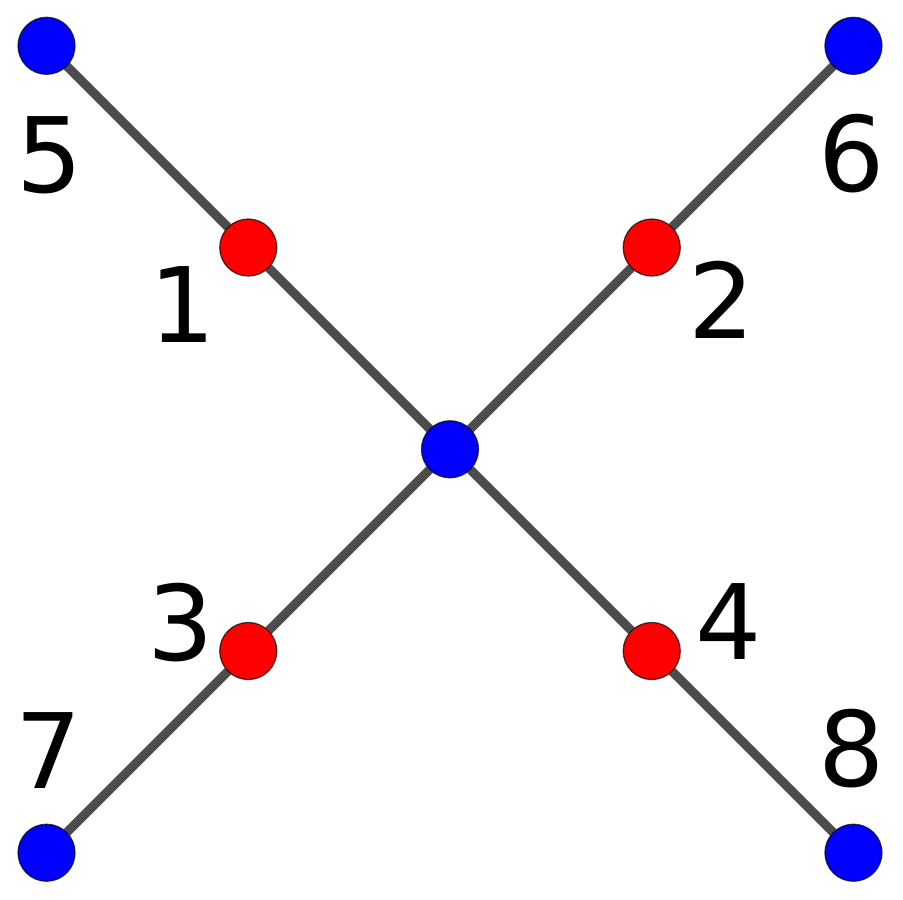}};
        \node[label] at (0,0) {\lbl{d}};
        \node[figure,anchor=north east,rotate=90] at (0,0.0) {\includegraphics[width=1.9cm]{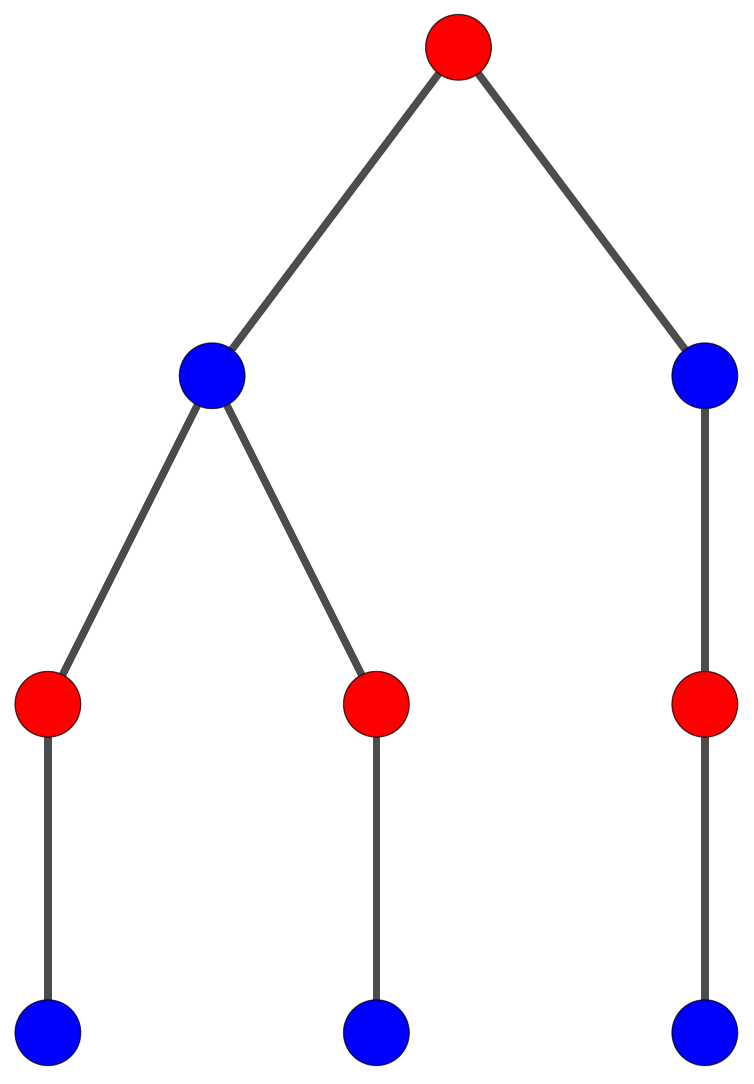}};
        \node[label] at (5,0) {\lbl{e}};
        \node[figure,anchor=north west] at (5,-0.8) {\includegraphics[width=3.7cm]{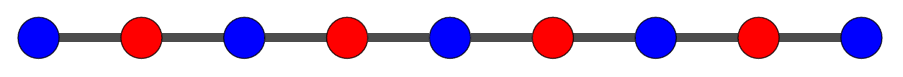}};
        \node[label] at (10,0) {\lbl{f}};
        \node[figure,anchor=north west] at (10.4,0.0) {\includegraphics[width=4.7cm]{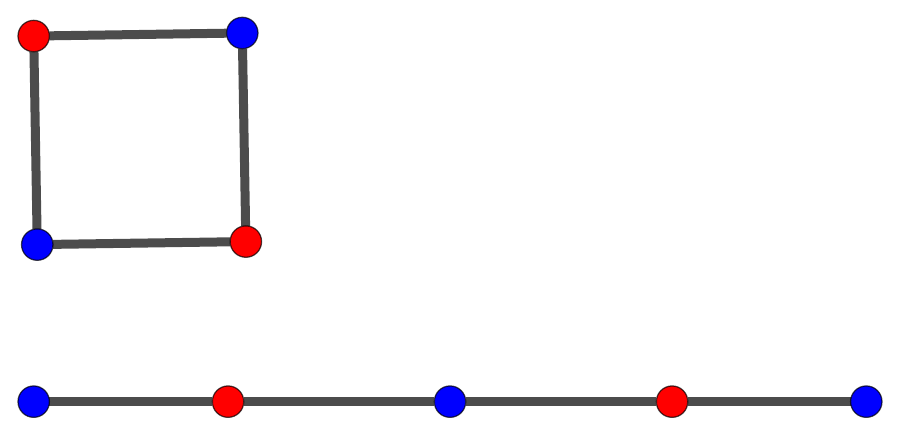}};
        %
    \end{tikzpicture}
    \caption{The remaining six classes of blockade graphs for the surface
    code vertex complex on nine atoms with ports of degree 2 and without
    disconnected ancillas. Ports (ancillas) are colored red (blue) and
    connections between ancillas are omitted. The only class that cannot
    be excluded kinematically is the ``cross'' graph (c) with atom labels
    $\{1,\dots,8\}$ and ports $\{1,2,3,4\}$, see text.}
    \label{fig:graphsTOR92}
\end{figure*}

\item Because of (i) and (ii), only the case that all ports connect to exactly
two ancillas remains. There are six non-isomorphic bipartite graphs that
connect sets of four (ports) and five vertices (ancillas), where all ports
have degree 2, \cref{fig:graphsTOR92}. We exclude graphs with disconnected
ancillas because these are typically covered by the analogous step for $N=8$.
(Above we omitted this step to simplify the prove, so in principle here one
has to check the graphs with disconnected ancillas too. The result is the
same, though.) By inspection, one shows that all these graphs (except for the
``cross'' in \cref{fig:graphsTOR92}c) allow for a pair of ports that, when
activated, block all ancillas of a third port (which then must be switched on
as well).  This, however, is inconsistent with the language $L_\texttt{SCU}$
which includes for all triples of ports states where two are on and one is off.

The ``cross'' graph in \cref{fig:graphsTOR92}c cannot be excluded with this
type of kinematic reasoning because the set of maximal independent sets (with
the convention of ports shown in \cref{fig:graphsTOR92}c) induces a superset
of $L_\texttt{SCU}$. Therefore we have to use energetic arguments instead. With
the atom indices shown in \cref{fig:graphsTOR92}c, the gap condition requires
\begin{subequations}
    \begin{align}
        E_{1111}&=-\Delta_1-\Delta_2-\Delta_3 - \Delta_4 
        \stackrel{!}{<}
        -\Delta_1-\Delta_2-\Delta_3-\Delta_8=E_{1110}
        \quad\Rightarrow\quad\Delta_8 < \Delta_4\,,
        \\
        E_{1100}&=-\Delta_1-\Delta_2-\Delta_7 - \Delta_8 
        \stackrel{!}{<}
        -\Delta_1-\Delta_2-\Delta_7-\Delta_4=E_{1101}
        \quad\Rightarrow\quad\Delta_8 > \Delta_4\,.
    \end{align}  
\end{subequations}
Hence this graph cannot realize the $L_\texttt{SCU}$-manifold.
\end{enumerate}

\item $N=10$: To show that the surface code complex cannot be realized with
$N=10$ atoms, one follows the same procedure as detailed above for the case
of $N=9$ atoms (here we only briefly summarize the necessary steps): First,
one excludes the case with ports that connect to a single ancilla (where one
has to use that a $N=9$ realization of a $\overline{L}_\texttt{SCU}$-complex
can have only ports that connect to at least two ancillas). Then, one excludes
the existence of ports that connect to at least four ancillas by using that
\texttt{XNOR}-complexes cannot be realized with five atoms or less. Finally,
one must exclude blockade graphs with ports of degree three or two by the
same procedure as in Step (iii) above. In this case, there are 20 graph
classes to cover of which 15 can be kinematically excluded and 5 can be
energetically ruled out. These arguments show that the surface code vertex
complex cannot be realized with $10$ atoms.
\end{itemize}
\end{proof}

\subsection{Vertex complex for the Fibonacci model ($\C_{f_\sub{Fib}}$)}
\label{app:fibonacci}

\begin{figure*}[t]
    \centering
    \begin{tikzpicture}
        \node[label] at (0.0,1.2) {\lbl{a}};
        \node[figure,anchor=west] at (0,0) {\includegraphics[width=4.0cm]{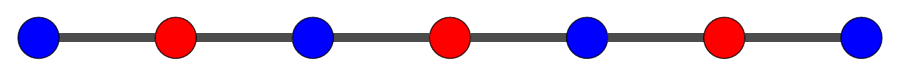}};
        \node[label] at (5.5,1.2) {\lbl{b}};
        \node[figure,anchor=west] at (6,0) {\includegraphics[width=2.3cm]{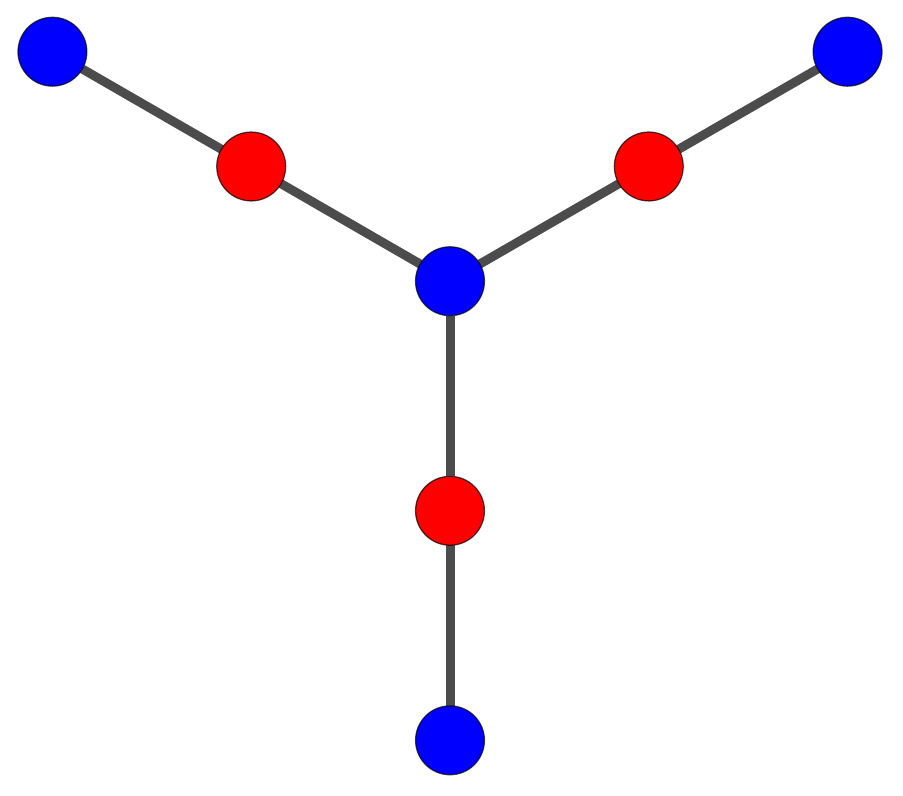}};
        \node[label] at (10.5,1.2) {\lbl{c}};
        \node[figure,anchor=west] at (11.0,0) {\includegraphics[width=2.6cm]{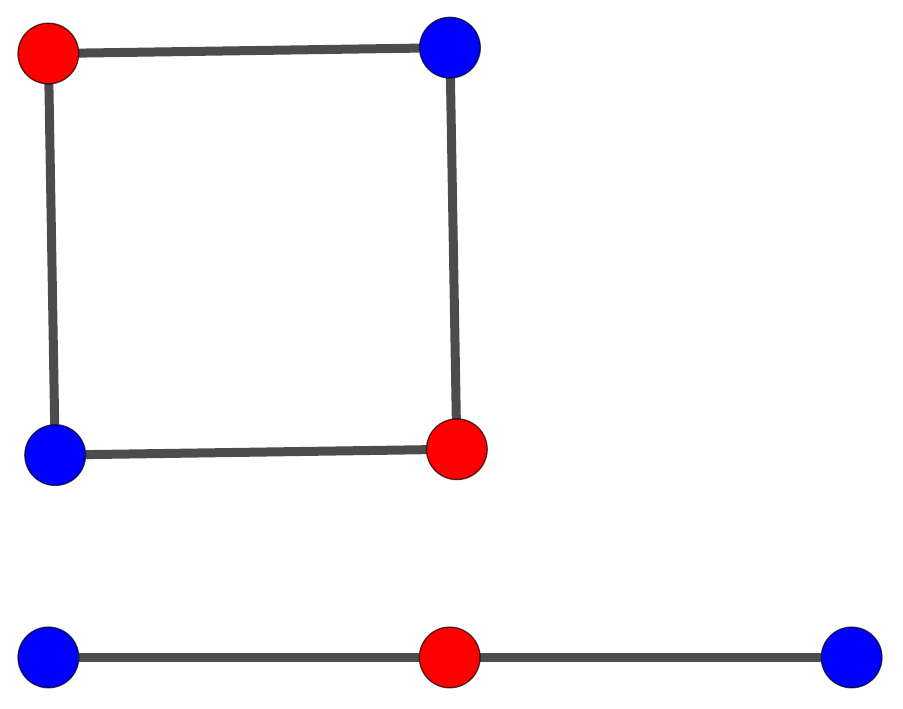}};
        %
    \end{tikzpicture}
    \caption{The remaining three classes of blockade graphs for the Fibonacci
    vertex complex on seven atoms with ports of degree 2. Ports (ancillas)
    are colored red (blue) and connections between ancillas are omitted. Only
    the graph in (b) must be energetically excluded.}
    \label{fig:graphsFIB72}
\end{figure*}

\begin{lemma}
    The vertex complex $\C_{f_\sub{Fib}}$ of the Fibonacci model on the
    Honeycomb lattice cannot be realized with less than 8 atoms.
\end{lemma}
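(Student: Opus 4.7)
The plan is to mirror the structure of Appendix~\ref{app:toric}, systematically excluding realizations with $N\leq 7$ atoms by combining sublanguage reductions, a port-degree case analysis, and energetic arguments. First I would list the structural properties of the language $L_{f_\sub{Fib}=1}=\{000,011,101,110,111\}$ that drive the argument: (i) because $(111)\in L_{f_\sub{Fib}=1}$ no two ports can be mutually blockaded; (ii) the language is irreducible, its size $5$ being prime and every letter position attaining both values; (iii) it is fully symmetric under permutations of the three ports; and crucially (iv) fixing one port yields only two possible sublanguages, namely $\{00,11\}$ (the \texttt{CPY}/\texttt{LNK}-language) when the fixed port is inactive, and the three-word language $\{01,10,11\}$ when it is active. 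Along the lines of Appendix~\ref{app:cpy} and Appendix~\ref{app:nor}, both of these sublanguages have their own minimum atom count, which I would establish as preliminary lemmas.

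With these facts in hand I would then use the subcomplex construction of Appendix~\ref{app:subcomplex} together with the \texttt{NOT}-amputation trick from Appendix~\ref{app:nandxor} to reduce the low-$N$ cases to already-excluded ones. For $N=3$ property~(i) forces the blockade graph to be edgeless, which contradicts~(ii). For $N=4,5,6$ one pins a suitable port and deletes its adjacent ancillas to produce a subcomplex too small to realize $\{00,11\}$ or $\{01,10,11\}$. A parallel port-degree bookkeeping --- disallowing ports of degree~$0$ by connectedness, amputating ports of degree~$1$, and pinning ports of degree~$\geq 3$ --- propagates the argument up to $N=7$, where after all these reductions only configurations in which every port has degree exactly~$2$ remain. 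That leaves the three non-isomorphic bipartite blockade graphs between ports and ancillas collected in Figure~\ref{fig:graphsFIB72}.

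The last step is the delicate one and I expect it to be the main obstacle. Two of the three remaining graphs should admit an easy kinematic contradiction: activating a suitable pair of ports blocks every ancilla adjacent to the third port, forcing that port active too, which clashes with the weight-two words $011,101,110\in L_{f_\sub{Fib}=1}$. The third graph (Figure~\ref{fig:graphsFIB72}b) passes every kinematic test and must be eliminated by an energetic analysis analogous to the ``cross graph'' argument in Appendix~\ref{app:toric}: write out the five degeneracy equations $E_{000}=E_{011}=E_{101}=E_{110}=E_{111}$ together with the gap inequalities $E_{\vec y}>E_\sub{GS}$ for each forbidden word $\vec y\notin L_{f_\sub{Fib}=1}$, and show that the resulting linear system on the seven detunings is inconsistent (typically by deriving both $\Delta_i<\Delta_j$ and $\Delta_i>\Delta_j$ from two complementary gap conditions). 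Once this case is closed, we obtain the lower bound $N\geq 8$, matching the explicit eight-atom complex drawn in the dashed box of Figure~\ref{fig:fibonacci}a.
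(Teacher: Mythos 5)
Your overall architecture matches the paper's: establish a lower bound for small $N$, run a port-degree case analysis at $N=7$, arrive at the three bipartite blockade graphs of \cref{fig:graphsFIB72}, kill two kinematically and the third energetically. However, two of your reduction steps have genuine gaps.

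First, the sublanguage $\{01,10,11\}$ obtained by pinning a port active is \emph{not} too small to realize with four atoms: the path $A\,\text{--}\,\tilde a_1\,\text{--}\,\tilde a_2\,\text{--}\,B$ with port detunings $\Delta_A,\Delta_B>0$ and ancilla detunings $\tilde\Delta_1=\Delta_A$, $\tilde\Delta_2=\Delta_B$ has exactly the three maximal independent sets $\{A,\tilde a_2\}$, $\{\tilde a_1,B\}$, $\{A,B\}$, which are degenerate and gapped away from all $(00)$ configurations. Hence pinning a degree-one port of a six-atom candidate leaves a four-atom subcomplex that \emph{can} realize $\{01,10,11\}$, so your $N=6$ exclusion fails in the degree-one branch (and the same issue recurs at $N=7$, where pinning a degree-one port leaves five atoms). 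The paper sidesteps this entirely: since $L_{f_\sub{Fib}=1}$ contains the three weight-two words $(011),(110),(101)$, the purely kinematic argument of \cref{app:nandxor} for \texttt{NAND}/\texttt{XOR} applies verbatim and already gives $N\geq 7$.

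Second, and more importantly, the degree-one case at $N=7$ is the crux of the proof, and your plan does not identify the tool that closes it. Amputating a degree-one port (the inverse of amalgamating a \texttt{NOT}-complex) leaves a six-atom complex realizing the five-word language $\overline{L}_\sub{Fib}=\{100,111,010,001,011\}\supset L_\odot$. Excluding this requires the uniqueness result of \cref{app:xnor}: the only six-atom blockade graph compatible with the \texttt{XNOR} states has exactly four maximal independent sets, so no strict superset of $L_\odot$ fits in six atoms or fewer. Without that lemma (or an equivalent fresh classification of six-atom complexes realizing a five-word language containing $(111)$), this branch stays open; neither pinning nor ``propagating the argument'' covers it. The remaining steps of your plan --- degree $\geq 3$ handled by pinning down to at most three atoms, the all-degree-two case reduced to \cref{fig:graphsFIB72} with two kinematic exclusions and one energetic one --- do agree with the paper.
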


\begin{proof}

The Fibonacci language $L_\sub{Fib}=\{000,011,110,101,111\}$ contains the three
states $(011)$, $(110)$ and $(101)$ which we used in \cref{app:nandxor}
to show (with purely kinematic arguments) that \texttt{XOR}- and
\texttt{NAND}-complexes cannot be realized with less than seven atoms. As
the argument only relied on these three states (and the existence of at least
one other state), it extends to the Fibonacci complex, which therefore also
requires at least seven atoms. Furthermore, the three states forbid blockades
between any two ports.
So assume a realization with seven atoms exists. We distinguish three cases:
\begin{enumerate}[(i)]

\item At least one port connects to a single ancilla. Amputation (see the
note in \cref{app:nandxor}) of this port yields a complex with six atoms
that realizes the manifold $\overline{L}_\sub{Fib}=\{100,111,010,001,011\}$
obtained from $L_\sub{Fib}$ by inverting the first letter (note that
$L_\sub{Fib}$ is symmetric under permutations of ports). This language
contains all \texttt{XNOR} states: $L_\odot\subset\overline{L}_\sub{Fib}$. In
\cref{app:xnor} we showed that there is only one blockade graph on 6
vertices that can realize these states; this graph has only four maximal
independent sets and therefore cannot realize the additional state $(011)$
in $\overline{L}_\sub{Fib}$.
\item At least one port connects to at least three ancillas. First, a port
that connects to all \emph{four} ancillas is inconsistent with two of the
three states $(011)$, $(110)$ and $(101)$. So assume there is a port that
connects to \emph{three} of the four ancillas. If this port is activated,
it deactivates three ancillas and the remaining two ports (together with one
ancilla) realize the irreducible language $L=\{01,10,11\}$ (the blockade
graph of these three atoms must therefore be connected). But there is no
graph on three vertices with (at least) three maximal independent sets of
which (at least) one has (at least) two vertices.

\item Because of (i) and (ii), only the case where all ports connect to two
ancillas remains. The possible classes of bipartite graphs are shown in
\cref{fig:graphsFIB72}. With a similar line of arguments as used for the
surface code [Case (iii) for $N=9$ in \cref{app:toric}], one can exclude
two of the three graphs (a and c) with kinematic arguments [using the states
$(011)$, $(110)$ and $(101)$]. The set of maximal independent sets for the
graph in \cref{fig:graphsFIB72}b includes $L_\sub{Fib}$ as a subset and can
again be excluded by energetic arguments.
\end{enumerate}
\end{proof}

\section{Numerical approach for geometric optimization}
\label{app:optimization}

To minimize the objective function $\Gamma$ on the high-dimensional
configuration space $\mathfrak{C}_N$, we used the \texttt{SciPy} method
\texttt{scipy.optimize.dual\_annealing}~\cite{Xiang1997,Xiang2000,Virtanen2020}
that implements generalized simulated
annealing~\cite{Tsallis1996,Andricioaei1996} in combination with a local
optimization based on the Nelder-Mead algorithm~\cite{Nelder1965,Gao2010}.
The stochastic algorithm starts from an initial geometry (which can be
chosen randomly), followed by iterations of jumps in $\mathfrak{C}_N$ with
probabilities that depend on the distance of the jump and the variation of
the objective function $\Gamma$; following each random jump, the Nelder-Mead
algorithm optimizes the new configuration locally. After $\lesssim 2000$
iterations we stop the algorithm and compute the robustness of the final
geometry. More technical details and all obtained optimal complexes can be
found in Ref.~\cite{Stastny2023}; the data of the optimized complexes can
also be accessed online~\cite{data}.

\end{document}